\newcommand{\bea}{\begin{eqnarray}}
\newcommand{\ena}{\end{eqnarray}}
\newcommand{\bean}{\begin{eqnarray*}}
\newcommand{\enan}{\end{eqnarray*}}
\newtheorem{theorem}{Theorem}
\newtheorem{lemma}{Lemma}
\newtheorem{proposition}{Proposition}
\newtheorem{remark}{Remark}
\newtheorem{definition}{Definition}
\newtheorem{corollary}{Corollary}
\let\pdfoutput=\undefined\fi
\chardef\@x10\chardef\@xv60
\def\tcitime{
\def\@time{%
  \@minute\time\@hour\@minute\divide\@hour\@xv
  \ifnum\@hour<\@x 0\fi\the\@hour:%
  \multiply\@hour\@xv\advance\@minute-\@hour
  \ifnum\@minute<\@x 0\fi\the\@minute
  }}%
\def\x@hyperref#1#2#3{%
   % Turn off various catcodes before reading parameter 4
   \catcode`\~ = 12
   \catcode`\$ = 12
   \catcode`\_ = 12
   \catcode`\# = 12
   \catcode`\& = 12
   \catcode`\% = 12
   \y@hyperref{#1}{#2}{#3}%
}
\def\y@hyperref#1#2#3#4{%
   #2\ref{#4}#3
   \catcode`\~ = 13
   \catcode`\$ = 3
   \catcode`\_ = 8
   \catcode`\# = 6
   \catcode`\& = 4
   \catcode`\% = 14
}
\def\QCTOpt[#1]#2{%
  \def\QCTOptB{#1}
  \def\QCTOptA{#2}
}
\def\QCTNOpt#1{%
  \def\QCTOptA{#1}
  \let\QCTOptB\empty
}
\def\Qct{%
  \@ifnextchar[{%
    \QCTOpt}{\QCTNOpt}
}
\def\QCBOpt[#1]#2{%
  \def\QCBOptB{#1}%
  \def\QCBOptA{#2}%
}
\def\QCBNOpt#1{%
  \def\QCBOptA{#1}%
  \let\QCBOptB\empty
}
\def\Qcb{%
  \@ifnextchar[{%
    \QCBOpt}{\QCBNOpt}%
}
\def\PrepCapArgs{%
  \ifx\QCBOptA\empty
    \ifx\QCTOptA\empty
      {}%
    \else
      \ifx\QCTOptB\empty
        {\QCTOptA}%
      \else
        [\QCTOptB]{\QCTOptA}%
      \fi
    \fi
  \else
    \ifx\QCBOptA\empty
      {}%
    \else
      \ifx\QCBOptB\empty
        {\QCBOptA}%
      \else
        [\QCBOptB]{\QCBOptA}%
      \fi
    \fi
  \fi
}
\def\GRAPHICSPS#1{%
 \ifcase\GRAPHICSTYPE%\GRAPHICSTYPE=0
   \special{ps: #1}%
 \or%\GRAPHICSTYPE=1
   \special{language "PS", include "#1"}%
%%%\or%\GRAPHICSTYPE=2
%%%  #1%
 \fi
}%
\def\graffile#1#2#3#4{%
%%% \ifnum\GRAPHICSTYPE=\tw@
%%%  %Following if using psfig
%%%  \@ifundefined{psfig}{\input psfig.tex}{}%
%%%  \psfig{file=#1, height=#3, width=#2}%
%%% \else
  %Following for all others
  % JCS - added BOXTHEFRAME, see below
    \bgroup
	   \@inlabelfalse
       \leavevmode
       \@ifundefined{bbl@deactivate}{\def~{\string~}}{\activesoff}%
        \raise -#4 \BOXTHEFRAME{%
           \hbox to #2{\raise #3\hbox to #2{\null #1\hfil}}}%
    \egroup
}%
\def\draftbox#1#2#3#4{%
 \leavevmode\raise -#4 \hbox{%
  \frame{\rlap{\protect\tiny #1}\hbox to #2%
   {\vrule height#3 width\z@ depth\z@\hfil}%
  }%
 }%
}%
\let\nographics=\@msidraft
\newif\ifwasdraft
\def\GRAPHIC#1#2#3#4#5{%
   \ifnum\@msidraft=\@ne\draftbox{#2}{#3}{#4}{#5}%
   \else\graffile{#1}{#3}{#4}{#5}%
   \fi
}
\def\addtoLaTeXparams#1{%
    \edef\LaTeXparams{\LaTeXparams #1}}%
\newif\ifBoxFrame \BoxFramefalse
\newif\ifOverFrame \OverFramefalse
\newif\ifUnderFrame \UnderFramefalse
\def\BOXTHEFRAME#1{%
   \hbox{%
      \ifBoxFrame
         \frame{#1}%
      \else
         {#1}%
      \fi
   }%
}
\def\doFRAMEparams#1{\BoxFramefalse\OverFramefalse\UnderFramefalse\readFRAMEparams#1\end}%
\def\readFRAMEparams#1{%
 \ifx#1\end%
  \let\next=\relax
  \else
  \ifx#1i\dispkind=\z@\fi
  \ifx#1d\dispkind=\@ne\fi
  \ifx#1f\dispkind=\tw@\fi
  \ifx#1t\addtoLaTeXparams{t}\fi
  \ifx#1b\addtoLaTeXparams{b}\fi
  \ifx#1p\addtoLaTeXparams{p}\fi
  \ifx#1h\addtoLaTeXparams{h}\fi
  \ifx#1X\BoxFrametrue\fi
  \ifx#1O\OverFrametrue\fi
  \ifx#1U\UnderFrametrue\fi
  \ifx#1w
    \ifnum\@msidraft=1\wasdrafttrue\else\wasdraftfalse\fi
    \@msidraft=\@ne
  \fi
  \let\next=\readFRAMEparams
  \fi
 \next
 }%
\def\IFRAME#1#2#3#4#5#6{%
      \bgroup
      \let\QCTOptA\empty
      \let\QCTOptB\empty
      \let\QCBOptA\empty
      \let\QCBOptB\empty
      #6%
      \parindent=0pt
      \leftskip=0pt
      \rightskip=0pt
      \setbox0=\hbox{\QCBOptA}%
      \@tempdima=#1\relax
      \ifOverFrame
          % Do this later
          \typeout{This is not implemented yet}%
          \show\HELP
      \else
         \ifdim\wd0>\@tempdima
            \advance\@tempdima by \@tempdima
            \ifdim\wd0 >\@tempdima
               \setbox1 =\vbox{%
                  \unskip\hbox to \@tempdima{\hfill\GRAPHIC{#5}{#4}{#1}{#2}{#3}\hfill}%
                  \unskip\hbox to \@tempdima{\parbox[b]{\@tempdima}{\QCBOptA}}%
               }%
               \wd1=\@tempdima
            \else
               \textwidth=\wd0
               \setbox1 =\vbox{%
                 \noindent\hbox to \wd0{\hfill\GRAPHIC{#5}{#4}{#1}{#2}{#3}\hfill}\\%
                 \noindent\hbox{\QCBOptA}%
               }%
               \wd1=\wd0
            \fi
         \else
            \ifdim\wd0>0pt
              \hsize=\@tempdima
              \setbox1=\vbox{%
                \unskip\GRAPHIC{#5}{#4}{#1}{#2}{0pt}%
                \break
                \unskip\hbox to \@tempdima{\hfill \QCBOptA\hfill}%
              }%
              \wd1=\@tempdima
           \else
              \hsize=\@tempdima
              \setbox1=\vbox{%
                \unskip\GRAPHIC{#5}{#4}{#1}{#2}{0pt}%
              }%
              \wd1=\@tempdima
           \fi
         \fi
         \@tempdimb=\ht1
         %\advance\@tempdimb by \dp1
         \advance\@tempdimb by -#2
         \advance\@tempdimb by #3
         \leavevmode
         \raise -\@tempdimb \hbox{\box1}%
      \fi
      \egroup%
}%
\def\DFRAME#1#2#3#4#5{%
  \vspace\topsep
  \hfil\break
  \bgroup
     \leftskip\@flushglue
	 \rightskip\@flushglue
	 \parindent\z@
	 \parfillskip\z@skip
     \let\QCTOptA\empty
     \let\QCTOptB\empty
     \let\QCBOptA\empty
     \let\QCBOptB\empty
	 \vbox\bgroup
        \ifOverFrame 
           #5\QCTOptA\par
        \fi
        \GRAPHIC{#4}{#3}{#1}{#2}{\z@}%
        \ifUnderFrame 
           \break#5\QCBOptA
        \fi
	 \egroup
  \egroup
  \vspace\topsep
  \break
}%
\def\FFRAME#1#2#3#4#5#6#7{%
 %If float.sty loaded and float option is 'h', change to 'H'  (gp) 1998/09/05
  \@ifundefined{floatstyle}
    {%floatstyle undefined (and float.sty not present), no change
     \begin{figure}[#1]%
    }
    {%floatstyle DEFINED
	 \ifx#1h%Only the h parameter, change to H
      \begin{figure}[H]%
	 \else
      \begin{figure}[#1]%
	 \fi
	}
  \let\QCTOptA\empty
  \let\QCTOptB\empty
  \let\QCBOptA\empty
  \let\QCBOptB\empty
  \ifOverFrame
    #4
    \ifx\QCTOptA\empty
    \else
      \ifx\QCTOptB\empty
        \caption{\QCTOptA}%
      \else
        \caption[\QCTOptB]{\QCTOptA}%
      \fi
    \fi
    \ifUnderFrame\else
      \label{#5}%
    \fi
  \else
    \UnderFrametrue%
  \fi
  \begin{center}\GRAPHIC{#7}{#6}{#2}{#3}{\z@}\end{center}%
  \ifUnderFrame
    #4
    \ifx\QCBOptA\empty
      \caption{}%
    \else
      \ifx\QCBOptB\empty
        \caption{\QCBOptA}%
      \else
        \caption[\QCBOptB]{\QCBOptA}%
      \fi
    \fi
    \label{#5}%
  \fi
  \end{figure}%
 }%
\def\makeactives{
  \catcode`\"=\active
  \catcode`\;=\active
  \catcode`\:=\active
  \catcode`\'=\active
  \catcode`\~=\active
}
   \gdef\activesoff{%
      \def"{\string"}%
      \def;{\string;}%
      \def:{\string:}%
      \def'{\string'}%
      \def~{\string~}%
      %\bbl@deactivate{"}%
      %\bbl@deactivate{;}%
      %\bbl@deactivate{:}%
      %\bbl@deactivate{'}%
    }
\def\FRAME#1#2#3#4#5#6#7#8{%
 \bgroup
 \ifnum\@msidraft=\@ne
   \wasdrafttrue
 \else
   \wasdraftfalse%
 \fi
 \def\LaTeXparams{}%
 \dispkind=\z@
 \def\LaTeXparams{}%
 \doFRAMEparams{#1}%
 \ifnum\dispkind=\z@\IFRAME{#2}{#3}{#4}{#7}{#8}{#5}\else
  \ifnum\dispkind=\@ne\DFRAME{#2}{#3}{#7}{#8}{#5}\else
   \ifnum\dispkind=\tw@
    \edef\@tempa{\noexpand\FFRAME{\LaTeXparams}}%
    \@tempa{#2}{#3}{#5}{#6}{#7}{#8}%
    \fi
   \fi
  \fi
  \ifwasdraft\@msidraft=1\else\@msidraft=0\fi{}%
  \egroup
 }%
\def\TEXUX#1{"texux"}
\long\def\QQQ#1#2{%
     \long\expandafter\def\csname#1\endcsname{#2}}%
\long\def\QQA#1#2{}%
\def\QTR#1#2{{\csname#1\endcsname {#2}}}%
\def\EXPAND#1[#2]#3{}%
\def\NOEXPAND#1[#2]#3{}%
\def\LaTeXparent#1{}%
\def\ChildStyles#1{}%
\def\ChildDefaults#1{}%
\def\QTagDef#1#2#3{}%
  \providecommand{\UNICODE}[2][]{\protect\rule{.1in}{.1in}}
  \providecommand{\U}[1]{\protect\rule{.1in}{.1in}}
\def\QQfnmark#1{\footnotemark}
 \def\abstract{%
  \if@twocolumn
   \section*{Abstract (Not appropriate in this style!)}%
   \else \small 
   \begin{center}{\bf Abstract\vspace{-.5em}\vspace{\z@}}\end{center}%
   \quotation 
   \fi
  }%
   \def\registered{\relax\ifmmode{}\r@gistered
                    \else$\m@th\r@gistered$\fi}%
 \def\r@gistered{^{\ooalign
  {\hfil\raise.07ex\hbox{$\scriptstyle\rm\text{R}$}\hfil\crcr
  \mathhexbox20D}}}}{}%
\newdimen\theight
\def\newfmtname{LaTeX2e}
  \DeclareOldFontCommand{\rm}{\normalfont\rmfamily}{\mathrm}
  \DeclareOldFontCommand{\sf}{\normalfont\sffamily}{\mathsf}
  \DeclareOldFontCommand{\tt}{\normalfont\ttfamily}{\mathtt}
  \DeclareOldFontCommand{\bf}{\normalfont\bfseries}{\mathbf}
  \DeclareOldFontCommand{\it}{\normalfont\itshape}{\mathit}
  \DeclareOldFontCommand{\sl}{\normalfont\slshape}{\@nomath\sl}
  \DeclareOldFontCommand{\sc}{\normalfont\scshape}{\@nomath\sc}
\def\alpha{{\Greekmath 010B}}%
\def\beta{{\Greekmath 010C}}%
\def\gamma{{\Greekmath 010D}}%
\def\delta{{\Greekmath 010E}}%
\def\epsilon{{\Greekmath 010F}}%
\def\zeta{{\Greekmath 0110}}%
\def\eta{{\Greekmath 0111}}%
\def\theta{{\Greekmath 0112}}%
\def\iota{{\Greekmath 0113}}%
\def\kappa{{\Greekmath 0114}}%
\def\lambda{{\Greekmath 0115}}%
\def\mu{{\Greekmath 0116}}%
\def\nu{{\Greekmath 0117}}%
\def\xi{{\Greekmath 0118}}%
\def\pi{{\Greekmath 0119}}%
\def\rho{{\Greekmath 011A}}%
\def\sigma{{\Greekmath 011B}}%
\def\tau{{\Greekmath 011C}}%
\def\upsilon{{\Greekmath 011D}}%
\def\phi{{\Greekmath 011E}}%
\def\chi{{\Greekmath 011F}}%
\def\psi{{\Greekmath 0120}}%
\def\omega{{\Greekmath 0121}}%
\def\varepsilon{{\Greekmath 0122}}%
\def\vartheta{{\Greekmath 0123}}%
\def\varpi{{\Greekmath 0124}}%
\def\varrho{{\Greekmath 0125}}%
\def\varsigma{{\Greekmath 0126}}%
\def\varphi{{\Greekmath 0127}}%
\def\nabla{{\Greekmath 0272}}
\def\FindBoldGroup{%
   {\setbox0=\hbox{$\mathbf{x\global\edef\theboldgroup{\the\mathgroup}}$}}%
}
\def\Greekmath#1#2#3#4{%
    \if@compatibility
        \ifnum\mathgroup=\symbold
           \mathchoice{\mbox{\boldmath$\displaystyle\mathchar"#1#2#3#4$}}%
                      {\mbox{\boldmath$\textstyle\mathchar"#1#2#3#4$}}%
                      {\mbox{\boldmath$\scriptstyle\mathchar"#1#2#3#4$}}%
                      {\mbox{\boldmath$\scriptscriptstyle\mathchar"#1#2#3#4$}}%
        \else
           \mathchar"#1#2#3#4% 
        \fi 
    \else 
        \FindBoldGroup
        \ifnum\mathgroup=\theboldgroup % For 2e
           \mathchoice{\mbox{\boldmath$\displaystyle\mathchar"#1#2#3#4$}}%
                      {\mbox{\boldmath$\textstyle\mathchar"#1#2#3#4$}}%
                      {\mbox{\boldmath$\scriptstyle\mathchar"#1#2#3#4$}}%
                      {\mbox{\boldmath$\scriptscriptstyle\mathchar"#1#2#3#4$}}%
        \else
           \mathchar"#1#2#3#4% 
        \fi     	    
	  \fi}
\newif\ifGreekBold  \GreekBoldfalse
\let\SAVEPBF=\pbf
\def\pbf{\GreekBoldtrue\SAVEPBF}%
  \newcounter{equationnumber}  
  \def\mathletters{%
     \addtocounter{equation}{1}
     \edef\@currentlabel{\theequation}%
     \setcounter{equationnumber}{\c@equation}
     \setcounter{equation}{0}%
     \edef\theequation{\@currentlabel\noexpand\alph{equation}}%
  }
    \def\BibTeX{{\rm B\kern-.05em{\sc i\kern-.025em b}\kern-.08em
                 T\kern-.1667em\lower.7ex\hbox{E}\kern-.125emX}}}{}%
\def\AmS{{\protect\usefont{OMS}{cmsy}{m}{n}%
                A\kern-.1667em\lower.5ex\hbox{M}\kern-.125emS}}}{}%
\def\@@eqncr{\let\@tempa\relax
    \ifcase\@eqcnt \def\@tempa{& & &}\or \def\@tempa{& &}%
      \else \def\@tempa{&}\fi
     \@tempa
     \if@eqnsw
        \iftag@
           \@taggnum
        \else
           \@eqnnum\stepcounter{equation}%
        \fi
     \fi
     \global\tag@false
     \global\@eqnswtrue
     \global\@eqcnt\z@\cr}
\def\TCItag{\@ifnextchar*{\@TCItagstar}{\@TCItag}}
\def\@TCItag#1{%
    \global\tag@true
    \global\def\@taggnum{(#1)}%
    \global\def\@currentlabel{#1}}
\def\@TCItagstar*#1{%
    \global\tag@true
    \global\def\@taggnum{#1}%
    \global\def\@currentlabel{#1}}
\def\tint{\msi@int\textstyle\int}%
\def\tiint{\msi@int\textstyle\iint}%
\def\tiiint{\msi@int\textstyle\iiint}%
\def\tiiiint{\msi@int\textstyle\iiiint}%
\def\tidotsint{\msi@int\textstyle\idotsint}%
\def\toint{\msi@int\textstyle\oint}%
\newtoks\temptoksa
\newtoks\temptoksb
\newtoks\temptoksc
\def\msi@int#1#2{%
 \def\@temp{{#1#2\the\temptoksc_{\the\temptoksa}^{\the\temptoksb}}}%   
 \futurelet\@nextcs
 \@int
}
\def\@int{%
   \ifx\@nextcs\limits
      \typeout{Found limits}%
      \temptoksc={\limits}%
	  \let\@next\@intgobble%
   \else\ifx\@nextcs\nolimits
      \typeout{Found nolimits}%
      \temptoksc={\nolimits}%
	  \let\@next\@intgobble%
   \else
      \typeout{Did not find limits or no limits}%
      \temptoksc={}%
      \let\@next\msi@limits%
   \fi\fi
   \@next   
}%
\def\@intgobble#1{%
   \typeout{arg is #1}%
   \msi@limits
}
\def\msi@limits{%
   \temptoksa={}%
   \temptoksb={}%
   \@ifnextchar_{\@limitsa}{\@limitsb}%
}
\def\@limitsa_#1{%
   \temptoksa={#1}%
   \@ifnextchar^{\@limitsc}{\@temp}%
}
\def\@limitsb{%
   \@ifnextchar^{\@limitsc}{\@temp}%
}
\def\@limitsc^#1{%
   \temptoksb={#1}%
   \@ifnextchar_{\@limitsd}{\@temp}%   
}
\def\@limitsd_#1{%
   \temptoksa={#1}%
   \@temp
}
\def\dint{\msi@int\displaystyle\int}%
\def\diint{\msi@int\displaystyle\iint}%
\def\diiint{\msi@int\displaystyle\iiint}%
\def\diiiint{\msi@int\displaystyle\iiiint}%
\def\didotsint{\msi@int\displaystyle\idotsint}%
\def\doint{\msi@int\displaystyle\oint}%
\def\ExitTCILatex{\makeatother }
\if@compatibility\message{amsmath already loaded}\fi\aftergroup\ExitTCILatex}
\if@compatibility\message{amstex already loaded}\fi\aftergroup\ExitTCILatex}
\if@compatibility\message{amsgen already loaded}\fi\aftergroup\ExitTCILatex}
\let\DOTSI\relax
\def\RIfM@{\relax\ifmmode}%
\def\FN@{\futurelet\next}%
\def\iint{\DOTSI\intno@\tw@\FN@\ints@}%
\def\iiint{\DOTSI\intno@\thr@@\FN@\ints@}%
\def\iiiint{\DOTSI\intno@4 \FN@\ints@}%
\def\idotsint{\DOTSI\intno@\z@\FN@\ints@}%
\def\ints@{\findlimits@\ints@@}%
\newif\iflimtoken@
\newif\iflimits@
\def\findlimits@{\limtoken@true\ifx\next\limits\limits@true
 \else\ifx\next\nolimits\limits@false\else
 \limtoken@false\ifx\ilimits@\nolimits\limits@false\else
 \ifinner\limits@false\else\limits@true\fi\fi\fi\fi}%
\def\multint@{\int\ifnum\intno@=\z@\intdots@                          %1
 \else\intkern@\fi                                                    %2
 \ifnum\intno@>\tw@\int\intkern@\fi                                   %3
 \ifnum\intno@>\thr@@\int\intkern@\fi                                 %4
 \int}%                                                               %5
\def\multintlimits@{\intop\ifnum\intno@=\z@\intdots@\else\intkern@\fi
 \ifnum\intno@>\tw@\intop\intkern@\fi
 \ifnum\intno@>\thr@@\intop\intkern@\fi\intop}%
\def\intic@{%
    \mathchoice{\hskip.5em}{\hskip.4em}{\hskip.4em}{\hskip.4em}}%
\def\negintic@{\mathchoice
 {\hskip-.5em}{\hskip-.4em}{\hskip-.4em}{\hskip-.4em}}%
\def\ints@@{\iflimtoken@                                              %1
 \def\ints@@@{\iflimits@\negintic@
   \mathop{\intic@\multintlimits@}\limits                             %2
  \else\multint@\nolimits\fi                                          %3
  \eat@}%                                                             %4
 \else                                                                %5
 \def\ints@@@{\iflimits@\negintic@
  \mathop{\intic@\multintlimits@}\limits\else
  \multint@\nolimits\fi}\fi\ints@@@}%
\def\intkern@{\mathchoice{\!\!\!}{\!\!}{\!\!}{\!\!}}%
\def\plaincdots@{\mathinner{\cdotp\cdotp\cdotp}}%
\def\intdots@{\mathchoice{\plaincdots@}%
 {{\cdotp}\mkern1.5mu{\cdotp}\mkern1.5mu{\cdotp}}%
 {{\cdotp}\mkern1mu{\cdotp}\mkern1mu{\cdotp}}%
 {{\cdotp}\mkern1mu{\cdotp}\mkern1mu{\cdotp}}}%
\def\RIfM@{\relax\protect\ifmmode}
\def\text{\RIfM@\expandafter\text@\else\expandafter\mbox\fi}
\let\nfss@text\text
\def\text@#1{\mathchoice
   {\textdef@\displaystyle\f@size{#1}}%
   {\textdef@\textstyle\tf@size{\firstchoice@false #1}}%
   {\textdef@\textstyle\sf@size{\firstchoice@false #1}}%
   {\textdef@\textstyle \ssf@size{\firstchoice@false #1}}%
   \glb@settings}
\def\textdef@#1#2#3{\hbox{{%
                    \everymath{#1}%
                    \let\f@size#2\selectfont
                    #3}}}
\newif\iffirstchoice@
\def\Let@{\relax\iffalse{\fi\let\\=\cr\iffalse}\fi}%
\def\vspace@{\def\vspace##1{\crcr\noalign{\vskip##1\relax}}}%
\def\multilimits@{\bgroup\vspace@\Let@
 \baselineskip\fontdimen10 \scriptfont\tw@
 \advance\baselineskip\fontdimen12 \scriptfont\tw@
 \lineskip\thr@@\fontdimen8 \scriptfont\thr@@
 \lineskiplimit\lineskip
 \vbox\bgroup\ialign\bgroup\hfil$\m@th\scriptstyle{##}$\hfil\crcr}%
\def\Sb{_\multilimits@}%
\def\endSb{\crcr\egroup\egroup\egroup}%
\def\Sp{^\multilimits@}%
\newdimen\ex@
\def\rightarrowfill@#1{$#1\m@th\mathord-\mkern-6mu\cleaders
 \hbox{$#1\mkern-2mu\mathord-\mkern-2mu$}\hfill
 \mkern-6mu\mathord\rightarrow$}%
\def\leftarrowfill@#1{$#1\m@th\mathord\leftarrow\mkern-6mu\cleaders
 \hbox{$#1\mkern-2mu\mathord-\mkern-2mu$}\hfill\mkern-6mu\mathord-$}%
\def\leftrightarrowfill@#1{$#1\m@th\mathord\leftarrow
\mkern-6mu\cleaders
 \hbox{$#1\mkern-2mu\mathord-\mkern-2mu$}\hfill
 \mkern-6mu\mathord\rightarrow$}%
\def\overrightarrow{\mathpalette\overrightarrow@}%
\def\overrightarrow@#1#2{\vbox{\ialign{##\crcr\rightarrowfill@#1\crcr
 \noalign{\kern-\ex@\nointerlineskip}$\m@th\hfil#1#2\hfil$\crcr}}}%
\def\overleftarrow{\mathpalette\overleftarrow@}%
\def\overleftarrow@#1#2{\vbox{\ialign{##\crcr\leftarrowfill@#1\crcr
 \noalign{\kern-\ex@\nointerlineskip}$\m@th\hfil#1#2\hfil$\crcr}}}%
\def\overleftrightarrow{\mathpalette\overleftrightarrow@}%
\def\overleftrightarrow@#1#2{\vbox{\ialign{##\crcr
   \leftrightarrowfill@#1\crcr
 \noalign{\kern-\ex@\nointerlineskip}$\m@th\hfil#1#2\hfil$\crcr}}}%
\def\underrightarrow{\mathpalette\underrightarrow@}%
\def\underrightarrow@#1#2{\vtop{\ialign{##\crcr$\m@th\hfil#1#2\hfil
  $\crcr\noalign{\nointerlineskip}\rightarrowfill@#1\crcr}}}%
\def\underleftarrow{\mathpalette\underleftarrow@}%
\def\underleftarrow@#1#2{\vtop{\ialign{##\crcr$\m@th\hfil#1#2\hfil
  $\crcr\noalign{\nointerlineskip}\leftarrowfill@#1\crcr}}}%
\def\underleftrightarrow{\mathpalette\underleftrightarrow@}%
\def\underleftrightarrow@#1#2{\vtop{\ialign{##\crcr$\m@th
  \hfil#1#2\hfil$\crcr
 \noalign{\nointerlineskip}\leftrightarrowfill@#1\crcr}}}%
\def\qopnamewl@#1{\mathop{\operator@font#1}\nlimits@}
\let\nlimits@\displaylimits
\def\setboxz@h{\setbox\z@\hbox}
\def\varlim@#1#2{\mathop{\vtop{\ialign{##\crcr
 \hfil$#1\m@th\operator@font lim$\hfil\crcr
 \noalign{\nointerlineskip}#2#1\crcr
 \noalign{\nointerlineskip\kern-\ex@}\crcr}}}}
 \def\rightarrowfill@#1{\m@th\setboxz@h{$#1-$}\ht\z@\z@
  $#1\copy\z@\mkern-6mu\cleaders
  \hbox{$#1\mkern-2mu\box\z@\mkern-2mu$}\hfill
  \mkern-6mu\mathord\rightarrow$}
\def\leftarrowfill@#1{\m@th\setboxz@h{$#1-$}\ht\z@\z@
  $#1\mathord\leftarrow\mkern-6mu\cleaders
  \hbox{$#1\mkern-2mu\copy\z@\mkern-2mu$}\hfill
  \mkern-6mu\box\z@$}
\def\projlim{\qopnamewl@{proj\,lim}}
\def\injlim{\qopnamewl@{inj\,lim}}
\def\varinjlim{\mathpalette\varlim@\rightarrowfill@}
\def\varprojlim{\mathpalette\varlim@\leftarrowfill@}
\def\varliminf{\mathpalette\varliminf@{}}
\def\varliminf@#1{\mathop{\underline{\vrule\@depth.2\ex@\@width\z@
   \hbox{$#1\m@th\operator@font lim$}}}}
\def\varlimsup{\mathpalette\varlimsup@{}}
\def\varlimsup@#1{\mathop{\overline
  {\hbox{$#1\m@th\operator@font lim$}}}}
\def\align{\@verbatim \frenchspacing\@vobeyspaces \@alignverbatim
You are using the "align" environment in a style in which it is not defined.}
\let\csname endalign*\endcsname =\endtrivlist
\def\alignat{\@verbatim \frenchspacing\@vobeyspaces \@alignatverbatim
You are using the "alignat" environment in a style in which it is not defined.}
\let\csname endalignat*\endcsname =\endtrivlist
\def\xalignat{\@verbatim \frenchspacing\@vobeyspaces \@xalignatverbatim
You are using the "xalignat" environment in a style in which it is not defined.}
\let\csname endxalignat*\endcsname =\endtrivlist
\def\gather{\@verbatim \frenchspacing\@vobeyspaces \@gatherverbatim
You are using the "gather" environment in a style in which it is not defined.}
\let\csname endgather*\endcsname =\endtrivlist
\def\multiline{\@verbatim \frenchspacing\@vobeyspaces \@multilineverbatim
You are using the "multiline" environment in a style in which it is not defined.}
\let\csname endmultiline*\endcsname =\endtrivlist
\def\arrax{\@verbatim \frenchspacing\@vobeyspaces \@arraxverbatim
You are using a type of "array" construct that is only allowed in AmS-LaTeX.}
\def\tabulax{\@verbatim \frenchspacing\@vobeyspaces \@tabulaxverbatim
You are using a type of "tabular" construct that is only allowed in AmS-LaTeX.}
\let\csname endarrax*\endcsname =\endtrivlist
\let\csname endtabulax*\endcsname =\endtrivlist
 \def\endequation{%
     \ifmmode\ifinner % FLEQN hack
      \iftag@
        \addtocounter{equation}{-1} % undo the increment made in the begin part
        $\hfil
           \displaywidth\linewidth\@taggnum\egroup \endtrivlist
        \global\tag@false
        \global\@ignoretrue   
      \else
        $\hfil
           \displaywidth\linewidth\@eqnnum\egroup \endtrivlist
        \global\tag@false
        \global\@ignoretrue 
      \fi
     \else   
      \iftag@
        \addtocounter{equation}{-1} % undo the increment made in the begin part
        \eqno \hbox{\@taggnum}
        \global\tag@false%
        $$\global\@ignoretrue
      \else
        \eqno \hbox{\@eqnnum}% $$ BRACE MATCHING HACK
        $$\global\@ignoretrue
      \fi
     \fi\fi
 } 
 \newif\iftag@ \tag@false
 \def\TCItag{\@ifnextchar*{\@TCItagstar}{\@TCItag}}
 \def\@TCItag#1{%
     \global\tag@true
     \global\def\@taggnum{(#1)}%
     \global\def\@currentlabel{#1}}
 \def\@TCItagstar*#1{%
     \global\tag@true
     \global\def\@taggnum{#1}%
     \global\def\@currentlabel{#1}}
     \def\tag{\@ifnextchar*{\@tagstar}{\@tag}}
     \def\@tag#1{%
         \global\tag@true
         \global\def\@taggnum{(#1)}}
     \def\@tagstar*#1{%
         \global\tag@true
         \global\def\@taggnum{#1}}
\def\tbinom#1#2{{\textstyle {#1 \choose #2}}}%
\begin{document}

% etc
%
%\begin{document}

%\preprint{APS/123-QED}

\title{On necessary and sufficient conditions for strong hyperbolicity in systems with constraints}% Force line breaks with \\
%\thanks{A footnote to the article title}%

\author{Fernando Abalos${}^{1}$}\email{abalos@famaf.unc.edu.ar}

\author{Oscar Reula${}^{1}$}\email{oreula@unc.edu.ar}

\affiliation{${}^{1}$ Facultad de Matem\'atica, Astronom\'\i{}a y F\'\i{}sica, Universidad Nacional de C\'ordoba and IFEG-CONICET, Ciudad Universitaria, X5016LAE C\'ordoba, Argentina }

%%%\title{On necessary and sufficient conditions for strong hyperbolicity.}
%%%%\subtitle{Do you have a subtitle?\\ If so, write it here}
%%%\author{Fernando Abalos \inst{1} \and Oscar Reula \inst{1}% etc
%%%%\thanks is optional - remove next line if not needed
%%%%\thanks{\emph{Present address:} Insert the address here if needed}%
%%%}                     % Do not remove
%%%%
%%%\institute{\inst{1} Facultad de Matem\'atica, Astronom\'\i{}a y F\'\i{}sica, Universidad Nacional de C\'ordoba and IFEG-CONICET, Ciudad Universitaria, X5016LAE C\'ordoba, Argentina. \\  E-mails: abalos@famaf.unc.edu.ar; oreula@unc.edu.ar
%%%}
%%%%
%%%\date{Received: date / Accepted: date}
%%%% The correct dates will be entered by Springer
%%%%
%%%% Add name of the expert who has communicated your paper
%%%\communicated{name}
%
%\maketitle
%
\begin{abstract}
In this work we study constant-coefficient first order systems of partial differential equations and give necessary and sufficient conditions for those systems to have a well posed Cauchy Problem. In many physical applications, due to the presence of constraints, the number of equations in the PDE system is larger than the number of unknowns, thus the standard Kreiss conditions can not be directly applied to check whether the system admits a well posed initial value formulation. In this work we find necessary and sufficient conditions such that there exists a reduced set of equations, of the same dimensionality as the set of unknowns, which satisfy Kreiss conditions and so are well defined and properly behaved evolution equations. We do that by studying the systems using the Kronecker decomposition of matrix pencils and, once the conditions are meet, finding specific families of reductions which render the system strongly hyperbolic.
We show the power of the theory in some examples: Klein Gordon, the ADM, and the BSSN equations by writing them as first order systems, and studying their Kronecker decomposition and general reductions.

\end{abstract}

\maketitle

\section{\label{sec:level1:1}Introduction}

% Citas Chicago \citep{PhysRevD.93.104026} 
% Matrix theorem \cite{kreiss1962stabilitatsdefinition} \cite{gustafsson1995time} \cite{kreiss2004initial}
% Luis numerico \cite{0264-9381-18-17-202}
%
In \cite{geroch1996partial} Geroch introduces a general setting for dealing
with first order systems of partial differential equations. The novelty of
his approach was that by keeping the description covariant, that is without
choosing an evolution time nor a time-space splitting, several features of
the underlying structure of these systems became apparent: First, there is a
notion of constraint equations which is well defined and does not depend on
the introduction of any preferred hyper-surface, and second there is in
general non-natural notion of an "evolution system". Constraints are certain
linear combinations of the equations in the system that satisfy some
property, while evolution equations are another linear combinations which we
shall call \textbf{reductions}; when these reductions give rise to a well
posed set of evolution equations we call them \textbf{hyperbolizers}. Well
posedness, the assertion that solutions depend continuously on their initial
data, is a necessary condition on any physical theory to have
predictability. Well posedness in particular becomes crucial when trying to
find numerical solutions, see for instance \cite{0264-9381-18-17-202}. In
this work it will be necessary to enlarge the class of allowed reductions,
they would not just be multiplicative linear combinations, but we shall also
allow pseudo-differential ones (keeping their degree to zero). It is in this
extended class that we can find necessary and sufficient conditions for the
existence of hyperbolizers. This extension arises naturally, and an
extensive literature about the theory of pseudo-differential operators can
be consulted \cite{taylor1991pseudodifferential, taylor1996pseudodifferential, nagy2004strongly, schulze1991pseudo, taylor2013partial, calderon1962existence, hormander1966pseudo, hadamard2014lectures, petersen1983introduction, nirenberg1973lectures, friedrichs1970pseudo, treves1980introduction, lax1963l_2, kohn1973pseudo, hormander1965pseudo}, etc.

The problems in the cases with no constraints present (a well defined
statement), and where the system is consistent, that is, the number of
equations coincides with the number of unknowns, have been fully understood in the
celebrated Kreiss Matrix Theorem \cite{kreiss1962stabilitatsdefinition, gustafsson1995time, kreiss2004initial}. 
This theorem does so by stating several equivalent conditions for the system to admit a
hyperbolizer, which in the general cases is some pseudo-differential
operator. Once one of these hyperbolizers is found, it is used for the
construction of energy estimates, which in turn are used for establishing
well posedness (\cite{sarbach2012continuum, taylor1996pseudodifferential, metivier20142}). In \cite%
{abalos2017necessary}. Recently a new necessary condition was found for these
particular class of systems, namely systems without constraints. It involves
the use of the Singular Value Decomposition (SVD) of the principal symbol of
the system. The strength of this new condition manifests itself in the fact
that, contrary to the others in Kreiss theorem,  it can be applied to generic first order systems,
that is, not necessarily squared ones. 
If that condition were to fail for a first order set of equations, 
then there would be no reduction which would make it strongly hyperbolic. 
Thus, a powerful tool has been developed to easily rule out theories which fail it. 
In this work we refine the condition in order for it to also become a sufficient condition.

The theory we shall develop can be quite general, but in this article we
shall restrict to linear, constant coefficient systems. This restriction
would allow us to make simpler assertions and, correspondingly, simpler
proofs. The general theory will be spelled out in a more technical paper.
Nevertheless most of the material here introduced applies to generic first
order quasi-linear systems\footnote{ Another approach to second order partial differential equations is studied in \cite{Gundlach_2006}.}.

Our approach consists in: first, choosing a hypersurface, at each point of
it, the co-vector normal to it allows us to transform the principal symbol
into a matrix pencil, second applying the Kronecker decomposition to it,
thus obtaining the intrinsic structure of the differential equations and
finally building an specific reduction of the system. The Kronecker
decomposition allows us to recognize in its blocks the evolution of the
physically relevant fields and constraint parts of the differential
equations. The blocks related to constraint propagation admit many different
reductions, in particular, it is possible to build reductions with  arbitrary 
finite constraint propagation speeds. A similar technique for the case in
which the space-time is two dimensional was used by Pavel Motloch and et.
al. \citep{PhysRevD.93.104026}.

In section \ref{Strongly} we introduce Geroch's formalism so as to fix
notation. We then define, in this covariant setting strong hyperbolicity,
and so introduce the hyperbolizers. The definitions we introduce are such
that these reductions, in more than $1+1$ dimensions, can be
pseudo-differential, namely they can depend not only on the sections of the
bundle, but also in the co-tangent bundle of the base manifold. For generic
quasi-linear systems well posedness, as we understand it today, needs as a
sufficient condition smoothness with respect to this co-tangent bundle.
Since in this work the theory is restricted to linear constant-coefficient
systems, only an algebraic condition suffices, that is, no smoothness
condition is needed. We finally state the main theorem of the theory. We do
it in steps, first we state a theorem asserting the equivalence of our new
conditions to those of Kreiss matrix theorem. A new feature of this new
condition is that we only need to look at certain matrix pencils in a
neighborhood of their generalized eigenvalues. With this tool at hand we can
easily state our main theorem for generic systems.

In section \ref{Proof_theorems} we built the necessary ingredients for
proving our main theorem. Essentially we look for a Kronecker decomposition
of the principal symbol at certain points of the characteristic surfaces and
show that the condition hyperbolicity limits the possible Kronecker blocks
to only two types. Of those two blocks, one are the Jordan blocks. Under the
condition on the angles stated in the main theorem these Jordan blocks can
only be diagonal. Once this is established a hyperbolizer can be easily
constructed. We still need to prove the uniformity of our construction. For
that we use the same condition, which is an uniformity condition on the
angles that two kernel subspaces form between each other, to infer the
uniformity needed for strong hyperbolicity. In practical applications, for a
given system, there is a simple algorithm to compute such angles, so this
condition is really helpful in understanding possible new theories. The
general theory provided by the Kronecker decomposition allows more general
types of constraints than those appearing in Geroch's formalism. They
essentially reflect the existence of constraints which contain higher order
derivatives of the fields. They appear as higher dimensional blocks in that
decomposition. Nevertheless we have found that all of them can be readily
taken care of with an appropriate reduction. Unfortunately we do not have
any physically relevant example of these types of constraints. Nature seems
to prefer the lowest order ones.

Finally in section \ref{sec:level1:4} we introduce some examples, the Klein
Gordon, the ADM and the BSSN equations, which illustrates the power of the theory. We finish the
work with several appendices where, besides proving the new Kreiss
condition, we have included preliminary material and notation. 
% It is interesting to note that in the generality of our theory, General Relativity in the ADM formulation is well posed. But the symmetrizer and the resulting evolution equations are not a set of differential equations, but rather a pseudo-differential system. This is in agreement with previous results \textbf{REFERENCIAS} showing that no differential symmetrizer existed. 

%%%%%%%%%%%%%%%%%%%%%%%%%%%%%%%%%%%%%%%%%%%%%%%%%%%%%%%%%%%%%%%%%%%%%%%%%%%%%%%%%%%%%%%%%%%%%%%%%%%%%%%%%%%%%%%%%%%%%%%%%%%%%%%%%%%%%%%%%%%%%%

\section{\label{Strongly}The setting and the Main Theorem}

%%%%%%%%%%%%%%%%%%%%%%%%%%%%%%%%%%%%%%%%%%%%%%%%%%%%%%%%%%%%%%%%%%%%%%%%%%%%%%%%%%%%%%%%%%%%%%%%%%%%%%%%%%%%%%%%

We consider constant coefficients first order systems of the form 
\begin{equation}
\mathfrak{N}_{~\alpha}^{Aa}\nabla_{a}\phi^{\alpha}=0  \label{sht_1}
\end{equation}
over a real manifold $M$, with $x^{a}$ point of $M$ and $\dim M=n+1$. We
follow the notation of \cite{geroch1996partial} and \cite%
{abalos2017necessary}. Here the fields $\phi^{\alpha}$ are the unknown
fields and $\mathfrak{N}_{~\alpha}^{Aa}$ is a given constant tensor field
that depends on the particular physical theory under study. They are
sections on a bundle with a vector fiber which we shall denote by $\Phi_R$.
Lower letters $a,b,c$ represent space-time indices, Greek indices $%
\alpha,\beta,\gamma$ represent field indices $\left\vert \alpha\right\vert
:=\dim\left( "\alpha"\right) =u,$ and capital letters A, B,.. represent
multi-tensorial indices on the fiber space of equations $\left\vert
A\right\vert :=\dim\left( "A"\right) =e$. We shall denote its vector space
by $\Psi_{L}$ and consider systems that have at least the same number of
equations than fields, so $e\geq u$.

We are only interested in strongly hyperbolic
systems. Since they are stable under lower order terms additions, 
in the analysis covariant derivatives can be
exchanged for partial derivatives. For the same reason we set any lower
order term to zero. 

In our description we shall introduce a particular local co-vector field, $%
n_{a}$, it is then convenient to adapt a coordinate system to it in such a
way that $n_{a}=\nabla _{a}t$ where $t$ it is called the time coordinate
and it is a function that its level surfaces define a local foliation of $M$
by hyper-surfaces $\Sigma _{t}$. Then the set of coordinates $x^{a}=\left(
t,x^{1},...,x^{n}\right) $ define a Gaussian normal coordinates adapted to
this foliation. Consider the vector $t^{a}=\left( 1,0,0,0\right) $ such that 
$t^{a}n_{a}=1$ and $t^{a}\partial _{a}=\partial _{t}$ and the projector $%
m_{~b}^{a};=\delta _{~b}^{a}-t^{a}n_{b}$ (where $\delta _{~b}^{a}$ is the
identity map) such that $m_{~b}^{a}t^{b}=0$ and $m_{~b}^{a}n_{a}=0.$ Then
equation (\ref{sht_1}) could be written as 
\begin{equation*}
\mathfrak{N}_{~\alpha }^{Aa}n_{a}t^{b}\partial _{b}\phi ^{\alpha }+\mathfrak{%
N}_{~\alpha }^{Aa}m_{~a}^{b}\partial _{b}\phi ^{\alpha }=\mathfrak{N}%
_{~\alpha }^{Aa}n_{a}\partial _{t}\phi ^{\alpha }+\mathfrak{N}_{~\alpha
}^{Aa}m_{~a}^{b}\partial _{b}\phi ^{\alpha }=0.
\end{equation*}%
Notice that the term $m_{~a}^{b}\partial _{b}$ have no temporal partial
derivatives.

Since we are considering constant coefficient problems, we can Fourier
transform in space coordinates and reduce the system to the following
equivalent system
\begin{equation}
\mathfrak{N}_{~\alpha }^{Aa}n_{a}\partial _{t}\hat{\phi}^{\alpha }+ i \mathfrak{%
N}_{~\alpha }^{Aa}k_{a}\hat{\phi}^{\alpha }=0  \label{Fourier_2}
\end{equation}
where $k_{a}t^{a}=0$,  $k_{a}m_{~b}^{a}=k_{b}$ and with initial data over the hyper-surface $t=0$ 
\begin{equation*}
\left. \phi ^{\alpha }\right\vert _{t=0}=\hat{\phi}_{initial}^{\alpha
}e^{i\left( k_{a}\right) x^{a}}.
\end{equation*}

Since the frequency $k_{a}$ in the initial data is fixed but arbitrary, we
look for solutions of equation (\ref{Fourier_2}) for all $k_{a}$ not
proportional to $n_{a}.$

In general, equation system (\ref{Fourier_2}) has more equations than
fields, in particular there are $c$ linear combinations of equations without
time derivatives. They are called differential constraints, for a formal and
geometrical definition see \cite{geroch1996partial}. We are going to
restrict consideration to those systems where the number of equations
satisfies $e=u+c$ $\ $where $c$ is the number of constraints. In Geroch's
terminology they are called complete.

These constraints restrict the available initial data and for consistency it
must be shown that if initially satisfied they remain so along the
evolution. We shall not deal with this problem in this work, assuming this
is so, since it involves integrability conditions which depends on lower
order terms.

While in Geroch's formalism constraint equations are singled out, evolution
equations are not. They are not unique and further structure must be
introduced to single out a particular set of them. Given a particular set of
evolution equations, linear combinations of constraints can be added to
generate another equivalent system. They are not naturally unique. To single
out a particular set we introduce a new tensor field $h_{~A}^{\gamma }$ that
reduces the system to a set of purely evolution equations, 
\begin{equation}
h_{~A}^{\alpha }\mathfrak{N}_{~\alpha }^{Aa}n_{a}\partial _{t}\hat{\phi}%
^{\alpha }+ i h_{~A}^{\gamma }\mathfrak{N}_{~\alpha }^{Aa}k_{a}\hat{\phi}%
^{\alpha }=0.  \label{red_sys}
\end{equation}%
This set has $u$ independent equations, as many as there are fields. We
shall refer to $h_{~A}^{\alpha }$ as a reduction. In general it will depend
on the wave number vector $k_{a}$.

We shall call system (\ref{sht_1}) strongly hyperbolic if there is a
reduction such that system (\ref{red_sys}) is so, using the usual
definition, namely definition \ref{Definition:2} below.

We first need to introduce another definition, assuming that $h_{~A}^{\alpha }\mathfrak{N}_{~\alpha }^{Aa}n_{a}$ is
invertible (a necessary condition for hyperbolicity), we define
\begin{equation}
A_{~\gamma }^{\alpha a}k_{a}:=\left( \left( h_{~A}^{\alpha }\mathfrak{N}%
_{~\alpha }^{Aa}n_{a}\right) ^{-1}\right) _{~\gamma }^{\alpha
}h_{~A}^{\gamma }\mathfrak{N}_{~\alpha }^{Aa}k_{a}.  \label{A_eq}
\end{equation}
In addition, in the following definitions when we say "for all $k_{a}$" we
mean "all $k_{a}$ not proportional to $n_{a}$ and $\left\vert k\right\vert
=1,$ with $\left\vert \cdot \right\vert $ some positive definite norm".

\begin{definition}
\label{Def:hyp} System (\ref{red_sys}) is called hyperbolic if for all $k_{a}
$, $A_{~\gamma }^{\alpha a}k_{a}$ has only real eigenvalues.
\end{definition}

This definition means that all propagation velocities are real, so no
exponential growth with frequency can be expected, although a polynomial growth is possible~\footnote{For non-constant coefficients systems that growth can even become exponential, see for instance \cite{kreiss2004initial}}. This is not by itself
sufficient for stability and well posedness but it is certainly necessary.
Following the Kreiss's Matrix theorem \cite{kreiss1962stabilitatsdefinition}%
, \cite{kreiss2004initial} we now estate several 
necessary and sufficient conditions for strong hyperbolicity of evolution equations:

\begin{definition}
\label{Definition:2} We call system (\ref{red_sys}) strongly hyperbolic if
any of the following four equivalent conditions hold:

1- System (\ref{red_sys}) is hyperbolic and $A_{~\gamma }^{\alpha a}k_{a}$
it is uniformly diagonalizable: that is, for all $k_{a}$ there exist $%
S_{~\rho }^{\alpha }\left( k\right) $, and $C>0$ such that
\newline $A_{~\gamma
}^{\alpha a}k_{a}=S_{~\rho }^{\alpha }\left( k\right) \Lambda _{~\tau
}^{\rho }\left( k\right) \left( S^{-1}\right) _{~\gamma }^{\tau }\left(
k\right) $ with $\Lambda _{~\tau }^{\rho }\left( k\right) $ being diagonal
and \newline $\left\vert S\left( k\right) \right\vert \left\vert S^{-1}\left(
k\right) \right\vert \leq C.$

2- For all $k_{a}$ and all $s\in \mathbb{C}$ with $Im(s)>0$, there
exists a constant $C>0$ such that,

\begin{equation}
\left\vert \left( A_{~\gamma }^{\alpha a}k_{a}-s\delta _{~\gamma }^{\alpha
}\right) ^{-1}\right\vert \leq \frac{C}{Im(s)}  \label{kreiss_1}
\end{equation}

3- For all $k_{a},$ there exists an positive definite Hermitian form%
\footnote{%
The $\tbinom{0}{2}$ tensor $H_{\alpha \eta }$ is a Hermitian form if $%
H_{\alpha \eta }=\overline{H_{\eta \alpha }}$.
\par
And It is positive definite if $\overline{u^{\alpha }}H_{\alpha \eta
}u^{\eta }>0$ for all $u^{\alpha }\in T_{x}M$ and for all $x\in M.$} $%
H(k)_{\alpha \beta }$ and a constant $C>0$ such that,

$i)$ $H(k)_{\alpha \eta }A_{~\gamma }^{\alpha a}k_{a}\;\;$is an Hermitian
form, i.e. $H(k)_{\alpha \eta }A_{~\gamma }^{\alpha a}k_{a}=H(k)_{\gamma
\eta }\overline{A_{~\alpha }^{\eta a}}k_{a}$

$ii)\frac{1}{C}$ $H_{\delta \gamma }^{0}\geq H(k)_{\alpha \delta }\geq
CH_{\delta \gamma }^{0}>0\;\;\;\forall k_{a},$

where $H_{\alpha \gamma }^{0}$ is a positive definite Hermitian form that
does not depends on $k_{a}$.

4- For all $k_{a}$ and $%
t\geq 0$, there exists  $C>0$ such that,$%
\left\vert e^{itA_{~\gamma }^{\alpha a}k_{a}}\right\vert \leq C$.
\end{definition}

For real equation systems, Hermiticity has to be understood by symmetry in
the corresponding indexes.

The question is then: \textit{Under which circumstances do there exist
reductions which make the system (\ref{red_sys}) strongly hyperbolic?}.
Clearly the conditions for the existence of such reductions, $h_{~A}^{\gamma
}$, which we shall call from now on hyperbolizers, depends only on the
properties of the principal symbol, in particular on the behavior of $%
\mathfrak{N}_{~\alpha }^{Aa}l\left( \lambda \right) _{a}$ along the set of planes $S_{n_{a}}^{%
\mathbb{C}}=\{l\left( \lambda \right) _{a}:=-\lambda n_{a}+k_{a}\}$, for all 
$k_{a}$ not proportional to $n_{a}$, with $\left\vert k\right\vert =1$ and $%
\lambda \in \mathbb{C}$. More specifically, we shall concentrate on
neighborhoods of real lines on these planes. The real lines given by, $S_{n_{a}}
$ with $\lambda \in \mathbb{R}.$ The condition $k_{a}$ not proportional to 
$n_{a}$ implies that these planes and lines do not cross the origin for any 
$\lambda$. Each complex plane depends on some $k_{a}$ but we shall call them
generically $l_{a}(\lambda )$ in order not to obfuscate the notation.

Notice that if we propose a plane wave solution, $\hat{\phi}^{\alpha
}=\delta \phi ^{\alpha }e^{i\left( -\lambda n_{a}\right) x^{a}}$ for 
(\ref{red_sys}), we arrive to an equation for the right kernel of the principal
symbol, 
\begin{equation}
h_{A}^{\gamma }\mathfrak{N}_{~\alpha }^{Ab}l\left( \lambda \right) _{b}\text{
}\delta \phi ^{\alpha }=\left[ \lambda \left( -h_{A}^{\gamma }\mathfrak{N}%
_{~\alpha }^{Ab}n_{b}\right) +\left( h_{A}^{\gamma }\mathfrak{N}_{~\alpha
}^{Ab}k_{b}\right) \right] \delta \phi ^{\alpha }=0  \label{symb_1}
\end{equation}%
where $l\left( \lambda \right) _{b}=-\lambda n_{a}+k_{a}\in S_{n_{a}}^{%
\mathbb{C}}$. Here the unknown are $\delta \phi ^{\alpha }$ and $\lambda $.
The completeness of these plane wave solutions is key to understand well
posedness. Thus, we shall next study the kernels of the principal symbol $%
\mathfrak{N}_{~\eta }^{Bb}l_{b}$ along $S_{n_{a}}^{\mathbb{C}}$. We shall
call right kernel to the subspace of vectors $\delta \phi ^{\eta }$ such
that $\left( \mathfrak{N}_{~\eta }^{Bb}l_{b}\right) \delta \phi ^{\eta }=0$
and left kernels to the subspace of co-vectors $X_{B}$ such that $%
X_{B}\left( \mathfrak{N}_{~\eta }^{Bb}l_{b}\right) =0$ \footnote{%
Right kernel will be vectors that contract with down indices on the
operator, and left kernel will be co-vectors that contract to up indices.}.

In analogy to the our first definition \ref{Def:hyp}, we define
hyperbolicity for the whole system eq. (\ref{sht_1}). It will become clear
that, in this more general case, this condition it is also necessary for the
hyperbolicity of any reduced system, (\ref{red_sys}).

\begin{definition}
System (\ref{sht_1}) is called hyperbolic if there exists co-vector field $%
n_{a}$ such that

1- $\mathfrak{N}_{~\eta }^{Ab}n_{b}$ has not right kernel.

2- For each plane $l\left( \lambda \right) _{a}\in S_{n_{a}}^{\mathbb{C}}$,
if $\mathfrak{N}_{~\eta }^{Ab}l(\lambda )_{b}$ has non-trivial right kernel,
then $\lambda \in \mathbb{R}$.
\end{definition}

Condition 1, is necessary in order that there exists $h_{A}^{\gamma }$ such
that $h_{~A}^{\alpha }\mathfrak{N}_{~\alpha }^{Aa}n_{a}$ is invertible. It
also implies that the dimension of the left kernel of $\mathfrak{N}_{~\eta
}^{Ab}n_{b}$ is $c=e-u$. 

Condition 2 is obviously necessary as otherwise the exponent in the plane
wave solution would be real and for some values of $k_{a}$ would imply an
unbounded growth.

As shown in \cite{abalos2017necessary} the system is hyperbolic if and only
if for any fixed positive define pair of Hermitian forms 
% \footnote{The operator
% $G_{AB}$ is a Hermitian form if $G_{AB}=\overline{G_{BA}}.$ If in addition
% $G_{AB}$ is positive definite, then $G_{AB}$ is called a positive definite
% Hermitian form. Bar means conjugation.} 
$G_{AB}$  and $G^{\alpha \gamma }$ in $\Psi_L$ and $\Phi_R$ respectively,  the following polynomial equation in $%
\lambda $ and $\bar{\lambda}$ 
\begin{equation}
p\left( l\left( \lambda \right) _{a}\right) :=\det \left( G^{\alpha \gamma }%
\overline{\mathfrak{N}_{~\gamma }^{Aa}l\left( \lambda \right) _{a}}G_{AB}%
\mathfrak{N}_{~\eta }^{Bb}l\left( \lambda \right) _{b}\right) =0  \label{p_1}
\end{equation}%
has only real roots. This result does not depend on the particular pair $%
G_{AB}$ and $G^{\alpha \gamma }$ of metrics used, nevertheless we shall
need, to define uniformity, to choose any given, constant, pair of these
metrics.

In addition, for a given direction $n_{a}$ we shall call generalized
eigenvalues to the set of roots, $\lambda _{i}=\lambda _{i}\left( k\right) $%
, of eq. (\ref{p_1}) i.e. $p(l(\lambda _{i})_{a})=0$, and characteristic
co-vectors to the set of co-vectors $l\left( \lambda _{i}\right) _{a}\in
S_{n_{a}}$. So hyperbolicity means that eq. (\ref{symb_1}) has only real
generalized eigenvalues. They are the physical characteristics of the
system. That is, along them the physical degrees of freedom propagate.
Notice furthermore that the matrix $A_{~\gamma }^{\alpha a}k_{a}$ inherits
the generalized eigenvalues of the principal symbol. This matrix will in
general have further eigenvalues, as we shall show in the following
sections, they would depend on the particular reduction employed.

Our main theorem establishes which conditions on the principal symbol are
necessary and sufficient for system (\ref{sht_1}) to be strongly hyperbolic.
In order to formulate it, we first need to introduce further notation
related to the angles between vectorial subspaces. Introductions of these
topics are given, for instance, in \cite{taslaman2014principal} and \cite%
{afriat1957orthogonal}.

We call $\tau _{i}\left( k\right) $ with $i\in F_{\left( k\right) }=\left\{
1,...,w_{\left( k\right) }\right\} $ to the different eigenvalues\footnote{%
The $\tau _{i}\left( k\right) $ include to the generalized eigen-values $%
\lambda _{i}\left( k\right) $ and the other eigen-values obtained in the
reduction. Since all quantities depends on $k_{a}$ we explicitly put that dependence.}
of $A_{~\gamma }^{\alpha a}k_{a}$, and $\Phi _{R}^{\tau _{i}\left( k\right) }
$ , $\left( \Upsilon _{L}^{\tau _{i}\left( k\right) }\right) ^{\prime }$ to
the right vectorial eigen-subspace (see eq. (\ref{symb_1})) and the left
co-vectorial eigen-subspace\footnote{%
From now on, we denote the dual space associated to some space with the $%
^{\prime }$ symbol.} of $A_{~\gamma
}^{\alpha a}k_{a}$ respectively. 
Finally we call $\Upsilon _{L}^{\tau _{i}\left( k\right) }$ to the
subspace obtained by raising the index to the co-vectors of $\left( \Upsilon
_{L}^{\tau _{i}\left( k\right) }\right) ^{\prime }$ with $G^{\alpha \gamma }$. 
Now, since $\Upsilon _{L}^{\tau _{i}\left( k\right) }$ and $\Phi
_{R}^{\tau _{i}\left( k\right) }$ are subspaces of $\Phi _{R}$ and we
have the positive definite metric $G_{\alpha \gamma }$, it is possible to
define geometric angles between these subspaces that measure how close are
them to each other. The number of angles is equal to the smallest dimension
of the subspaces, since here $r_{\tau _{i}\left( k\right) }:=\dim \Upsilon
_{L}^{\tau _{i}\left( k\right) }=\dim \Phi _{R}^{\tau _{i}\left( k\right) }$%
, there are $r_{\tau _{i}\left( k\right) }$ angles.

With all the background given we are now in position to give another equivalent condition to the Kreiss Matrix theorem,
which is expressed in term of the angles between the subspaces. The proof
of this result is given in appendix \ref{appendix_a}, and just quote it here. 
A result by Strang \cite{strang1967strong} is used in the proof.
This condition will allow us to proof our main theorem \ref{Theorem_FL}.

\begin{theorem}
\label{Theorem_1}System (\ref{red_sys}) is strongly hyperbolic if and only
if it is hyperbolic with respect to $n_{a}$ and, for all $i\in F_{\left(
k\right) }$, and all $k_{a}$ non proportional to some $n_{a}$ with $%
\left\vert k\right\vert =1$, there is a constant angle $\vartheta <\frac{\pi 
}{2}$ such that all angles between \ $\Upsilon _{L}^{\tau _{i}\left(
k\right) }$ and $\Phi _{R}^{\tau _{i}\left( k\right) }$ are smaller than it.
\end{theorem}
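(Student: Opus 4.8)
The plan is to reduce the statement to the standard Kreiss matrix theorem (conditions 1--4 of Definition \ref{Definition:2}) by translating the uniform diagonalizability condition into a statement about the angles between the right eigenspaces $\Phi_R^{\tau_i(k)}$ and the raised left eigenspaces $\Upsilon_L^{\tau_i(k)}$. The key algebraic fact is that a matrix $A_{~\gamma}^{\alpha a}k_a$ with only real eigenvalues is diagonalizable iff, for each eigenvalue $\tau_i(k)$, the right eigenspace and the left eigenspace together span $\Phi_R$ transversally; equivalently, the spectral projector $P_i(k)$ onto $\Phi_R^{\tau_i(k)}$ along the sum of the other eigenspaces has the form built from a basis of $\Phi_R^{\tau_i(k)}$ and a dual basis drawn from $(\Upsilon_L^{\tau_i(k)})'$. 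The operator norm of $P_i(k)$ is then controlled, up to dimension-dependent constants, by $1/\cos\vartheta_i(k)$, where $\vartheta_i(k)$ is the largest principal angle between $\Phi_R^{\tau_i(k)}$ and $\Upsilon_L^{\tau_i(k)}$ measured with the fixed metric $G_{\alpha\gamma}$. (Here one uses that $(\Upsilon_L^{\tau_i(k)})'$ is precisely the left eigenspace, so raising its index with $G^{\alpha\gamma}$ produces the subspace against which the projector must be transversal.)

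First I would establish the ``only if'' direction. Assume system (\ref{red_sys}) is strongly hyperbolic. Then by Definition \ref{Definition:2}.1 it is hyperbolic and $A_{~\gamma}^{\alpha a}k_a = S(k)\Lambda(k)S^{-1}(k)$ with $|S(k)||S^{-1}(k)|\le C$ uniformly. The columns of $S(k)$ grouped by eigenvalue span the right eigenspaces $\Phi_R^{\tau_i(k)}$, and the corresponding rows of $S^{-1}(k)$ span the left eigenspaces; the uniform bound on the condition number of $S(k)$ forces the smallest singular value of the block structure to be bounded below, which is exactly a uniform lower bound on $\cos$ of the principal angles between each $\Phi_R^{\tau_i(k)}$ and the span of the remaining right eigenspaces --- and, via the biorthogonality relation $S^{-1}S=\mathrm{Id}$, a uniform lower bound on $\cos$ of the angles between $\Phi_R^{\tau_i(k)}$ and $\Upsilon_L^{\tau_i(k)}$. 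Hence a uniform $\vartheta<\pi/2$ exists. For the ``if'' direction, assume hyperbolicity and the uniform angle bound $\vartheta<\pi/2$. For each $k$, pick $G_{\alpha\gamma}$-orthonormal bases of the $\Phi_R^{\tau_i(k)}$; since $\dim\Phi_R^{\tau_i(k)}=\dim\Upsilon_L^{\tau_i(k)}=r_{\tau_i(k)}$ and the largest angle between them is bounded away from $\pi/2$, the two subspaces are transversal and one can build the spectral projectors $P_i(k)$ with $|P_i(k)|\le c_u/\cos\vartheta$. Assembling $S(k)$ from these bases gives $|S(k)||S^{-1}(k)|\le C$ with $C$ depending only on $u$, $\vartheta$, and the number of eigenvalues $w_{(k)}\le u$; this is condition \ref{Definition:2}.1, so the system is strongly hyperbolic. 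The result by Strang cited in the text is used here to handle the passage between the bound on individual projectors and the bound on the full diagonalizing transformation, and to control the dependence on the (bounded but possibly varying) number of eigenvalues.

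The main obstacle I expect is the \emph{uniformity}: the number $w_{(k)}$ of distinct eigenvalues and the dimensions $r_{\tau_i(k)}$ may jump as $k$ varies over the unit sphere, and eigenvalues may collide. One must show that the constant $C$ can be chosen independently of $k$ despite this. The right way around it is to observe that the angle condition is imposed for \emph{all} $k$, so even in the limit where two eigenvalues $\tau_i(k),\tau_j(k)$ approach each other, the \emph{total} eigenspace $\Phi_R^{\tau_i}\oplus\Phi_R^{\tau_j}$ and its left counterpart stay uniformly transversal; the projector onto the merged eigenspace is then uniformly bounded, and Strang's argument packages finitely many such uniformly-bounded projectors into a uniformly-bounded similarity transformation. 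Since $u$ is fixed, there are only finitely many possible partition types, and a bound for each yields a global bound. This is also where one checks that the choice of the fixed reference metrics $G_{AB}$, $G^{\alpha\gamma}$ (used in (\ref{p_1}) and in defining the angles) does not affect the conclusion: changing them rescales all angles and norms by $k$-independent factors.
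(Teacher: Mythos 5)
Your proposal is correct and rests on the same two pillars as the paper's proof: the identification of the eigenprojector norm with the reciprocal cosine of the largest principal angle between $\Phi _{R}^{\tau _{i}\left( k\right) }$ and $\Upsilon _{L}^{\tau _{i}\left( k\right) }$, and Strang's equivalence between uniformly bounded eigenprojectors and strong hyperbolicity. The difference is in which Kreiss condition you pivot on and how you obtain the projector--angle identity. You work with condition 1 of Definition \ref{Definition:2} (uniform diagonalizability), assemble $S\left( k\right) $ from $G$-orthonormal eigenbases, and justify $\left\vert P_{i}\right\vert \lesssim 1/\cos \vartheta $ by the geometry of oblique projectors (range $\Phi _{R}^{\tau _{i}}$, kernel $\bigoplus_{j\neq i}\Phi _{R}^{\tau _{j}}$, whose $G$-orthogonal complement is $\Upsilon _{L}^{\tau _{i}}$). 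The paper instead pivots on condition 2, the resolvent estimate (\ref{kreiss_1}): it computes $\lim_{\varepsilon \rightarrow 0}\varepsilon \left\vert \left( A-\tau _{i}-i\varepsilon \right) ^{-1}\right\vert $ in two ways --- once via the first-order perturbation formula for the vanishing singular values, $\sigma _{u-r+l}=\varepsilon \cos \theta _{l}^{\tau _{i}}+O\left( \varepsilon ^{2}\right) $, and once via the spectral decomposition --- obtaining the \emph{exact} identity $\left\vert P^{j}\right\vert =1/\min_{l}\cos \theta _{l}^{\tau _{i}}$ (eq. (\ref{projector_1})), and then closes the loop with $\left\vert \left( A-s\right) ^{-1}\right\vert \varepsilon \leq \sum_{j}\left\vert P^{j}\right\vert \leq uC$. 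The paper's route buys the exact constant and reuses the SVD-perturbation machinery that drives the main theorem; your route is more elementary and self-contained, but the quantitative projector--angle relation is asserted rather than derived, and it is the one step you would need to write out carefully. Your closing worry about eigenvalue collisions is not actually an obstruction in either argument: the final bound $\sum_{j}\left\vert P^{j}\right\vert \leq uC$ holds for each fixed $k$ with $w_{\left( k\right) }\leq u$, so no continuity in $k$ or case analysis over partition types is required.
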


For each $\tau _{i}\left( k\right) $, the $\cos$ of these angles turn to
be the $r_{\tau _{i}\left( k\right) }$ singular values of the square matrix 
\begin{equation}
\left( T^{\tau _{i}\left( k\right) }\right) _{~j}^{i}=\upsilon ^{i\alpha
}G_{\alpha \gamma }\delta \phi _{j}^{\gamma }  \label{eq_Tij_1}
\end{equation}%
where $\{\upsilon ^{i\alpha } \in \Upsilon _{L}^{\tau _{i}\left( k\right) }\}$
and $\{\delta \phi _{j}^{\gamma }\in \Phi _{R}^{\tau _{i}\left( k\right) }\}$,
with $i,j\in F_{\left( k\right) }$, are orthonormal bases of the
corresponding subspaces. 
Thus, these angles can be easily computed in examples.

We turn now to the main result of this work, obtaining necessary and
sufficient conditions for the existence of a well posed reduction. To that
end we shall use our previous theorem as a guidance. We shall consider the
right and left kernel of the complete (previous to any reduction) principal
symbol, project them, and obtain a condition between the angles of these subspaces.

We fix an $n_{a}$ for which the system is hyperbolic and consider the line $%
l(\lambda )_{a}\in S_{n_{a}}$. As before, the number and the geometric
multiplicity (i.e. the dimension of the right kernel) of the generalized
eigenvalues depends on $k_{a}.$ For each $k_{a}$ we shall call $\lambda
_{i}\left( k\right) $ with $i\in D_{\left( k\right) }:=\left\{
1,2,...,q_{\left( k\right) }\right\} $ to the different generalized
eigenvalues and $r_{\lambda _{i}\left( k\right) }$ to the geometric
multiplicity of the corresponding $\lambda _{i}\left( k\right) $. Notice
that in the quasi-linear case $\lambda _{i}\left( k\right) $, $q_{\left(
k\right) }$ and $r_{\lambda _{i}\left( k\right) }$ could depend on the
space-time points $x$ and the fields $\phi ^{\alpha }\left( x\right) $,
since we are considering the constant coefficients case this dependence does
not appear.

At each of the generalized eigenvalues, $\lambda _{i}\left( k\right) $, we
have a left and right kernel of the principal part. We shall call them $\Psi
_{L}^{\lambda _{i}\left( k\right) }$and $\Phi _{R}^{\lambda _{i}\left(
k\right) }$. They have dimensions $r_{\lambda _{i}\left( k\right) }+e-u$ and 
$r_{\lambda _{i}\left( k\right) }$ respectively. Using $\mathfrak{N}%
^{Aa}{}_{\alpha }n_{a}$ we can map $\Psi _{L}^{\lambda _{i}\left( k\right) }$
into $\Phi _{R}^{\prime }$ (the dual space of $\Phi _{R}$) and call it $%
\left( \Phi _{L}^{\lambda _{i}\left( k\right) }\right) ^{\prime }$. It is
not possible to know, in a generic way, its dimension. We only know that the
left kernel of $\mathfrak{N}^{Aa}{}_{\alpha }n_{a}$ is $e-u$, thus we can
bound the dimension as, $\ r_{\lambda _{i}\left( k\right) }\leq \dim \left(
\Phi _{L}^{\lambda _{i}\left( k\right) }\right) ^{\prime }\leq
e-u+r_{\lambda _{i}\left( k\right) }$.

These concepts allow to us, to state the result of \cite%
{abalos2017necessary}  in this simple equivalent form

\begin{equation}
\left. \left( \Phi _{L}^{\lambda _{i}\left( k\right) }\right) ^{\prime
}\right\vert _{\Phi _{R}^{\lambda _{i}\left( k\right) }}=\left( \Phi
_{R}^{\lambda _{i}\left( k\right) }\right) ^{\prime }.  \label{pp_1}
\end{equation}

Consider now the subspace obtained by rising the index to the elements of $%
\left( \Phi _{L}^{\lambda _{i}\left( k\right) }\right) ^{\prime }$ with $%
G^{\alpha \gamma }$, and calling that subspace $\Phi _{L}^{\lambda
_{i}\left( k\right) }$ we have, $\Phi _{L}^{\lambda _{i}\left( k\right)
}\subset \Phi _{R}$, and so we can define the angles between $\Phi
_{L}^{\lambda _{i}\left( k\right) }$ and $\Phi _{R}^{\lambda _{i}\left(
k\right) }$. For each $\lambda _{i}\left( k\right) $ and $k_{a}$ we call $%
\theta _{j}^{\lambda _{i}\left( k\right) }$ with $j\in I_{\lambda _{i}\left(
k\right) }:=\left\{ 1,...,r_{\lambda _{i}\left( k\right) }\right\} $ to
these angles, they are geometric quantities and the answer to our problem is
given in term of them. We are now in position to formulate our theorem:
\begin{theorem}
\label{Theorem_FL}The constant coefficient system (\ref{sht_1}) is strongly hyperbolic
(admits at least one hyperbolizer) if and only if it is hyperbolic with
respect to some direction $n_{a}$ and, for all $i\in D_{\left( k\right) }$
and all $k_{a}$ non proportional to $n_{a}$, with $\left\vert k\right\vert =1$,
there is a constant maximum angle $\vartheta <\frac{\pi }{2}$ between \ $%
\Phi _{L}^{\lambda _{i}\left( k\right) }$ and $\Phi _{R}^{\lambda _{i}\left(
k\right)}$.
\end{theorem}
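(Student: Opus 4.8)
The plan is to study, for each fixed $k_a$, the matrix pencil $\mathfrak{N}^{Aa}_{~\alpha}l(\lambda)_a=-\lambda\,\mathfrak{N}^{Aa}_{~\alpha}n_a+\mathfrak{N}^{Aa}_{~\alpha}k_a$ through its \emph{Kronecker canonical form}, writing it as $U\,(\text{block diagonal})\,V$ with constant invertible $U,V$ and canonical Kronecker blocks. First I would show that \textbf{hyperbolicity with respect to $n_a$ forbids every block type except two}: the left-singular blocks $L^{T}_{\eta}$ of size $(\eta+1)\times\eta$, which carry the constraints, and the finite Jordan blocks $-\lambda I_{s}+J_{s}(\mu)$ with $\mu\in\mathbb{R}$. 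Indeed an $L_\epsilon$ block, or an infinite-eigenvalue (nilpotent) block, would produce either a right null vector of $\mathfrak{N}^{Aa}_{~\alpha}n_a$ or a right null vector of the pencil for \emph{every} $\lambda$, contradicting parts $1$ and $2$ of the definition of hyperbolicity, while a Jordan block with non-real $\mu$ is excluded by part $2$. Counting rows against columns of the surviving blocks (Jordan blocks square, each $L^T_\eta$ with one excess row) reproduces the fact that there are exactly $c=e-u$ blocks of type $L^T_\eta$.

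The heart of the argument is a \emph{pointwise} lemma. At a generalized eigenvalue $\mu=\lambda_i(k)$ the right kernel $\Phi_R^{\lambda_i(k)}$ is the direct sum of the ``first-coordinate'' lines of the Jordan blocks carrying $\mu$, while $\bigl(\Phi_L^{\lambda_i(k)}\bigr)'$ is generated by the left kernels of those same Jordan blocks (each contributing its ``last-coordinate'' functional) together with the left kernels of \emph{all} the $L^T_\eta$ blocks, transported by $\mathfrak{N}^{Aa}_{~\alpha}n_a$. Since the $L^T_\eta$ contributions live in disjoint slots of the Kronecker basis they pair to zero with $\Phi_R^{\lambda_i(k)}$, so $\bigl(\Phi_L^{\lambda_i(k)}\bigr)'|_{\Phi_R^{\lambda_i(k)}}$ only sees the Jordan blocks; a $1\times1$ block contributes a non-degenerate functional, a block of size $\ge2$ contributes the zero functional. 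Hence $\bigl(\Phi_L^{\lambda_i(k)}\bigr)'|_{\Phi_R^{\lambda_i(k)}}=\bigl(\Phi_R^{\lambda_i(k)}\bigr)'$ — equivalently, and metric-independently, the largest principal angle between $\Phi_L^{\lambda_i(k)}$ and $\Phi_R^{\lambda_i(k)}$ is $<\pi/2$ — \textbf{if and only if every Jordan block of the pencil at $\lambda_i(k)$ is $1\times1$}. This also identifies the condition with the necessary condition of \cite{abalos2017necessary}, eq.~(\ref{pp_1}).

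For \textbf{sufficiency} I would, assuming hyperbolicity and the uniform bound $\vartheta<\pi/2$, use the lemma to conclude that at every $k_a$ the Jordan part of the pencil is diagonal with real eigenvalues. I then construct a reduction $h^{\gamma}_{~A}(k)$ adapted to the Kronecker basis: leave the diagonal Jordan part untouched and, on each constraint block $L^T_\eta$, apply a ``companion'' reduction that keeps the $\mathfrak{N}n$-part invertible and fixes the finitely many constraint-propagation eigenvalues to prescribed real values, chosen distinct from each other and from the physical $\lambda_i(k)$. By construction $A^{\alpha a}_{~\gamma}k_a$ is then diagonalizable with real spectrum at each $k_a$, its right eigenspace at $\lambda_i(k)$ is exactly $\Phi_R^{\lambda_i(k)}$, and its left eigenspace $\Upsilon^{\lambda_i(k)}_{L}$ lies inside $\Phi_L^{\lambda_i(k)}$; hence the $\Upsilon_L/\Phi_R$ angles at the physical eigenvalues are $\le\vartheta$ by hypothesis, while those at the chosen constraint eigenvalues are controlled directly by the explicit companion construction. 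Theorem~\ref{Theorem_1} then gives strong hyperbolicity of (\ref{red_sys}), hence of (\ref{sht_1}) — \emph{provided} the construction is uniform in $k_a$.

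\textbf{That uniformity is the main obstacle}, and it is exactly what the word ``constant'' in the hypothesis secures. The transition matrices $U(k),V(k)$ bringing the pencil to Kronecker form can degenerate as $k_a$ varies (collision of two physical speeds, or the onset of a non-trivial Jordan block), and one must stop their condition numbers, and with them $\lVert S(k)\rVert\,\lVert S^{-1}(k)\rVert$ for the reduced matrix, from blowing up; I would argue that any failure of uniform boundedness forces in the limit either a Jordan block of size $\ge2$ or a loss of transversality of $\Phi_L$ and $\Phi_R$, contradicting $\vartheta<\pi/2$ via the lemma. For \textbf{necessity}, hyperbolicity with respect to $n_a$ is immediate (a right kernel of $\mathfrak{N}n$ makes $h\mathfrak{N}n$ singular; a non-real generalized eigenvalue is a non-real eigenvalue of the real-spectrum matrix $A^{\alpha a}_{~\gamma}k_a$), the pointwise angle bound is the content of \cite{abalos2017necessary} via the lemma, and the uniform angle bound is obtained by pulling the uniformly bounded symmetrizer of the reduced system (condition~$3$ of Definition~\ref{Definition:2}) back through the hyperbolizer to a uniform energy estimate for the full symbol, from which uniform transversality of $\Phi_L^{\lambda_i(k)}$ and $\Phi_R^{\lambda_i(k)}$ follows. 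In both directions the genuine work is this passage between uniform spectral/energy estimates and uniform angle bounds.
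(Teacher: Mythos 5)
Your overall route is the same as the paper's: Kronecker decomposition of the pencil, elimination of all blocks except $J_m(\lambda_i)$ and $L^T_m$ by hyperbolicity, identification of the angle condition with the absence of Jordan blocks of size $\ge 2$, an explicit block-adapted reduction that fixes the constraint speeds to simple real values distinct from the physical ones, and an appeal to Theorem \ref{Theorem_1}. The necessity direction and the pointwise lemma are fine.

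There is, however, a genuine gap in your sufficiency argument, at the one place where the real work of the theorem sits. You write that the left eigenspace $\Upsilon_L^{\lambda_i(k)}$ of the reduced matrix lies inside $\Phi_L^{\lambda_i(k)}$, ``hence the $\Upsilon_L/\Phi_R$ angles at the physical eigenvalues are $\le\vartheta$ by hypothesis.'' This inference goes the wrong way: $\Upsilon_L^{\lambda_i(k)}$ is a proper subspace of $\Phi_L^{\lambda_i(k)}$ (of dimension $r_{\lambda_i(k)}$ versus up to $e-u+r_{\lambda_i(k)}$), and passing to a subspace can only \emph{increase} the principal angles with $\Phi_R^{\lambda_i(k)}$. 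A bound on the $\Phi_L/\Phi_R$ angles therefore does not bound the $\Upsilon_L/\Phi_R$ angles that Theorem \ref{Theorem_1} actually requires; a generic choice of reduction could place $\Upsilon_L^{\lambda_i(k)}$ nearly orthogonal to $\Phi_R^{\lambda_i(k)}$ even when $\Phi_L^{\lambda_i(k)}$ contains $\Phi_R^{\lambda_i(k)}$ outright. (This inclusion is exactly why the angle condition is \emph{necessary}, but it cannot be run backwards for sufficiency.) The paper closes this gap by exploiting the residual freedom in the Kronecker bases: it shows (Appendix \ref{app_IIIb}) that the image $\Delta_{\mathfrak{N}}(\lambda_i)$ of the $\lambda$-independent part of the left kernel is $G$-orthogonal to $\Phi_R^{\lambda_i(k)}$, decomposes $\Phi_L^{\lambda_i(k)}=\Delta_{\mathfrak{N}}(\lambda_i)\oplus(\Phi_L^{\lambda_i})^{\perp}$, and then chooses the generalized left eigen-covectors (hence $W$, hence the reduction) so that $\Upsilon_L^{\lambda_i(k)}$ coincides with $(\Phi_L^{\lambda_i})^{\perp}$; only then do the reduced angles equal the hypothesized ones. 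Your proposal needs this orthogonality statement and this specific basis choice — or some substitute for them — before Theorem \ref{Theorem_1} can be invoked. Relatedly, the uniform bound at the constraint eigenvalues $\pi_i(k)$ is not automatic from the ``explicit companion construction''; the paper gets it from simplicity of those eigenvalues plus continuity in $k_a$ on the compact sphere $|k|=1$, and your compactness heuristic for the transition matrices $U(k),V(k)$ would need to be made precise along similar lines.
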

It is possible to show that the necessary condition (\ref{pp_1}) implies
that for each $k_a$ there exists a metric such that all the angles between $\Phi _{L}^{\lambda _{i}\left( k\right) }$ and $\Phi
_{R}^{\lambda _{i}\left( k\right) }$ vanish. But this is
not sufficient for the theorem, since we need a global metric (independent on $k$) such that a lower order bound (the existence of $\vartheta $) exists. Written in term of the cosine of
the angles 
\begin{equation}
\min_{i\in D_{\left( k\right) }\text{ }j\in I_{\lambda _{i}\left( k\right) }%
\text{ \ }\left\vert k\right\vert =1\text{ }}\cos \theta _{j}^{\lambda
_{i}\left( k\right) }\geq \cos \vartheta >0  \label{angles_1}
\end{equation}%
for all $k_{a}$ non proportional to $n_{a}$.

The angles are computed in the same way as before. Using orthonormal bases
of the spaces $\Phi _{L}^{\lambda _{i}\left( k\right) }$ and $\Phi
_{R}^{\lambda _{i}\left( k\right) }$ and building a new matrix $\left(
T^{\tau _{i}\left( k\right) }\right) _{~j}^{I}$ resulting of contracting
that bases with the metric. 
Notice that this matrix will in general be rectangular,
since any base of $\Phi _{L}^{\lambda _{i}\left( k\right) }$ has $r_{\lambda
_{i}\left( k\right) }$ or more vectors.

Assuming that the conditions of theorem \ \ref{Theorem_FL} holds, we shall
show how to build reductions $h_{~A}^{\alpha}$ in such a way that the
degeneracy of the generalized eigenvalues does not change, and the new
eigenvalues introduced by $h_{~A}^{\alpha}$ can be chosen to be simple
and different to the ones of the whole system. 
These reduction will comply with the the
hypothesis of theorem \ref{Theorem_1} and so conclude the proof.

%\%\%\%\%\%\%\%\%\%\%\%\%\%\%\%\%\%\%\%\%\%\%\%\%\%\%

\section{\label{Proof_theorems} Theorem Proof}

The proof of the theorem is split into five subsection. First we introduce,
in subsection \ref{subsec_Kronecker_decomp}, the Kronecker decomposition of
pencils, this decomposition will be applied to the principal symbol and
later used to build certain reductions. Next, in subsection \ref%
{Necessary_condition_1}, after introducing notation about basis of the
corresponding subspaces of left and right kernels of the principal symbol, we
proof Lemma \ref{Lemma_necessary_condition} which connects the hypothesis
in our theorem with the necessary condition in \cite{abalos2017necessary}.
In subsection \ref{Proof_1}, we proof Lemma \ref{main_lema} which gives a
set of equivalent necessary and sufficient conditions for the existence of
reductions $h_{~A}^{\alpha }$ which gives diagonalizable$A_{~\gamma
}^{\alpha a}k_{a}$, and explicitly display all possible reductions. In
subsection \ref{theo_well_posed} we complete the proof by showing, in Lemma %
\ref{lemma_2}, that it is possible to find reductions such that the extra
eigenvalues of $A_{~\gamma }^{\alpha a}k_{a}$, the propagation velocities of
the constraints, are simple and different to the physical propagations
velocities. Finally in subsection \ref{theo_well_posed_II}, we further
restrict the reductions, $h_{~A}^{\alpha }$, so that the angles between $%
\Phi _{L}^{\lambda _{i}\left( k\right) }$, $\Phi _{R}^{\lambda _{i}\left(
k\right) }$ and $\Upsilon _{L}^{\lambda _{i}\left( k\right) }$, $\Phi
_{R}^{\lambda _{i}\left( k\right) }$ are equal. Applying theorem \ref%
{Theorem_1} to the reduced system we conclude the proof.

%\%\%\%\%\%\%\%\%\%\%\%\%\%\%\%\%\%\%\%\%\%\%\%\%\%\%
\subsection{Kronecker decomposition of pencils\label{subsec_Kronecker_decomp}}

%\%\%\%\%\%\%\%\%\%\%\%\%\%\%\%\%\%\%\%\%\%\%\%\%\%\%

Consider now the principal symbol matrix pencil equation 
\begin{equation}
\mathfrak{N}_{~\eta }^{Ab}l\left( \lambda \right) _{b}=\lambda \left( -%
\mathfrak{N}_{~\eta }^{Ab}n_{b}\right) +\left( \mathfrak{N}_{~\eta
}^{Ab}k_{b}\right)   \label{symbol_2}
\end{equation}

for fixed $n_{a}$ and $k_{a}$ in the line $l\left( \lambda \right) _{b}\in
S_{n_{a}}$. The intrinsic structure of this matrix pencil, at each one of
these points, determines the strong hyperbolicity of the system. This
structure will become apparent via the Kronecker decomposition \cite%
{gantmacher1992theory}, \cite{gantmakher1998theory}. It consists on change
of bases for the field and equation spaces, which depend on $k_{a}$, $n_{a}$%
, but are independent of the parameter $\lambda $.~\footnote{%
Actually, since the symbol is linear in $k_{a}$ the basis can only
depend on the direction of $k_{a}$ and not on its magnitude, that is
the reason why we only consider lines where $k_{a}$ is of unit length
for some arbitrary metric.} The new bases transform the symbol (\ref%
{symbol_2}) into simple blocks. In this form, the study of right and left
kernel became easy.

Since the hyperbolicity condition restrict $\mathfrak{N}_{~\eta }^{Ab}n_{b}$
to have no right kernel, the allowed blocks of the Kronecker decomposition
of (\ref{symbol_2}) simplifies into just two types of possible blocks,
namely:

a) $J_{m}\left( \lambda_{i}\right) -$Jordan blocks:

\begin{equation*}
J_{m}\left( \lambda_{i}\right) =\left( 
\begin{array}{cccc}
\lambda-\lambda_{i} & 1 & 0 & 0 \\ 
0 & \lambda-\lambda_{i} & ... & 0 \\ 
0 & 0 & ... & 1 \\ 
0 & 0 & 0 & \lambda-\lambda_{i}%
\end{array}
\right) \in \mathbb{C} ^{m\times m}
\end{equation*}
with $\lambda_{i}$ the generalized eigenvalues introduced in section \ref%
{Strongly};

b) $L_{m}^{T}-$ blocks:

\begin{equation*}
L_{m}^{T}=\left( \overset{m}{\overbrace{%
\begin{array}{cccc}
\lambda & 0 & 0 & 0 \\ 
1 & \lambda & 0 & 0 \\ 
0 & 1 & ... & 0 \\ 
0 & 0 & ... & \lambda \\ 
0 & 0 & 0 & 1%
\end{array}
}}\right) \in%
%TCIMACRO{\U{2102} }%
%BeginExpansion
\mathbb{C}
%EndExpansion
^{m+1\times m}
\end{equation*}

, the $L_{0}^{T}-$ case are vanishing rows.

As an example 
\begin{equation}
\mathfrak{N}_{~\eta }^{Ab}l\left( \lambda \right) _{b}=\left( 
\begin{array}{cccc}
J_{1}\left( \lambda _{1}\right)  & 0 & 0 & 0 \\ 
0 & J_{3}\left( \lambda _{2}\right)  & 0 & 0 \\ 
0 & 0 & L_{2} & 0 \\ 
0 & 0 & 0 & L_{1} \\ 
0 & 0 & 0 & 0 \\ 
0 & 0 & 0 & 0%
\end{array}%
\right) \in \mathbb{C}^{11\times 7}  \label{example_1}
\end{equation}%
%\bigskip 
%
In general the Kronecker decomposition includes other blocks (see appendix %
\ref{appendix_II}), which do not appear here.

The Kronecker structure of a particular symbol is unique, however, in
general, there will exist different bases that lead to it. In appendix \ref%
{appendix}\ we shall show how to find the different blocks of $\mathfrak{N}%
_{~\eta }^{Ab}l\left( \lambda \right) _{b}$.

It is important to notice that most physical systems have only $L_{1}^{T}$-blocks and $L_{0}^{T}$-rows. That is the case for instance for Maxwell Electrodynamics, Non
linear electrodynamics, Force Free electrodynamics, ideal, charged and
conformal fluids, etc. Indeed, all constraints in Geroch's sense \cite{geroch1996partial} are related just to $L_{0}^{T}$-rows and $L_{1}^{T}$-blocks. 
%%%%%%%%%%%%%%%%%%%%%%%%%%%%%%%%%%%%%%%%%%%%%%%%%%%%%%%%%%%%%%%%%%%%%%%%%%%%%%%%%%%%%%%%%%%%%%%%%%%%%%%%%%%%%%%%%%%%%%%%%%%%%%%%%%%%%%%%%%%%%%%%%%%%%%%%%%%%%%%%%%%%%%%%%%%

\subsection{Necessary condition for strong hyperbolicity\label{Necessary_condition_1}}

Let be $l\left( \lambda \right) _{b}=-\lambda n_{a}+k_{a}\in S_{n_{a}}$, the
Kronecker decomposition asserts that the subspace $\Psi _{L}$ of left kernel
of the principal symbol $\mathfrak{N}_{~\eta }^{Bb}l\left( \lambda \right)
_{b}$ is expanded by a set of $e-u$ unique vectors $\chi _{A}^{s}\left(
\lambda \right) $ where $s\in C_{\chi }=\left\{ 1,..,e-u\right\} $ for any $%
\lambda $, and it increases when $\lambda =\lambda _{i}\left( k\right) $. We
shall choose a set of arbitrary $r_{\lambda _{i}\left( k\right) }$ new
co-vectors $\left( \upsilon _{\lambda _{i}}^{l}\right) _{A}$ with $l\in
I_{\lambda _{i}\left( k\right) }:=\left\{ 1,...,r_{\lambda _{i}\left(
k\right) }\right\} $ to complete a base for $\Psi _{L}^{\lambda _{i}}$. We
shall refer to $\left( \upsilon _{\lambda _{i}}^{l}\right) _{A}$ as the
generalized eigen-covectors. Thus, for each $\lambda =\lambda _{i}\left(
k\right)$, 
\begin{equation}
\Psi _{L}^{\lambda _{i}\left( k\right) }=span\{(\chi _{\lambda _{i}}^{s})_{A}%
\text{, }(\upsilon _{\lambda _{i}}^{l})_{A}\}\text{ with }s\in C_{\chi }%
\text{ and }l\in I_{\lambda _{i}\left( k\right) }  \label{left_kernel_1}
\end{equation}
where $\left( \chi _{\lambda _{i}}^{s}\right) _{A}:=\chi _{A}^{s}\left(
\lambda _{i}\left( k\right) \right) $.

On the other hand, $\mathfrak{N}_{~\eta }^{Bb}l\left( \lambda \right) _{b}$
only have right kernel when $\lambda =\lambda _{i}\left( k\right) $, this
subspace $\Phi _{R}^{\lambda _{i}\left( k\right) }$ is expanded by 
\begin{equation}
\Phi _{R}^{\lambda _{i}\left( k\right) }=span\{(\delta \phi _{j}^{\lambda
_{i}})^{\alpha }\}\text{ with \ }j\in I_{\lambda _{i}\left( k\right) }
\end{equation}%
Notice that $\dim \Psi _{L}^{\lambda _{i}\left( k\right) }=e-u+r_{\lambda
_{i}\left( k\right) }$ and $\dim \Phi _{R}^{\lambda _{i}\left( k\right)
}=r_{\lambda _{i}\left( k\right) }$.

As in \cite{abalos2017necessary} we shall now look at the Singular Value
Decomposition (SVD) of the principal symbol at $\lambda =\lambda _{i}$. For
this we introduce two scalar products shall use the positive definite Hermitian forms $G_{AB}$ and 
$G_{\alpha \gamma }$ on each of the spaces. From the discussion above there
will be $r_{\lambda _{i}\left( k\right) }$ vanishing singular values,%
\begin{equation*}
\sigma _{u+1-j}\left[ \mathfrak{N}_{~\alpha }^{Ab}l\left( \lambda _{i}\left(
k\right) \right) _{a}\right] =0
\end{equation*}%
with~\footnote{%
Recall that the singular values are ordered in such a way that $\sigma _{1}%
\left[ \mathfrak{N}_{~\alpha }^{Ab}n\left( \lambda _{i}\left( k \right)
\right) _{a}\right] \geq \sigma _{2}\left[ \mathfrak{N}_{~\alpha
}^{Ab}n\left( \lambda _{i}\left( k \right) \right) _{a}\right] \geq
...\geq \sigma _{u}\left[ \mathfrak{N}_{~\alpha }^{Ab}n\left( \lambda
_{i}\left( k \right) \right) _{a}\right] $.} $j\in $ $I_{\lambda
_{i}\left( k\right) }$.

Consider now the extended two-parameter line $l_{\varepsilon ,\theta }\left(
\lambda \right) _{a}=-\varepsilon e^{i\theta }n_{a}+l\left( \lambda \right)
_{a}$ with $\varepsilon $ real, $\theta \in \left[ 0,2\pi \right] $, $0\leq
\left\vert \varepsilon \right\vert <<1$. \ and with $l\left( \lambda \right)
_{a}\in S_{n_{a}}.$ As it is shown in \cite{moro2002first}\ and \cite%
{soderstrom1999perturbation} 
\begin{equation}
\sigma _{u+1-j}\left[ \mathfrak{N}_{~\alpha }^{Ab}l_{\varepsilon ,\theta
}\left( \lambda _{i}\left( k\right) \right) _{a}\right] =\left( \rho
_{\lambda _{i}\left( k\right) }\right) _{j}\varepsilon +O\left( \varepsilon
^{2}\right)   \label{chi_eq_2}
\end{equation}%
where \ $\left( \rho _{\lambda _{i}\left( k\right) }\right) _{j}$ are the
singular values of the matrix 
\begin{equation*}
\left( R^{\lambda _{i}\left( k\right) }\right) _{~m}^{I}=\left( 
\begin{array}{c}
\left( \hat{\upsilon}_{\lambda _{i}}^{l}\right) _{A} \\ 
\left( \hat{\chi}_{\lambda _{i}}^{s}\right) _{A}%
\end{array}%
\right) \mathfrak{N}_{~\alpha }^{Ab}n_{b}\left( \widetilde{\delta \phi }%
_{m}^{\lambda _{i}}\right) ^{\alpha }
\end{equation*}%
with $I=\left( l,s\right) $, and where $\left( \hat{\upsilon}_{\lambda
_{i}}^{l}\right) _{A}$, $\left( \hat{\chi}_{\lambda _{i}}^{s}\right) _{A}$
is an orthonormalized base with respect to the metric $G^{AB}$, of $\Psi
_{L}^{\lambda _{i}\left( k\right) }$ and $\left( \widetilde{\delta \phi }%
_{m}^{\lambda _{i}}\right) ^{\alpha }$ is a orthonormalized base, with
respect to the metric $G_{\alpha \gamma }$, of $\Phi _{R}^{\lambda
_{i}\left( k\right) }$.

In \cite{abalos2017necessary} it was shown that a necessary condition for
the existence of a reduction is that the singular values of $\mathfrak{N}%
_{~\alpha }^{Ab}l_{\varepsilon ,\theta }\left( \lambda _{i}\left( k\right)
\right) _{a}$ are either of order $O\left( \varepsilon ^{0}\right) $ or $%
O\left( \varepsilon ^{1}\right) $ which is equivalent to the condition that
none of singular values $\left( \rho _{\lambda _{i}\left( k\right) }\right)
_{j}$ of $R^{\lambda _{i}\left( k\right) }$ vanish. In the following lemma
we shall proof that condition (\ref{angles_1}) in theorem \ref{Theorem_FL} \
implies that $\left( \rho _{\lambda _{i}\left( k\right) }\right) _{j}>0$ and
so that the necessary condition in \cite{abalos2017necessary} holds.

\begin{lemma}
\label{Lemma_necessary_condition}For all $i\in D_{\left( k\right) }$, $j\in
I_{\lambda _{i}\left( k\right) }$, with $k_a$ not
proportional to $n_{a}$ and $\left\vert k\right\vert =1$ , if \ equation \ref{angles_1} holds, then $\left(
\rho _{\lambda _{i}\left( k\right) }\right) _{j}>0$.
\end{lemma}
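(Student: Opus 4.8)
The plan is to show that, for each fixed $k_a$ (not proportional to $n_a$, $|k|=1$) and each $i\in D_{(k)}$, the right kernel of the rectangular matrix $R^{\lambda_i(k)}$ coincides, under the identification of $\mathbb{C}^{r_{\lambda_i(k)}}$ with $\Phi_R^{\lambda_i(k)}$ furnished by the basis $(\widetilde{\delta\phi}_m^{\lambda_i})^\alpha$, with the right kernel of the matrix $T$ of (\ref{eq_Tij_1}) built from $\Phi_L^{\lambda_i(k)}$ and $\Phi_R^{\lambda_i(k)}$. Since (\ref{angles_1}) forces all singular values of the latter matrix, namely the cosines $\cos\theta_j^{\lambda_i(k)}$, to be at least $\cos\vartheta>0$, the matrix $T$ has trivial right kernel, hence so does $R^{\lambda_i(k)}$, and therefore none of its $r_{\lambda_i(k)}$ singular values $(\rho_{\lambda_i(k)})_j$ can vanish.

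The first step is to unwind $\ker R^{\lambda_i(k)}$. A vector $v=(v^m)$, $m\in I_{\lambda_i(k)}$, lies in $\ker R^{\lambda_i(k)}$ iff $(\hat\upsilon_{\lambda_i}^l)_A\,\mathfrak{N}^{Aa}_{~\alpha}n_a\,\delta\psi^\alpha=0$ and $(\hat\chi_{\lambda_i}^s)_A\,\mathfrak{N}^{Aa}_{~\alpha}n_a\,\delta\psi^\alpha=0$ for all $l,s$, where $\delta\psi^\alpha:=(\widetilde{\delta\phi}_m^{\lambda_i})^\alpha v^m\in\Phi_R^{\lambda_i(k)}$. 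Because $\{(\hat\upsilon_{\lambda_i}^l)_A,(\hat\chi_{\lambda_i}^s)_A\}$ is a basis of $\Psi_L^{\lambda_i(k)}$, this is equivalent to $X_A\,\mathfrak{N}^{Aa}_{~\alpha}n_a\,\delta\psi^\alpha=0$ for every $X_A\in\Psi_L^{\lambda_i(k)}$. By the definition of $(\Phi_L^{\lambda_i(k)})'$ as the image of $\Psi_L^{\lambda_i(k)}$ under $\mathfrak{N}^{Aa}_{~\alpha}n_a$, and of $\Phi_L^{\lambda_i(k)}\subset\Phi_R$ as the subspace obtained by raising that index with $G^{\alpha\gamma}$, the numbers $X_A\,\mathfrak{N}^{Aa}_{~\alpha}n_a\,\delta\psi^\alpha$ exhaust, as $X$ runs over $\Psi_L^{\lambda_i(k)}$, the values $G_{\alpha\gamma}\,\hat\omega^\gamma\,\delta\psi^\alpha$ with $\hat\omega$ running over $\Phi_L^{\lambda_i(k)}$. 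Hence $v\in\ker R^{\lambda_i(k)}$ iff $\delta\psi$ is $G$-orthogonal to all of $\Phi_L^{\lambda_i(k)}$.

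Running the identical computation for $T$, whose rows come from an orthonormal basis of $\Phi_L^{\lambda_i(k)}$ and whose columns come from the orthonormal basis $(\widetilde{\delta\phi}_m^{\lambda_i})^\alpha$ of $\Phi_R^{\lambda_i(k)}$, gives $Tv=0$ iff the same $\delta\psi$ is $G$-orthogonal to $\Phi_L^{\lambda_i(k)}$; thus $\ker R^{\lambda_i(k)}=\ker T$ inside $\mathbb{C}^{r_{\lambda_i(k)}}$. Now $T$ has $r_{\lambda_i(k)}$ columns and, since $\dim\Phi_R^{\lambda_i(k)}=r_{\lambda_i(k)}\le\dim\Phi_L^{\lambda_i(k)}$, exactly $r_{\lambda_i(k)}$ principal angles, so its $r_{\lambda_i(k)}$ singular values are precisely $\cos\theta_j^{\lambda_i(k)}$, $j\in I_{\lambda_i(k)}$; hypothesis (\ref{angles_1}) gives $\cos\theta_j^{\lambda_i(k)}\ge\cos\vartheta>0$ for all $j$, so $T$ has full column rank and $\ker T=\{0\}$. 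Therefore $\ker R^{\lambda_i(k)}=\{0\}$, i.e. $R^{\lambda_i(k)}$ has full column rank, and all of its singular values $(\rho_{\lambda_i(k)})_j$ are strictly positive. The one point requiring care is the bookkeeping of the duals and of the index raising in the middle step — in particular that $X_A\,\mathfrak{N}^{Aa}_{~\alpha}n_a\,\delta\psi^\alpha$ really is the $G$-pairing of the raised covector with $\delta\psi\in\Phi_R$, and that the orthogonality is taken in the whole of $\Phi_R$, not only in $\Phi_R^{\lambda_i(k)}$; the remaining ingredients are the standard facts that a matrix has all singular values positive iff its right kernel is trivial, and that the cosines of the principal angles between two subspaces are the singular values of the corresponding Gram matrix.
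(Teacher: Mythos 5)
Your proof is correct and follows essentially the same route as the paper's: both arguments reduce the claim to showing that the matrix $R^{\lambda _{i}\left( k\right) }$ has trivial right kernel because the matrix $T^{\lambda _{i}\left( k\right) }$ of equation (\ref{abc}) does, the latter being guaranteed by the lower bound (\ref{angles_1}) on the cosines of the principal angles. The only difference is bookkeeping — the paper links $R$ to $T$ through the explicit change-of-basis factorization $T=QR$ (treating the linearly dependent spanning set as a separate case), whereas you identify both kernels with the single condition that $\delta \psi $ be $G$-orthogonal to $\Phi _{L}^{\lambda _{i}\left( k\right) }$, which handles that case automatically.
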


\begin{proof}
Recalling that $\Phi _{L}^{\lambda _{i}\left( k\right) }$ is the map of $%
\Psi _{L}^{\lambda _{i}\left( k\right) }$ into $\Phi _{R}$ by $\mathfrak{N}%
_{~\alpha }^{Aa}n_{a}G^{\alpha \gamma }$, so a particular set that
spans this subspace is $\left( \hat{\chi}_{\lambda _{i}}^{s}\right) ^{\gamma
}:=\left( \hat{\chi}_{\lambda _{i}}^{s}\right) _{A}\mathfrak{N}_{~\alpha
}^{Aa}n_{a}G^{\alpha \gamma }$ and $\left( \hat{\upsilon}_{\lambda
_{i}}^{l}\right) ^{\gamma }:=\left( \hat{\upsilon}_{\lambda _{i}}^{l}\right)
_{A}\mathfrak{N}_{~\alpha }^{Aa}n_{a}G^{\alpha \gamma }$ i.e.
\begin{equation}
\Phi _{L}^{\lambda _{i}\left( k\right) }=span\{\left( \hat{\chi}_{\lambda
_{i}}^{s}\right) ^{\gamma }\text{, }\left( \hat{\upsilon}_{\lambda
_{i}}^{l}\right) ^{\gamma }\}\text{ with }s\in C_{\chi }\text{ and }l\in
I_{\lambda _{i}\left( k\right) }.
\end{equation}
Notice that there might be linearly dependent vectors among the $\left( \hat{%
\chi}_{\lambda _{i}}^{s}\right) ^{\gamma }$ and $\left( \hat{\upsilon}_{\lambda
_{i}}^{l}\right) ^{\gamma }$,   %, so some of them shall be
%removed until obtaining a base. 
 however, consider first the case that they are linear independent.
To calculate the angles $\theta
_{k}^{\lambda _{i}\left( k\right) }$ between $\Phi _{L}^{\lambda _{i}\left(
k\right) }$ and $\Phi _{R}^{\lambda _{i}\left( k\right) }$ we need to use
orthonormalized basis on these subspaces. Calling $Q^{J}{}_{I}$, with $%
J=\left( s,l\right) $ and $I=\left( m,n\right) $, to the square matrix that
connects the basis $\left( \hat{\chi}_{\lambda _{i}}^{s}\right) ^{\gamma }$, 
$\left( \hat{\upsilon}_{\lambda _{i}}^{l}\right) ^{\gamma }$ with a new
orthonormalized basis $\left( \tilde{\chi}_{\lambda _{i}}^{s}\right)
^{\gamma }$, $\left( \tilde{\upsilon}_{\lambda _{i}}^{l}\right) ^{\gamma }$,
in the metric $G_{\alpha \gamma }$, of $\Phi _{L}^{\lambda _{i}\left(
k\right) }$,

\begin{equation*}
\left( 
\begin{array}{c}
\left( \tilde{\chi}_{\lambda _{i}}^{s}\right) ^{\gamma } \\ 
\left( \tilde{\upsilon}_{\lambda _{i}}^{l}\right) ^{\gamma }%
\end{array}%
\right) =Q_{~I}^{J}\left( 
\begin{array}{c}
\left( \hat{\chi}_{\lambda _{i}}^{m}\right) ^{\gamma } \\ 
\left( \hat{\upsilon}_{\lambda _{i}}^{n}\right) ^{\gamma }%
\end{array}%
\right) 
\end{equation*}%
The cosines of the angles $\cos \theta _{k}^{\lambda _{i}\left( k\right) }$
between $\Phi _{L}^{\lambda _{i}\left( k\right) }$ and $\Phi _{R}^{\lambda
_{i}\left( k\right) }$ are the singular values of the matrix%
\begin{equation} 
\left( T^{\lambda _{i}\left( k\right) }\right) _{~j}^{J}=\left( 
\begin{array}{c}
\left( \tilde{\chi}_{\lambda _{i}}^{s}\right) ^{\gamma } \\ 
\left( \tilde{\upsilon}_{\lambda _{i}}^{l}\right) ^{\gamma }%
\end{array}%
\right) G_{\gamma \eta }\left( \widetilde{\delta \phi }_{m}^{\lambda
_{i}}\right) ^{\eta }=Q_{~I}^{J}\left( R^{\lambda _{i}\left( k\right)
}\right) _{~j}^{I} \label{abc}  
\end{equation}%
Since the singular values of $T^{\tau _{i}\left( k\right) }$ do not vanish
by hypothesis (eq. (\ref{angles_1})), then 
 $R^{\lambda _{i}\left( k\right) }$ has not right kernel and their singular values  can not vanish. 

In the case of the $\left( \hat{%
\chi}_{\lambda _{i}}^{s}\right) ^{\gamma }$ and $\left( \hat{\upsilon}_{\lambda
_{i}}^{l}\right) ^{\gamma }$ are linear dependent, some of them shall be
removed until obtaining a base for define $\left( T^{\lambda _{i}\left( k\right) }\right) _{~j}^{J}$. Next,  there exists a rectangular matrix $Q_{~I}^{J}$ such that equation
(\ref{abc}) holds and the conclusion is the same. 
Thus,
we conclude the proof of the lemma.
\end{proof}

%\%\%\%\%\%\%\%\%\%\%\%\%\%\%\%\%\%\%\%\%\%\%\%\%\%\%
%\%\%\%\%\%\%\%\%\%\%\%\%\%\%\%\%\%\%\%\%\%\%\%\%\%\%

\subsection{\label{hyperbolizer}Building reductions\label{Proof_1}}

%\%\%\%\%\%\%\%\%\%\%\%\%\%\%\%\%\%\%\%\%\%\%\%\%\%\%
In this subsection we proof a lemma which gives a set of equivalent
conditions and furthermore shows how to build, using the Kronecker
decomposition of the principal part, the general reduction $h_{A}^{\alpha }$
giving a diagonalizable reduced matrix. It is important to notice that if any $J_{m}$
Jordan block, with $m\geq 2$ appears in the Kronecker decomposition, then
the system is intrinsically weakly hyperbolic. However, if that blocks do not appear, this condition it is
not sufficient for strong hyperbolicity since two problems can be present.
The first one is that a reduction can introduced a $J_{m}$ Jordan block with $m$ 
$\geq 2$, from a $L^{T}$ block. This $L^{T}$ block will be associated to constraints
propagation, reducing the system to a weakly hyperbolic one. This would give
an ill posed subsidiary system for the constraint propagation. The second
one, is that of a reduction for which $A_{~\gamma }^{\alpha a}k_{a}$ is
diagonalizable, but not uniformly diagonalizable, then the systems will also
be ill posed. To solve these problems we shall use, in the next subsections,
the results of Lemma \ref{lemma_2} and the lower bound condition eq. (\ref%
{angles_1}).

\begin{lemma}
\label{main_lema}Let system (\ref{sht_1}) be hyperbolic for $n_{a}$, then
the following conditions are equivalent: \ For each line $l\left( \lambda
\right) _{b}=-\lambda n_{a}+k_{a}$ in $S_{n_{a}}$

$i)$ There exists a reduction $h_{A}^{\alpha }$, homogeneous of degree $0$
in $k_{a}$, such that $A_{~\gamma }^{\alpha a}k_{a}$ is diagonalizable.

$ii)$ The Kronecker Decomposition of the principal symbol pencil, (\ref%
{symbol_2}) has all their Jordan blocks of dimension $1 $.

$iii)$ The Singular Value Decomposition of the principal symbol pencil \newline $%
\mathfrak{N}_{~\alpha }^{Aa}l_{\varepsilon ,\theta }\left( \lambda \right)
_{a}$ along any extended line $l_{\varepsilon ,\theta }\left( \lambda
\right) _{a}$ has only singular values of orders $O\left( \left\vert
\varepsilon \right\vert ^{0}\right) $ and $O\left( \left\vert \varepsilon
\right\vert ^{1}\right) $ i.e. $\left( \rho _{\lambda _{i}\left( k\right)
}\right) _{j}>0$ for all $i\in D_{\left( k\right) },\text{ and }j\in
I_{\lambda _{i}\left( k\right) }$.
\end{lemma}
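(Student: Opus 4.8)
The plan is to establish $i)\Leftrightarrow ii)$ and $ii)\Leftrightarrow iii)$, working throughout at the fixed line $l(\lambda)_b=-\lambda n_a+k_a$ in $S_{n_a}$ and in the Kronecker basis. Write $P\,\mathfrak{N}_{~\eta}^{Ab}l(\lambda)_b\,Q=K(\lambda)$ for the Kronecker form of the pencil (\ref{symbol_2}), with $P$ of size $e\times e$ and $Q$ of size $u\times u$ invertible and independent of $\lambda$, and $K(\lambda)=\lambda K_1+K_0$ block-diagonal with only $J_m(\lambda_i)$ blocks and $L_m^T$ blocks (the trivial $L_0^T$ being zero rows), by hyperbolicity. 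Given a reduction $h$, the reduced pencil $h\,\mathfrak{N}_{~\eta}^{Ab}l(\lambda)_b$ equals $\tilde h\,K(\lambda)\,Q^{-1}$ with $\tilde h:=hP^{-1}$; right multiplication by the invertible $Q^{-1}$ does not change the eigenstructure of $A_{~\gamma}^{\alpha a}k_a$, and invertibility of $h_{~A}^{\alpha}\mathfrak{N}_{~\alpha}^{Aa}n_a$ is invertibility of $\tilde hK_1$, so I study the $u\times u$ pencil $\tilde hK(\lambda)$ directly. Its partial multiplicities at any point are exactly the Jordan block sizes of a matrix similar to $A_{~\gamma}^{\alpha a}k_a$ at the corresponding eigenvalue, so ``$A$ diagonalizable'' means ``$\tilde hK(\lambda)$ has only linear elementary divisors''.

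For $ii)\Rightarrow i)$ I build $\tilde h$ block by block. On a $J_1(\lambda_i)$ block (the $1\times1$ pencil $\lambda-\lambda_i$) let $\tilde h$ act as the identity; it contributes the real eigenvalue $\lambda_i$. On an $L_m^T$ block, written $\lambda E+F$ with $E,F\in\mathbb{C}^{(m+1)\times m}$, take the $m\times(m+1)$ reduction block $R=(\,I_m\mid r\,)$; then $RE=I_m$ is invertible and $R(\lambda E+F)=\lambda I_m+RF$, where $RF$ is a companion-type matrix whose last column is the free vector $r$. Choosing $r$, block by block, so that the characteristic polynomials of these companion matrices have only real roots, all of them simple and mutually distinct and distinct from every $\lambda_i$, the assembled $A_{~\gamma}^{\alpha a}k_a$ becomes block-diagonal with $u$ distinct real eigenvalues, hence diagonalizable; and $\tilde hK_1$ is block-diagonal with invertible blocks $1$ and $I_m$, so $h_{~A}^{\alpha}\mathfrak{N}_{~\alpha}^{Aa}n_a$ is invertible. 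The resulting $h$ is homogeneous of degree $0$ in $k_a$: $\tilde h$ is a fixed numerical matrix and the Kronecker change of basis $P$ depends only on the direction of $k_a$. (Trivial $L_0^T$ rows carry no column and are simply dropped.) This construction also supplies the reductions refined in the later subsections.

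For $i)\Rightarrow ii)$ I argue by contraposition. Suppose $K(\lambda)$ contains a block $J_m(\lambda_0)$ with $m\ge2$ and let $h$ be any reduction with $\tilde hK_1$ invertible. Inside that block $K_1$ restricts to $I_m$ and $K(\lambda_0)$ to the nilpotent $N$, so there are vectors $v_1,\dots,v_m$ supported on the columns of the block with $K(\lambda_0)v_1=0$ and $K(\lambda_0)v_{j}=-K_1v_{j-1}$ for $2\le j\le m$ (explicitly $v_j=(-1)^{j+1}e_j$). Because a reduction acts only on the rows, applying $\tilde h$ preserves these identities: $\tilde hK(\lambda_0)v_1=0$ and $\tilde hK(\lambda_0)v_{j}=-(\tilde hK_1)v_{j-1}$, that is, $v_1,\dots,v_m$ is a Jordan chain of length $m\ge2$ for the regular pencil $\tilde hK(\lambda)$ at $\lambda_0$; since $\tilde hK_1$ is invertible, $-(\tilde hK_1)v_{j-1}\neq0$, so the chain does not collapse and the partial multiplicity at $\lambda_0$ is at least $2$. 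Hence $A_{~\gamma}^{\alpha a}k_a$ is not diagonalizable, contradicting $i)$.

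Finally, for $ii)\Leftrightarrow iii)$ I evaluate the matrix $R^{\lambda_i(k)}$ appearing just after (\ref{chi_eq_2}) in the Kronecker basis. There $\Phi_R^{\lambda_i(k)}$ is spanned by the first basis vector $e_1$ of each $J$-block with eigenvalue $\lambda_i$, and $\Psi_L^{\lambda_i(k)}$ by the last basis vector $e_m$ of each such block (the generalized eigen-covectors $\upsilon^l$) together with the $e-u$ covectors $\chi^s$ supported on the $L^T$-block rows; the pairing defining $R^{\lambda_i(k)}$ is through $\mathfrak{N}_{~\alpha}^{Aa}n_a$, i.e. through $K_1$, which is block-diagonal. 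Consequently every cross-block entry vanishes and the only surviving entries are the within-block pairings $e_m^{T}I_m e_1=\delta_{m1}$; so, up to fixed invertible changes of basis (the Kronecker transformation and the orthonormalization in the metrics), $R^{\lambda_i(k)}$ has full column rank if and only if every $J$-block with eigenvalue $\lambda_i$ has size $1$. Since by construction every Jordan block's eigenvalue is some generalized eigenvalue $\lambda_i$, $i\in D_{(k)}$, and since the $(\rho_{\lambda_i(k)})_j$ are precisely the singular values of $R^{\lambda_i(k)}$, condition $iii)$ holds iff all Jordan blocks of $K(\lambda)$ have dimension $1$, which is $ii)$. The main work is the explicit block construction of $ii)\Rightarrow i)$ together with careful bookkeeping of the pencil sign conventions; the Jordan-chain lifting and the rank computation are short once the set-up is in place.
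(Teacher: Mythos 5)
Your proof is correct, but it follows a genuinely different route from the paper's on two of the three legs. For $ii)\Rightarrow i)$ the paper does not aim merely at diagonalizability: it constructs a Hermitian form $H_{\delta C}$ (identity-type blocks for the $J_{1}$ parts and a Hankel ``cascade'' matrix $H_{L_{m}^{T}}$, shown positive definite by induction via Sylvester's criterion) so that the reduced symbol is \emph{symmetrizable}; this $H$ is then recycled in the later subsections to parametrize all reductions and to prove uniformity. Your companion-matrix construction $R=\left( I_{m}\mid r\right) $ with distinct real roots is more economical for the statement of Lemma~\ref{main_lema} itself and is essentially what the paper does only afterwards, in Lemma~\ref{lemma_2}, to make the constraint speeds simple; you lose the symmetrizer but the lemma does not require it. For the reverse direction the paper closes the cycle as $ii)\Rightarrow i)$, $ii)\Leftrightarrow iii)$, $i)\Rightarrow iii)$, outsourcing $i)\Rightarrow iii)$ to \cite{abalos2017necessary}, whereas your contrapositive Jordan-chain argument ($K(\lambda_{0})v_{1}=0$, $K(\lambda_{0})v_{j}=-K_{1}v_{j-1}$ surviving left multiplication by $\tilde{h}$, with $\tilde{h}K_{1}$ invertible preventing collapse) proves $i)\Rightarrow ii)$ directly and self-containedly -- a genuine simplification. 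Your $ii)\Leftrightarrow iii)$ is the same computation as the paper's in different clothing: the paper invokes invariance of the order of perturbed singular values and evaluates block by block ($O(\varepsilon^{l})$ for a $J_{l}$ block, $O(\varepsilon^{0})$ for $L_{m}^{T}$), while you compute the rank of $R^{\lambda_{i}(k)}$ in the Kronecker basis; both reduce to the observation that a $J$-block of size $m\geq 2$ produces a zero column in the pairing through $K_{1}$. Only two cosmetic points: the assembled $A$ need not have $u$ \emph{distinct} eigenvalues, since distinct $J_{1}$ blocks may share a generalized eigenvalue of geometric multiplicity greater than one (harmless, as they contribute a diagonal part), and the rank argument should note, as you implicitly do, that orthonormalization only multiplies $R^{\lambda_{i}(k)}$ by invertible factors and so preserves full column rank.
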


\begin{proof}
The Kronecker decomposition  of the principal symbol \cite{gantmacher1992theory}, \cite%
{gantmakher1998theory} is 
\begin{align}
\mathfrak{N}_{~\eta }^{Ab}l\left( \lambda \right) _{b}& =\lambda \left( -%
\mathfrak{N}_{~\eta }^{Ab}n_{b}\right) +\left( \mathfrak{N}_{~\eta
}^{Ab}k_{b}\right)   \label{MR_Kro_1} \\
& =Y_{~B}^{A}\left( x,\phi ,n,k\right) K_{~\alpha }^{B}\left( \lambda
\right) W_{~\eta }^{\alpha }\left( x,\phi ,n,k\right)   \label{MR_Kro}
\end{align}%
where $l\left( \lambda \right) _{b}=-\lambda n_{b}+k_{b}\in S_{n_{a}},$ $Y$, 
$W$ are invertible operators and%
\begin{equation*}
K_{~\alpha }^{B}\left( \lambda \right) :=\lambda I_{~\alpha }^{B}+M_{~\alpha
}^{B}
\end{equation*}%
is the Kronecker matrix. It is a block matrix with $J_{m}\left( \lambda
_{i}\right) -$Jordan blocks, $L_{i}^{T}$ blocks and trivial $L_{0}^{T}-$
rows. The operators $I$ and $M$ are unique but in general they could change
for different values of $\left( n,k\right) $ \footnote{$\left( x,\phi
,n ,k \right) $ in the quasilinear case.}, although that does seems
not occur for the standard physical examples. However, in general $Y$ and $W$
could be chosen in different ways even at the same point. Notice that, 
\begin{equation*}
-\mathfrak{N}_{~\eta }^{Ab}n_{b}=Y_{~B}^{A}I_{~\alpha }^{B}W_{~\eta
}^{\alpha }\text{ and }\mathfrak{N}_{~\eta }^{Ab}k_{b}=Y_{~B}^{A}M_{~\alpha
}^{B}W_{~\eta }^{\alpha }.
\end{equation*}

$ii\Rightarrow i$

We propose the following ansatz for a reduction 
\begin{equation}
h_{~C}^{\rho}=S^{\rho\alpha}\overline{W_{~\alpha}^{\delta}}H_{\delta
C}\left( Y^{-1}\right) _{~D}^{C}   \label{h_1}
\end{equation}
with $S$ being any invertible bilinear form and $H$ another one that depends
on the explicit form of $K$, and which will be given explicitly later on.

With this ansatz the reduced system simplifies to:%
\begin{equation*}
h_{~C}^{\rho }\mathfrak{N}_{~\eta }^{Cb}l\left( \lambda \right) _{b}=\lambda
\left( S^{\rho \alpha }\overline{W_{~\alpha }^{\delta }}H_{\delta
C}I_{~\gamma }^{C}W_{~\eta }^{\gamma }\right) +\left( S^{\rho \alpha }%
\overline{W_{~\alpha }^{\delta }}H_{\delta C}M_{~\gamma }^{C}W_{~\eta
}^{\gamma }\right) 
\end{equation*}

Thus, assuming for a moment that:

\textbf{a)} $H_{\delta C}I_{~\alpha}^{C}$ is a positive definite Hermitian
form and,

\textbf{b)} $H_{\delta C}M_{~\alpha}^{C}$ is a Hermitian form,

We conclude that $\overline{W_{~\alpha }^{\delta }}H_{\delta C}I_{~\gamma
}^{C}W_{~\eta }^{\gamma }$, and $\overline{W_{~\alpha }^{\delta }}H_{\delta
C}M_{~\gamma }^{C}W_{~\eta }^{\gamma }$ are Hermitian forms being the first
positive definite. From this we have,%
\begin{align}
A_{~\eta }^{\upsilon a}k_{a}& =-\left( \left( S\overline{W_{~}}HIW\right)
^{-1}\right) _{~\tau }^{\upsilon }\left( S^{\tau \alpha }\overline{%
W_{~\alpha }^{\delta }}H_{\delta C}M_{~\gamma }^{C}W_{~\eta }^{\gamma
}\right)   \notag \\
& =-\left( W^{-1}\right) _{~\gamma }^{\upsilon }\left( \left( H_{\delta
C}I_{~\gamma }^{C}\right) ^{-1}\right) ^{\gamma \rho }H_{\delta C}M_{~\gamma
}^{C}W_{~\eta }^{\gamma }.  \label{A_1}
\end{align}%
Where $\left( \left( S\overline{W_{~}}HIW\right) ^{-1}\right) _{~\tau
}^{\upsilon }=\left( \left( S^{\tau \alpha }\overline{W_{~\alpha }^{\delta }}%
H_{\delta C}I_{~\gamma }^{C}W_{~\upsilon }^{\gamma }\right) ^{-1}\right)
_{~\tau }^{\upsilon }$. Thus, $A_{~\eta }^{\upsilon a}k_{a}$ is Hermitizable
(or symmetrizable), and therefore diagonalizable. Furthermore it has only
real eigenvalues. Notice that $S$ introduces more degrees of freedom in $%
h_{~C}^{\rho }$ that can be chosen arbitrarily as long as $S$ is invertible.
But they turn not to be relevant, since they do not appears in  $A_{~\eta }^{\upsilon a}k_{a}$.

Thus, if we find $H$ satisfying a) and b) the implication $ii\Rightarrow i$
will be proven.

Using the hypothesis $ii$ in theorem we shall conclude $i$ by building $%
H_{\delta C}$. We shall propose a specific $H$\footnote{The most general form of $H$, with a reduced principal symbol not necessarily diagonalizable, is presented in the ADM example.} for each blocks of $K_{~\eta
}^{A}\left( \lambda \right) $. Consider first the Jordan blocks. The$\
J_{m}\left( \lambda _{i}\right) -$Jordan blocks have kernel when $\lambda $
is equal to the generalized eigenvalues $\lambda _{i}$, then hyperbolicity
with respect to $n_{a}$ implies that these eigenvalues are real. In
addition, condition $ii$ implies these blocks are 1-dimensional, therefore
in $K_{~\eta }^{A}\left( \lambda \right) $ there appears $m\times m$
identity blocks multiplied by $\left( \lambda -\lambda _{i}\right) $ that we
call $Id_{m}\left( \lambda -\lambda _{i}\right) .$ For these blocks we
choose $H_{Id_{m}}$ to be any positive definite Hermitian form of size $%
m\times m$.

We now turn to the generic $\left( L_{m}^{T}\right) _{~j}^{i}$ blocks with $%
i=1,...,m+1$ and $j=1,....,m$. We propose a particular $\left(
H_{L_{m}^{T}}\right) _{si}$ with $s=1,...,m$ of the following form: 
\begin{equation}
\left( H_{L_{m}^{T}}\right) _{si}\left( L_{m}^{T}\right) _{~j}^{i}\left(
\lambda \right) =\left( 
\begin{array}{ccccc}
a_{1} & a_{2} & ... & a_{m} & a_{m+1} \\ 
a_{2} & ... & a_{m} & a_{m+1} & a_{m+2} \\ 
... & a_{m} & a_{m+1} & a_{m+2} & ... \\ 
a_{m} & a_{m+1} & a_{m+2} & ... & a_{2m}%
\end{array}%
\right) \left( 
\begin{array}{cccc}
\lambda  & 0 & 0 & 0 \\ 
1 & \lambda  & 0 & 0 \\ 
0 & 1 & ... & 0 \\ 
0 & 0 & ... & \lambda  \\ 
0 & 0 & 0 & 1%
\end{array}%
\right)   \label{HLm}
\end{equation}%
with all components real. \ Notice that $L_{m}^{T}$ can be split into $%
L_{m}^{T}=\left( \lambda I_{L_{m}^{T}}+M_{L_{m}^{T}}\right) $ with 
\begin{equation*}
I_{L_{m}^{T}}=\left( 
\begin{array}{cccc}
1 & 0 & 0 & 0 \\ 
0 & 1 & 0 & 0 \\ 
0 & 0 & ... & 0 \\ 
0 & 0 & 0 & 1 \\ 
0 & 0 & 0 & 0%
\end{array}%
\right) \text{ and }M_{L_{m}^{T}}=\left( 
\begin{array}{cccc}
0 & 0 & 0 & 0 \\ 
1 & 0 & 0 & 0 \\ 
0 & 1 & 0 & 0 \\ 
0 & 0 & ... & 0 \\ 
0 & 0 & 0 & 1%
\end{array}%
\right) .
\end{equation*}

Then%
\begin{align*}
\left( H_{L_{m}^{T}}\right) _{si}\left( L_{m}^{T}\right) _{~j}^{i}\left(
\lambda \right) = & \left( 
\begin{array}{ccccc}
a_{1} & a_{2} & ... & a_{m} & a_{m+1} \\ 
a_{2} & ... & a_{m} & a_{m+1} & a_{m+2} \\ 
... & a_{m} & a_{m+1} & a_{m+2} & ... \\ 
a_{m} & a_{m+1} & a_{m+2} & ... & a_{2m}%
\end{array}
\right) \left( \lambda I_{L_{m}^{T}}+M_{L_{m}^{T}}\right) \\
 =&\lambda\left( 
\begin{array}{cccc}
a_{1} & a_{2} & ... & a_{m} \\ 
a_{2} & ... & a_{m} & a_{m+1} \\ 
... & a_{m} & a_{m+1} & ... \\ 
a_{m} & a_{m+1} & ... & a_{2m-1}%
\end{array}
\right) +\left( 
\begin{array}{cccc}
a_{2} & ... & a_{m} & a_{m+1} \\ 
... & a_{m} & a_{m+1} & a_{m+2} \\ 
a_{m} & a_{m+1} & a_{m+2} & ... \\ 
a_{m+1} & a_{m+2} & ... & a_{2m}%
\end{array}
\right) \\
 = &\lambda\left( g_{m}\right) _{\delta\alpha}+\left( l_{m}\right)
_{\delta\alpha}
\end{align*}
with 
\begin{equation}
\left( g_{m}\right) _{\delta\alpha}=\left( 
\begin{array}{cccc}
a_{1} & a_{2} & ... & a_{m} \\ 
a_{2} & ... & a_{m} & a_{m+1} \\ 
... & a_{m} & a_{m+1} & ... \\ 
a_{m} & a_{m+1} & ... & a_{2m-1}%
\end{array}
\right) \text{ and }\left( l_{m}\right) _{\delta\alpha}=\left( 
\begin{array}{cccc}
a_{2} & ... & a_{m} & a_{m+1} \\ 
... & a_{m} & a_{m+1} & a_{m+2} \\ 
a_{m} & a_{m+1} & a_{m+2} & ... \\ 
a_{m+1} & a_{m+2} & ... & a_{2m}%
\end{array}
\right)   \label{g_l}
\end{equation}

Notice the cascade form of $H_{L_{m}^{T}}$ and that both, $g_{m}$, and $l_{m}
$ are symmetric. It is no so difficult to see that this is the most general
form of $H_{L_{m}^{T}}$ fulfilling that condition. To hold conditions a)
and b) we only need to show that $g_{m}$ can be chosen to be positive definite
by choosing appropriately the coefficients $a_{i}$ with $i=1,...,2m-1.$ We
do this by induction in $m$. When $m=1$ the positivity condition is just $%
a_{1}>0$. Assuming the inductive hypothesis: $g_{m}$ is positive definite
(as in eq. (\ref{g_l})), we enlarge the Hermitian form to $g_{m+1}$ by
adding a new column and a new row 
\begin{equation*}
\left( g_{m+1}\right) _{\delta\alpha}=\left( 
\begin{array}{ccccc}
a_{1} & a_{2} & ... & a_{m} & a_{m+1} \\ 
a_{2} & ... & a_{m} & a_{m+1} & ... \\ 
... & a_{m} & a_{m+1} & ... & a_{2m-1} \\ 
a_{m} & a_{m+1} & .... & a_{2m-1} & a_{2m} \\ 
a_{m+1} & .... & a_{2m-1} & a_{2m} & a_{2m+1}%
\end{array}
\right) . 
\end{equation*}

Thus there appear just two new coefficients $a_{2m}$ and $a_{2m+1}$ (they
are the only new coefficient in $g_{m+1}$ that are not in $g_{m}$). We need
to show that there exists a possible choice of these coefficients such that $%
g_{m+1}$ is positive definite. Since $g_{m}$ as in eq. (\ref{g_l}) is
positive definite, by Sylvester's criterion, we only need to show that $%
\det\left( g_{m+1}\right) >0.$ Expanding the determinant along the last
column the condition becomes, 
\begin{equation*}
\det\left( g_{m+1}\right) =a_{2m+1}\det\left( g_{m}\right) +f\left(
a_{1},...,a_{2m}\right) >0 
\end{equation*}
for some function $f$ that does not depend on $a_{2m+1}.$ Thus, choosing any 
$a_{2m}$, $f\left( a_{1},...,a_{2m}\right) $ becomes known, and since $%
\det\left( g_{m}\right) >0$ we just need to take $a_{2m+1}$ so that, 
\begin{equation*}
a_{2m+1}>-\frac{f\left( a_{1},...,a_{2m}\right) }{\det\left( g_{m}\right) }
\end{equation*}
to obtain a positive definite $g_{m+1}$.

Finally, for the vanishing rows of $K_{~\eta}^{A}\left( \lambda\right) $,
the $L_{0}^{T}$ rows, we could choose for $H$ arbitrary columns. They do not
seem to play any role.

The resulting structure for $H$ becomes as shown in the following example:%
\begin{align}
& \left( H\right) _{\delta A}K_{~\eta}^{A}\left( \lambda\right) = \notag \\
& \left( 
\begin{array}{cccccc}
H_{Id_{m_{3}}} & 0 & 0 & 0 & A_{1} & B_{1} \\ 
0 & H_{Id_{m_{4}}} & 0 & 0 & A_{2} & B_{2} \\ 
0 & 0 & H_{L_{m_{2}}^{T}} & 0 & A_{3} & B_{3} \\ 
0 & 0 & 0 & H_{L_{m_{1}}^{T}} & A_{4} & B_{4}%
\end{array}
\right) \left( 
\begin{array}{cccc}
Id_{m_{3}}\left( \lambda-\lambda_{1}\right) & 0 & 0 & 0 \\ 
0 & Id_{m_{4}}\left( \lambda-\lambda_{2}\right) & 0 & 0 \\ 
0 & 0 & L_{m_{2}}^{T} & 0 \\ 
0 & 0 & 0 & L_{m_{1}}^{T} \\ 
0 & 0 & 0 & 0 \\ 
0 & 0 & 0 & 0%
\end{array}
\right)  \notag \\
& =\left( 
\begin{array}{cccc}
H_{Id_{m_{3}}}Id_{m_{3}}\left( \lambda_{1}\right) & 0 & 0 & 0 \\ 
0 & H_{Id_{m_{4}}}Id_{m_{4}}\left( \lambda_{2}\right) & 0 & 0 \\ 
0 & 0 & H_{L_{m_{2}}^{T}}L_{m_{2}}^{T} & 0 \\ 
0 & 0 & 0 & H_{L_{m_{1}}^{T}}L_{m_{1}}^{T}%
\end{array}
\right)   \label{Sym_HK}
\end{align}
with $H_{L_{m_{1}}^{T}},$ $H_{L_{m_{2}}^{T}}$ as in eq. (\ref{HLm}). The $%
H_{Id_{m_{3}}}$, $H_{Id_{m_{4}}}$ blocks are arbitrary positive definite
Hermitian forms, and $A_{i}$, $B_{i}$ for $i=1,2,3,4$ are arbitrary blocks.
Notice that this example satisfies conditions a) and b).

As we mention before, in Geroch' formalism, which incorporates most physical
examples there are no $L_{m}^{T}$ blocks with $m\geq2$. Only $L_{1}^{T}$ and 
$L_{0}^{T}-$rows appear. These kind of blocks allow the introduction of other  reductions linking different $L_{1}^{T}$ blocks. This particular
case will be discussed in Appendix \ref{appendix_V}.

$ii\Leftrightarrow iii$

We recall that the order of the perturbed singular values is invariant under
change of bases and choice of dot products \cite{abalos2017necessary}, so we
choose bases for which $\mathfrak{N}_{~\eta }^{Ab}l\left( \lambda \right)
_{b}=K_{~\alpha }^{B}\left( \lambda \right) $, and consider, in those bases,
the positive definite Hermitian forms $G_{AB}=\left( 1,...,1\right) $ and $%
G_{\alpha \gamma }=\left( 1,...,1\right) $ to define the adjoint operator.
With this choice the computation of the singular values decouples into
blocks. Thus we only need to check the form of the perturbed singular values
for each of the $J_{l}\left( \lambda _{i}\right) $ and $L_{m}^{T}$ blocks.

The $l-$dimensional Jordan block has perturbed singular values of order $%
O\left( \varepsilon^{l}\right) $ (see \cite{abalos2017necessary}) therefore
the perturbed singular values are order $O\left( \varepsilon^{1}\right) $ if
and only if the Jordan blocks are $1-$dimensional.

On the other hand, the $L_{m}^{T}-$blocks have not right kernel for any $%
\lambda $, therefore their perturbed singular values are order $O\left(
\varepsilon ^{0}\right) $ (non vanishing). Indeed, using the relation
between the determinant relation in the SVD, we get, 
\begin{align*}
\det \left( \left( (L_{m}^{T})^{\ast }\right) _{~i}^{s}\left(
L_{m}^{T}\right) _{~j}^{i}\right) & =\sigma _{1}^{2}\left[ L_{m}^{T}\right]
...\sigma _{m}^{2}\left[ L_{m}^{T}\right]  \\
& =\left( \left\vert \lambda \right\vert ^{2}\right) ^{m}+\left( \left\vert
\lambda \right\vert ^{2}\right) ^{m-1}...+\left\vert \lambda \right\vert
^{2}+1>0
\end{align*}%
where the $\sigma _{j}\left[ L_{m}^{T}\right] $ are the singular values of $%
L_{m}^{T}$. Thus they never vanish, and when perturbed there is always a
neighborhood in which they remain positive. Thus they are order $O\left(
\varepsilon ^{0}\right) $.

As an example%
\begin{equation*}
\det\left( \left( 
\begin{array}{ccc}
\bar{\lambda} & 1 & 0 \\ 
0 & \bar{\lambda} & 1%
\end{array}
\right) \left( 
\begin{array}{cc}
\lambda & 0 \\ 
1 & \lambda \\ 
0 & 1%
\end{array}
\right) \right) =\left( \left\vert \lambda\right\vert ^{2}\right)
^{2}+\left( \left\vert \lambda\right\vert ^{2}\right) +1 
\end{equation*}

$i\Rightarrow iii)$

This implication was establish in \cite{abalos2017necessary}. This completes
the proof of the Lemma.
\end{proof}

%\%\%\%\%\%\%\%\%\%\%\%\%\%\%\%\%\%\%\%\%\%\%\%\%\%\%%\%\%\%\%\%\%\%\%\%\%\%\%\%\%\%\%\%\%\%\%\%\%\%\%\%\%%\%\%\%\%\%\%\%\%\%\%\%\%\%\%\%\%\%\%\%\%\%\%\%\%\%\%
\subsection{Choosing extra eigenvalues\label{theo_well_posed}}
%\%\%\%\%\%\%\%\%\%\%\%\%\%\%\%\%\%\%\%\%\%\%\%\%\%\%%\%\%\%\%\%\%\%\%\%\%\%\%\%\%\%\%\%\%\%\%\%\%\%\%\%\%%\%\%\%\%\%\%\%\%\%

In the proof of the Lemma above we constructed families of reductions that
make the system Hermitizable. That is, reductions $h_{~C}^{\rho }$ such that 
\begin{equation}
A_{~\eta }^{\gamma b}k_{b}=\left( \left( h_{~C}^{\varepsilon }\mathfrak{N}%
_{~\delta }^{Ca}n_{a}\right) ^{-1}\right) _{~\alpha }^{\gamma
}h_{~B}^{\alpha }\mathfrak{N}_{~\eta }^{Bb}k_{b}  \label{P_matrix}
\end{equation}%
is diagonalizable with only real eigenvalues for all $k_{a}$ not
proportional to $n_{a}$.

Notice that if $h_{~C}^{\rho }$ is one of these reductions, then so is  \newline $%
\tilde{h}_{~B}^{\gamma }=\left( \left( h_{~C}^{\varepsilon }\mathfrak{N}%
_{~\delta }^{Ca}n_{a}\right) ^{-1}\right) _{~\alpha }^{\gamma
}h_{~B}^{\alpha }$, since it gives the same matrix (\ref{P_matrix}),   \newline (here $%
\tilde{h}_{~B}^{\gamma }\mathfrak{N}_{~\alpha }^{Ba}n_{a}=\delta _{~\alpha
}^{\gamma }$ is the identity matrix). When $\tilde{h}_{~B}^{\gamma }$ is
written in term of the Kronecker decomposition (\ref{MR_Kro}) it assumes a
simpler form, 
\begin{equation}
\tilde{h}_{~B}^{\gamma }=-\left( W^{-1}\right) _{~\rho }^{\gamma }\left(
\left( H_{\delta C}I_{~\gamma }^{C}\right) ^{-1}\right) ^{\rho \delta
}H_{\delta C}\left( Y^{-1}\right) _{~B}^{C}.  \label{reduction_2}
\end{equation}%
This $\tilde{h}_{~B}^{\gamma }$ does not depend on $S$ matrix, showing that
this degrees of freedom does not play any role in the reductions. We shall
use it in what follows. Recall that when a reduction is applied to the
system the kernel of the resulting operator will increase, there will be $m$
more elements from each $L_{m}^{T}$ block. We shall denote the values of $%
\lambda $ for which the kernels appear as $\{\pi _{i}\left( k\right) \}$
with $i\in E_{\left( k\right) }:=\left\{ 1,2,..,u-{\displaystyle%
\sum\limits_{j\in D_{\left( k\right) }}}r_{j}\left( k\right) \right\} $.
Notice also that since the $L^{T}$ blocks can change from point to point,
(in $k_{a}$), the new eigenvalues can also change, nevertheless the
diagonalizability of $A$ implies that at all points the number of the
generalized eigenvalues plus these new elements equals the dimension of the
field space.
\begin{lemma}
\label{lemma_2} Assume condition $iii$ of lemma \ref{main_lema} holds, then
there exists $\tilde{h}_{~B}^{\gamma}$ as eq. (\ref{reduction_2}) such that:

All the $\pi _{i}\left( k\right) $ are different among each other and also
different from the generalized eigenvalues $\lambda _{i}\left( k\right) $,
with $i\in D_{\left( k\right) }$, and have algebraic multiplicity equal to $1
$.
\end{lemma}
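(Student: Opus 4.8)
The plan is to work within the explicit family of reductions $\tilde h^{\gamma}{}_{B}$ of the form (\ref{reduction_2}) that was produced in the proof of Lemma \ref{main_lema}, and to use the freedom still left in the Hermitian form $H$ --- concretely, the real parameters $a_{1},\dots,a_{2m}$ in each block $H_{L_{m}^{T}}$ of (\ref{HLm}) --- to place the extra eigenvalues at will. Recall that condition $iii$ of Lemma \ref{main_lema} (equivalently $ii$) forces the Kronecker decomposition of the symbol to contain only $1$-dimensional Jordan blocks, so $K(\lambda)$ is assembled from diagonal blocks $Id_{m}(\lambda-\lambda_{i})$, blocks $L_{m}^{T}$, and trivial $L_{0}^{T}$ rows. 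With the block-diagonal $H$ of (\ref{Sym_HK}), equation (\ref{A_1}) shows that $A^{\gamma}{}_{\eta}k_{a}$ is conjugate (via $W$) to a block-diagonal matrix: each $Id_{m}(\lambda-\lambda_{i})$ piece yields the block $\lambda_{i}\,\mathrm{Id}$, which together reproduce the generalized eigenvalues $\lambda_{i}(k)$ with total algebraic multiplicity $\sum_{j\in D_{(k)}}r_{j}(k)$; each $L_{m}^{T}$ piece yields the $m\times m$ matrix $-(g_{m})^{-1}l_{m}$ with $g_{m},l_{m}$ the Hankel matrices (\ref{g_l}). The remaining $u-\sum_{j}r_{j}(k)$ eigenvalues of $A$ --- the values $\pi_{i}(k)$ --- are exactly the eigenvalues of these latter blocks, and neither diagonalizability nor algebraic multiplicities are affected by the conjugation.

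The crux is then a statement about the Hankel pencil $(l_{m},g_{m})$ alone: for each fixed $k_{a}$ the parameters $a_{1},\dots,a_{2m}$ can be chosen so that $g_{m}$ is positive definite and $-(g_{m})^{-1}l_{m}$ has any prescribed set of $m$ pairwise distinct real eigenvalues. I would prove this by a moment-matrix parametrization: fix $m$ distinct reals $\mu_{1},\dots,\mu_{m}$ and positive weights $w_{1},\dots,w_{m}$ and set $a_{j}=\sum_{p=1}^{m}w_{p}\mu_{p}^{\,j-1}$ for $j=1,\dots,2m$. Writing $V$ for the (invertible) Vandermonde matrix with columns $(1,\mu_{p},\dots,\mu_{p}^{m-1})^{T}$ and $D_{w}=\mathrm{diag}(w_{p})$, $D_{\mu}=\mathrm{diag}(\mu_{p})$, one checks from (\ref{g_l}) that $g_{m}=V D_{w}V^{T}$ (hence positive definite) and $l_{m}=V D_{w}D_{\mu}V^{T}$, so $(g_{m})^{-1}l_{m}=V^{-T}D_{\mu}V^{T}$ and $-(g_{m})^{-1}l_{m}$ is diagonalizable with spectrum $\{-\mu_{1},\dots,-\mu_{m}\}$; equivalently $\det(\nu g_{m}+l_{m})=\det(V)^{2}\prod_{p}w_{p}(\nu+\mu_{p})$, which has the simple roots $\nu=-\mu_{p}$. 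Thus the extra eigenvalues coming from this block can be set to any $m$ distinct reals.

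To finish I would make the global selection: going through the $L_{m}^{T}$ blocks one at a time, assign to each a set of distinct real eigenvalues avoiding the finite forbidden set formed by the generalized eigenvalues $\lambda_{j}(k)$, $j\in D_{(k)}$, together with the eigenvalues already assigned to the previously treated blocks; at every step only finitely many values must be avoided, so this is always possible. This fixes the $a$'s in all the $H_{L_{m}^{T}}$ (with each $g_{m}$ positive definite), hence an $H$ of the form (\ref{Sym_HK}) satisfying conditions a) and b), hence a reduction $\tilde h^{\gamma}{}_{B}$ as in (\ref{reduction_2}), homogeneous of degree $0$ in $k_{a}$ because $W$, $Y$ depend only on the direction of $k_{a}$ and $H$ is chosen per direction. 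By construction each $\pi_{i}(k)$ is real, of algebraic multiplicity $1$, mutually distinct, and different from every $\lambda_{i}(k)$. The main obstacle is the Hankel-pencil step: since $g_{m}$ uses $a_{1},\dots,a_{2m-1}$ while $l_{m}$ adds only the single new parameter $a_{2m}$, one cannot move the $m$ eigenvalues independently by perturbing $l_{m}$ with $g_{m}$ frozen; it is essential to vary all of $a_{1},\dots,a_{2m}$ jointly, and the moment parametrization is what makes this transparent while simultaneously keeping $g_{m}$ positive definite.
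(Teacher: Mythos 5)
Your proof is correct, and it takes a technically different route from the paper's. The paper works with the normalized reduction $\tilde h^{\gamma}{}_{B}$ of (\ref{reduction_2}) directly: the block $\tilde H_{L_m^T}$ has an identity in front and only the last column $\tilde a_{m+1},\dots,\tilde a_{2m}$ free, so $\det\bigl(\tilde H_{L_m^T}L_m^T\bigr)$ is a monic degree-$m$ polynomial whose lower coefficients are exactly the $\tilde a_i$ (a companion-matrix argument), and any $m$ prescribed distinct real roots can be matched; the existence of underlying coefficients $a_i$ with $g_m>0$ is then only asserted, via the remark that a real-diagonalizable block is symmetrizable, and is explicitly not needed for the rest of the construction. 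You instead parametrize the Hankel pair $(g_m,l_m)$ by moments of a discrete positive measure, obtaining $g_m=VD_wV^{T}$ and $l_m=VD_wD_\mu V^{T}$, so that positivity of $g_m$ and the placement of the spectrum of $-g_m^{-1}l_m$ at arbitrary distinct reals are achieved simultaneously and explicitly. What your approach buys is a constructive verification of condition a) of Lemma \ref{main_lema} for the very same coefficients that realize the prescribed $\pi_i(k)$, closing a step the paper leaves implicit; what the paper's approach buys is brevity, since working at the level of $\tilde h$ decouples the eigenvalue placement from the symmetrizer entirely. Your closing observation --- that $g_m$ and $l_m$ share the parameters $a_2,\dots,a_{2m-1}$, so one cannot freeze $g_m$ and tune $l_m$ independently, which is precisely what the joint moment parametrization resolves --- is a correct and worthwhile point that the paper does not make. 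The block-by-block avoidance of the finitely many forbidden values matches the paper's argument.
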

\begin{proof}
Consider the form of $\tilde{h}_{~B}^{\gamma}$\ in equation (\ref{reduction_2})%
. The corresponding reduction block of the $L_{m}^{T}$ block is given by%
\begin{equation*}
\tilde{H}_{L_{m}^{T}}=\left( 
\begin{array}{ccccc}
1 & 0 & ... & 0 & \tilde{a}_{m+1} \\ 
0 & 1 & ... & 0 & \tilde{a}_{m+2} \\ 
... & ... & ... & 0 & ... \\ 
0 & 0 & 0 & 1 & \tilde{a}_{2m}%
\end{array}
\right) 
\end{equation*}
with 
\begin{equation*}
\left( 
\begin{array}{c}
\tilde{a}_{m+1} \\ 
\tilde{a}_{m+2} \\ 
... \\ 
\tilde{a}_{2m}%
\end{array}
\right) =g_{m}^{-1}\left( 
\begin{array}{c}
a_{m+1} \\ 
a_{m+2} \\ 
... \\ 
a_{2m}%
\end{array}
\right) 
\end{equation*}
where $g_{m}$ and $a_{i}$ where defined in equation (\ref{g_l}). Among all
possible reductions we shall now look for very special ones, namely those
for which all eigenvalues of the block are different. To find them we just
need to give values for some of the generic coefficients $\tilde{a}_i$.
Notice that if we find a reduction for which the block has different
eigenvalues, then the block will be diagonalizable, and so there will be
coefficients $a_i$ satisfying the positivity condition required. But for the
rest of the construction we shall not need to find them.

Indeed, if $\mathfrak{N}_{~\eta }^{Cb}l\left( \lambda \right) _{b}$ has an $%
L_{m}^{T}$ block in its Kronecker decomposition, then, 
\begin{align*}
\det \left( \tilde{h}_{~C}^{\rho }\mathfrak{N}_{~\eta }^{Cb}l\left( \lambda
\right) _{b}\right) & \propto \det \left( \tilde{H}_{L_{m}^{T}}L_{m}^{T}%
\right)  \\
& =\det \left( \left( 
\begin{array}{ccccc}
1 & 0 & ... & 0 & \tilde{a}_{m+1} \\ 
0 & 1 & ... & 0 & \tilde{a}_{m+2} \\ 
... & ... & ... & 0 & ... \\ 
0 & 0 & 0 & 1 & \tilde{a}_{2m}%
\end{array}%
\right) \left( 
\begin{array}{cccc}
\lambda  & 0 & 0 & 0 \\ 
1 & \lambda  & 0 & 0 \\ 
0 & 1 & ... & 0 \\ 
0 & 0 & ... & \lambda  \\ 
0 & 0 & 0 & 1%
\end{array}%
\right) \right)  \\
& =\det \left( 
\begin{array}{cccc}
\lambda  & 0 & 0 & \tilde{a}_{m+1} \\ 
1 & ... & 0 & \tilde{a}_{m+2} \\ 
0 & ... & \lambda  & ... \\ 
0 & 0 & 1 & \lambda +\tilde{a}_{2m}%
\end{array}%
\right)  \\
& =\lambda ^{m}+\tilde{a}_{2m}\lambda ^{m-1}-\tilde{a}_{2m-1}\lambda
^{m-2}+... \\ & \ \ \ -\tilde{a}_{m+3}\lambda ^{2} +\tilde{a}_{m+2}\lambda -\tilde{a}%
_{m+1}
\end{align*}%
Given any set of $m$ different real numbers it is easy to choose the
coefficients $\tilde{a}_{i}$ so that the polynomial has them as roots. This
will fix the desired reduction.
\end{proof}

%%%%%%%%%%%%%%%%%%%%%%%%%%%%%%%%%%%%%%%%%%%%%%%%%%%%%%%%%%%%%%%%%%%%%%%%%%%%%%%%%%%%%%%%%%%%%%%%%%%%%%%%%%%%%%%%%%%%%%%%%%%%%%%%%%%%%%%%%%%%%%%%%%%%%%%%

\subsection{Uniform lower bound and strong hyperbolicity\label{theo_well_posed_II}}

Finally, we the help of the particular reduction we have constructed (eq. (%
\ref{reduction_2})), we shall use theorem \ref{Theorem_1} to conclude the
proof. Notice that the reductions $\tilde{h}_{~C}^{\rho}$ depends on the
bases used when doing the Kronecker decomposition, since it depends on $W$
and $Y$. These bases are not unique, and we shall use this freedom to choose
an appropriate reduction $\tilde{h}_{~C}^{\rho}$.

The reduced system is%
\begin{equation}
\tilde{h}_{~C}^{\rho }\mathfrak{N}_{~\eta }^{Cb}l\left( \lambda \right)
_{b}=-\lambda \delta _{~\eta }^{\rho }+A_{~\eta }^{\rho a}k_{a}=-\lambda
\delta _{~\eta }^{\rho }-\left( W^{-1}\right) _{~\gamma }^{\rho }\left(
\left( H_{\delta C}I_{~\gamma }^{C}\right) ^{-1}\right) ^{\gamma \delta
}H_{\delta C}M_{~\alpha }^{C}W_{~\eta }^{\alpha },  \label{sys_red_1}
\end{equation}%
where we still have freedom in choosing for our advantage $W$.

It has kernel when $\lambda =\pi _{i}\left( k\right) $ and $\lambda =\lambda
_{i}\left( k\right) $. In order to apply theorem \ref{Theorem_1}, we need to
calculate the cosines of the angles $\cos \theta ^{\pi _{i}\left( k\right) }$%
, $\cos \theta _{j}^{\lambda _{i}\left( k\right) }$ between its kernels, $%
\Upsilon _{L}^{\pi _{i}\left( k\right) }$, $\Phi _{R}^{\pi _{i}\left(
k\right) }$ and $\Upsilon _{L}^{\lambda _{i}\left( k\right) }$, $\Phi
_{R}^{\lambda _{i}\left( k\right) }$ respectively and show that they are
uniformly bounded by below.

Since each $\pi _{i}\left( k\right) $ is a simple eigenvalues of $A_{~\eta
}^{\rho a}k_{a}$, using the implicit function theorem it is possible to show
that $\cos \theta ^{\pi _{i}\left( k\right) }$ is continuous in $k_{a}$ and
since $k_{a}$ belongs to a compact set, then $\cos \theta ^{\pi _{i}\left(
k\right) }$ reaches its lower bound on that set. But since they are simple,
their cosines can not vanish for any $j$, (when perturbed, the corresponding
vanishing singular values must be of order 1) therefore the lower bound has
to be positive. Notice that for this conclusion we do not need any
information about the $W$ transformation.

To finish the proof we only need to calculate $\cos \theta _{j}^{\lambda
_{i}\left( k\right) }$ for each $\lambda _{i}(k)$. Notice that if any of the $
\lambda _{i}$ are simple, then we can use the above argument, so the
interesting case is when we have non-trivial blocks. Given any one of them,
since the $\pi _{i}\neq \lambda _{j}$, t the right kernel subspace $\Phi
_{R}^{\lambda _{i}\left( k\right) }$  is invariant under
the application of the reduction of the corresponding Jordan-block. But
notice that the left kernel of (\ref{sys_red_1}), $\Upsilon _{L}^{\lambda
_{i}\left( k\right) }$, depends on $W$. We now need to accommodate $W$ so
that the angles we are looking for coincide with the angles of the unreduced
system. For that we look now for the left kernel of the whole system, $\Psi
_{L}^{\lambda _{i}}$. This kernel has an invariant subspace whose dimension
is independent of $\lambda $, we call it $\Delta (\lambda )$. This subspace
is uniquely defined in the Kronecker decomposition of a the principal
symbol, and it is the span of a set of particular vectors which are linear
combinations with coefficients which are power laws in $\lambda $, they are
introduced in Appendix \ref{appendix}. The kernel increases its dimension by 
$r_{i}$ for each specific $\lambda _{i}$. The $\Delta (\lambda )$ subspace
has the important property that when projected with $\mathfrak{N}_{~\eta
}^{Ab}n_{b}G^{\eta \gamma }$ is orthogonal to the right kernel $\Phi
_{R}^{\lambda _{i}\left( k\right) }$ (see Appendix \ref{app_IIIb}). So we
now project into $\Phi _{R}$ the whole kernel with $\mathfrak{N}_{~\eta
}^{Ab}n_{b}G^{\eta \gamma }$ and call the resulting subspace $\Phi
_{L}^{\lambda _{i}}$. The projected image of $\Delta (\lambda )$ will be
called $\Delta _{\mathfrak{N}}(\lambda )$. Using the metric $G_{\alpha
\gamma }$ we write $\Phi _{L}^{\lambda _{i}}$ as the direct sum of $\Delta _{%
\mathfrak{N}}(\lambda_i )$ and its perpendicular inside it, $(\Phi _{L}^{\lambda_i})^{\perp }$. Since $\Delta _{\mathfrak{N}}(\lambda_i )$ is perpendicular to $\Phi
_{R}^{\lambda _{i}\left( k\right) }$ and to $(\Phi _{L}^{\lambda_i})^{\perp }$, the angles
between $(\Phi _{L}^{\lambda_i})^{\perp }$ and $\Phi _{R}^{\lambda _{i}}$ are the same as
the angles between $\Phi _{L}^{\lambda _{i}}$ and $\Phi _{R}^{\lambda _{i}}$%
, which are the ones that appear as our theorem's hypothesis, and so their
cosines are bounded away from zero. We want to find now a $W$ such that $%
\Upsilon _{L}^{\lambda _{i}}$ coincides with $(\Phi _{L}^{\lambda_i})^{\perp }$, and so
their respective angles. To do that we choose a set of $r_{i}$ linearly
independent vectors $\{v_{A}^{l}\}$ in $\Psi _{L}^{\lambda _{i}}$ such that $%
span\{v_{A}^{l}\mathfrak{N}_{~\eta }^{Ab}n_{b}G^{\eta \gamma }\}=(\Phi _{L}^{\lambda_i})^{\perp }$. Taking now the set of canonical $e-u$ vectors in $\Delta
(\lambda_i )$, $\{\chi _{A}^{i}\}$ we obtain a base for $\Psi _{L}^{\lambda
_{i}}$, $\{\chi _{A}^{i},v_{A}^{l}\}$ (this base defines $W$, see appendix %
\ref{appendix}). Indeed assume that the above vectors are not linearly
independent, that is, we can write a vector in $\Delta (\lambda_i )$ as a
linear combination of the other vectors, $\chi _{A}=a_{l}v_{A}^{l}$.
Contracting with $\mathfrak{N}_{~\eta }^{Ab}n_{b}G^{\eta \gamma }$ we get, $%
\chi _{A}\mathfrak{N}_{~\eta }^{Ab}n_{b}G^{\eta \gamma }=a_{l}v_{A}^{l}%
\mathfrak{N}_{~\eta }^{Ab}n_{b}G^{\eta \gamma }$. Now, the LHS is an element
of $\Delta _{\mathfrak{N}}(\lambda_i )$ while the RHS is an element of $(\Phi _{L}^{\lambda_i})^{\perp }$. Since these spaces are perpendicular to each other we
conclude that both must vanish. But since by assumption the $v_{A}^{l}%
\mathfrak{N}_{~\eta }^{Ab}n_{b}G^{\eta \gamma }$ are linearly independent we
conclude that the $a_{l}$ must vanish, and so reach a contradiction. Using
this base it is straightforward to see that the resulting $W$ has the
property that $(\Phi _{L}^{\lambda_i})^{\perp }$ is the left kernel of the reduced system,
thus it coincides with $\Upsilon _{L}^{\lambda _{i}}$ and have the same
angles. This concludes the proof.

We conclude the subsection showing how to apply make this construction in
practical examples. Given a principal symbol we perform its Kronecker
decomposition. This provides us with some bases, 
\begin{equation}
\chi_{A}^{s}\left( \lambda\right) \text{ and }\left(
\upsilon_{\lambda_{i}}^{l}\right) _{A}   \label{cov_1}
\end{equation}
where $\chi_{A}^{s}\left( \lambda\right) :=\left( \left(
\theta_{m_{s}}^{s}\right) _{A}-\lambda\left( \theta_{m_{s}-1}^{s}\right)
_{A}+\lambda^{2}\left( \theta_{m_{s}-2}^{s}\right)
_{A}-...-\lambda^{m_{1}}\theta_{0A}^{s}\right) $ are a set of vectors which
are in the left kernel for all $\lambda$, While the rest of the base vectors
are only kernels for particular values of $\lambda$. The subspace spanned by 
$\{\chi_{A}^{s}\left( \lambda\right) \}$ is what we called $\Delta(\lambda)$
above. For each $\lambda_i$ the span of the whole set of these vectors, $%
\{\chi_{A}^{s}\left(
\lambda_i\right),\left(\upsilon_{\lambda_{i}}^{l}\right)_{A}\}$ span the left
kernel, $\Psi^{\lambda_i}_L$. But this base is not unique, we shall change
the elements $\{\left( \upsilon_{\lambda_{i}}^{l}\right)_{A}\}$ to find
simple reductions $h_{~A}^{\alpha}$. From equation (\ref{reduction_2}) we
see that once we fix the $\{\left( \upsilon_{\lambda_{i}}^{l}\right)_{A}\}$
the reduction if also fixed, for $Y$ also depends only on these vectors. As
mention above for each $i \in I_{\lambda _{i}\left( k\right)} $ we now choose $r_{\lambda_i(k)}$ new vectors $\{\left( \tilde{\upsilon}%
_{\lambda_{i}}^{l}\right)_{A}\}$ so that when they are projected into $\Phi_R$, they span the subspace $(\Phi _{L}^{\lambda_i(k)})^{\perp }$.

%%%%%%%%%%%%%%%%%%%%%%%%%%%%%%%%%%%%%%%%%%%%%%%%%%%%%%%%%%%%%%%%%%%%%%%%%%%%%%%%%%%%%%%%%%%%%%%%%
\section{\label{sec:level1:4} Examples}
%\%\%\%\%\%\%\%\%\%\%\%\%\%\%\%\%\%\%\%\%\%\%\%\%\%\%
\subsection{Klein Gordon}
%\%\%\%\%\%\%\%\%\%\%\%\%\%\%\%\%\%\%\%\%\%\%\%\%\%\%
In this subsection we study the Klein Gordon equation in \newline Minkowski space-time. 
%in a $M$ manifold with background metric $g_{ab}$
We show that the Kronecker decomposition of the principal symbol is $1\times
J_{1}(1)$, $1\times
J_{1}(-1)$, $3\times L_{1}^{T}$ and $3\times L_{0}^{T}$, with generalized
eigenvalues $\pm 1$. As it is shown in \cite{geroch1996partial}, this system
is symmetric hyperbolic, hence it is strongly hyperbolic. We shall show the
possible reductions of the systems. 

The Klein Gordon equation is%
\begin{equation*}
g^{ab}\nabla_{b}\nabla_{a}\phi=0. 
\end{equation*}
This equation can be written in first order form, introducing new variables, 
\begin{equation}
\phi_{a}:=\nabla_{a}\phi.   \label{KG_2}
\end{equation}

We obtain eleven equations for five variables, $\left( \phi,\phi_{a}\right)$%
. They are 
\begin{align*}
\nabla_{b}\phi - \phi_{b} & =0 \\
\nabla^{b}\phi_{b} & = 0 \\
\nabla_{\lbrack a}\phi_{b]} &=0
\end{align*}

Taking the Fourier transform we obtain the principal symbol, 
\begin{equation*}
\left( 
\begin{array}{cc}
\delta _{a}^{d} & 0 \\ 
0 & g^{dl} \\ 
0 & \delta _{\lbrack b}^{d}\delta _{c]}^{l}%
\end{array}%
\right) l\left( \lambda \right) _{d}\left( 
\begin{array}{c}
\delta \phi  \\ 
\delta \phi _{l}%
\end{array}%
\right) =0
\end{equation*}

Choosing a time-like co-vector $n_{a}$ and lines $l_{a}\left( \lambda
\right) =-\lambda n_{a}+k_{a}\in S_{n_{a}}^{\mathbb{C}}$, we obtain the
matrix pencil form of the principal part,

\begin{equation*}
-\lambda \left( 
\begin{array}{cc}
\delta _{a}^{d} & 0 \\ 
0 & g^{dl} \\ 
0 & \delta _{\lbrack b}^{d}\delta _{c]}^{l}%
\end{array}%
\right) n_{d}+\left( 
\begin{array}{cc}
\delta _{a}^{d} & 0 \\ 
0 & g^{dl} \\ 
0 & \delta _{\lbrack b}^{d}\delta _{c]}^{l}%
\end{array}%
\right) k_{d}
\end{equation*}

we are considering $n.n=-1$, $k.k=1$, and $n.k=0$.

Following the appendix \ref{appendix_II} the left kernel is spanned by the co-vectors 
$\{\left( \hat{\theta}\right) _{0A},$ $\left( \tilde{\theta}_{i}\right)
_{0A},\\ -\lambda \theta _{0A}+\theta _{1A},-\lambda \left( \breve{\theta}%
_{i}\right) _{0A}+\left( \breve{\theta}_{i}\right) _{1A}\}$ with $i=1,2$,
which span the subspace $\Delta (\lambda )$, and the eigen-covectors $%
\{\upsilon _{1A},\upsilon _{2A}\}$ associated to the generalized eigenvalues 
$\lambda _{\pm }=\pm 1$. The Kronecker left base is then, 
\begin{align*}
\left( \hat{\theta}\right) _{0A}& =\left( 
\begin{array}{ccc}
0 & 0 & \varepsilon ^{k_{1}bcd}n_{d}k_{k_{1}}%
\end{array}%
\right) \rightarrow 1 \\
\left( \tilde{\theta}_{i}\right) _{0A}& =\left( 
\begin{array}{ccc}
\varepsilon ^{k_{1}ade}n_{d}\left( l_{i}\right) _{e}k_{k_{1}} & 0 & 0%
\end{array}%
\right) \rightarrow 2 \\
\theta _{0A}& =\left( 
\begin{array}{ccc}
\varepsilon ^{k_{1}ade}l_{1d}l_{2e}n_{k_{1}} & 0 & 0%
\end{array}%
\right) \rightarrow 1 \\
\theta _{1A}& =\left( 
\begin{array}{ccc}
\varepsilon ^{k_{1}ade}l_{1d}l_{2e}k_{k_{1}} & 0 & 0%
\end{array}%
\right) \rightarrow 1 \\
\left( \breve{\theta}_{i}\right) _{0A}& =\left( 
\begin{array}{ccc}
0 & 0 & \varepsilon ^{k_{1}kla}\left( l_{i}\right) _{a}n_{k_{1}}%
\end{array}%
\right) \rightarrow 2 \\
\left( \breve{\theta}_{i}\right) _{1A}& =\left( 
\begin{array}{ccc}
0 & 0 & \varepsilon ^{k_{1}kla}\left( l_{i}\right) _{a}k_{k_{1}}%
\end{array}%
\right) \rightarrow 2 \\
\upsilon _{iA}& =\left( 
\begin{array}{ccc}
0 & -\frac{1}{2}\lambda _{i} & n^{[b}k^{c]}%
\end{array}%
\right) \rightarrow 2
\end{align*}

where $l_{i}.k=l_{i}.n=0$ and $l_{i}.l_{j}=\delta _{ij}$ with $i,j=1,2$

With this set we build the Kronecker decomposition as in equation (\ref%
{MR_Kro})%
\begin{eqnarray*}
&&Y_{~B}^{A}\left( n,k\right)  \\
&=&\left( 
\begin{array}{ccccccccccc}
0 & 0 & 0 & 0 & 0 & 0 & n_{a} & -k_{a} & -l_{1a} & l_{2a} & 0 \\ 
-1 & 1 & 0 & 0 & 0 & 0 & 0 & 0 & 0 & 0 & 0 \\ 
-n_{[b}k_{c]} & -n_{[b}k_{c]} & n_{[b}l_{1c]} & -k_{[b}l_{1c]} & 
-n_{[b}l_{2c]} & k_{[b}l_{2c]} & 0 & 0 & 0 & 0 & l_{1[b}l_{2c]}%
\end{array}%
\right) 
\end{eqnarray*}

\begin{equation*}
K_{~\alpha }^{B}\left( \lambda \right) =\left( 
\begin{array}{ccccc}
\lambda -1 & 0 & 0 & 0 & 0 \\ 
0 & \lambda +1 & 0 & 0 & 0 \\ 
0 & 0 & \lambda  & 0 & 0 \\ 
0 & 0 & 1 & 0 & 0 \\ 
0 & 0 & 0 & \lambda  & 0 \\ 
0 & 0 & 0 & 1 & 0 \\ 
0 & 0 & 0 & 0 & \lambda  \\ 
0 & 0 & 0 & 0 & 1 \\ 
0 & 0 & 0 & 0 & 0 \\ 
0 & 0 & 0 & 0 & 0 \\ 
0 & 0 & 0 & 0 & 0%
\end{array}%
\right) \text{ \ \ \ \ \ \ \ \ }W_{~\eta }^{\alpha }\left( n,k\right)
=\left( 
\begin{array}{cc}
0 & \frac{1}{2}\left( n^{l}+k^{l}\right)  \\ 
0 & \frac{1}{2}\left( -n^{l}+k^{l}\right)  \\ 
0 & -l_{1}^{l} \\ 
0 & l_{2}^{l} \\ 
-1 & 0%
\end{array}%
\right) 
\end{equation*}%

Following eq. (\ref{reduction_2}) the reductions are: 
\begin{equation*}
\tilde{h}=W^{-1}\left( 
\begin{array}{ccccccccccc}
1 & 0 & 0 & 0 & 0 & 0 & 0 & 0 & e_{1} & d_{1} & f_{1} \\ 
0 & 1 & 0 & 0 & 0 & 0 & 0 & 0 & e_{2} & d_{2} & f_{2} \\ 
0 & 0 & 1 & a_{1} & 0 & a_{2} & 0 & a_{3} & e_{3} & d_{3} & f_{3} \\ 
0 & 0 & 0 & \bar{a}_{2} & 1 & b_{1} & 0 & b_{2} & e_{4} & d_{4} & f_{4} \\ 
0 & 0 & 0 & \bar{a}_{3} & 0 & \bar{b}_{2} & 1 & c_{3} & e_{5} & d_{5} & f_{5}%
\end{array}%
\right) Y^{-1}
\end{equation*}%
where the coefficients in $\tilde{h}_{~B}^{\gamma }$ are arbitrary complex function of $n$
and $k$, with the exception of $a_{1}$, $b_{1}$ and $c_{3}$ which are real.

The pseudo-differential evolution equations (principal part) are 
\begin{equation*}
\left( 
\begin{array}{c}
\partial _{t}\hat{\phi} \\ 
\partial _{t}\hat{\phi}_{l}%
\end{array}%
\right) =-\left( 
\begin{array}{cc}
-c_{3} & k^{b}n^{a}\bar{R}^{c}\varepsilon _{~bac}^{m} \\ 
k^{b}n^{a}R^{c}\varepsilon _{~lbac} & -k_{l}n^{m}+n_{l}k^{m}-ia_{2I}%
\varepsilon _{l}^{~bma}n_{b}k_{a}+S_{~l}^{m}%
\end{array}%
\right) \left( 
\begin{array}{c}
\hat{\phi} \\ 
\hat{\phi}_{m}%
\end{array}%
\right) 
\end{equation*}%
with $a_{2}=a_{2R}+ia_{2I}$, $S_{~l}^{m}=-a_{2R}\left(
l_{1l}l_{2}^{m}-l_{2l}l_{1}^{m}\right)
+a_{1}l_{1l}l_{1}^{m}+b_{1}l_{2l}l_{2}^{m}$, and $R^{c}$ is any complex
vector.

It is instructive to look now at the possible differential reductions. In
Cartesian adapted coordinates the Klein Gordon system becomes, 
\begin{align*}
\partial_{t}\phi & =\phi_{0} \\
\partial_{t}\phi_{0} & =-\partial^{i}\phi_{i} \\
\partial_{t}\phi_{i} & =\partial_{i}\phi_{0} \\
C_{i} & :=\partial_{i}\phi-\phi_{i}=0 \\
C_{ij} & :=\partial_{\lbrack i}\phi_{j]}=0,
\end{align*}
Where the last two equations are clearly the constraints.

The most general differential hyperbolization is obtained by setting $%
S_{~l}^{m}=0$ and $c_{3}=L^{i}k_{i}$. The equations for the principal part
becomes,

\begin{align*}
\partial_{t}\phi &  =L^{i}\partial_{i}\phi-\bar{R}_{k}\varepsilon
^{ijk}\partial_{i}\phi_{j}=L^{i}C_{i}-\bar{R}_{k}\varepsilon^{ijk}C_{ij}\\
\partial_{t}\phi_{0} &  =-\partial^{i}\phi_{i}\\
\partial_{t}\phi_{i} &  =\partial_{i}\phi_{0}+R_{k}\varepsilon_{i}%
^{~jk}\partial_{j}\phi-ia_{2I}\varepsilon_{l}^{~mj}\partial_{j}\phi
_{m}=\partial_{i}\phi_{0}+R_{k}\varepsilon_{i}^{~jk}C_{j}+ia_{2I}%
\varepsilon_{l}^{~ij}C_{ij}%
\end{align*}
This expression clearly shows that the freedom in the choice of reductions is the addition arbitrary linear combination of constraints to some of the equations.

%\%\%\%\%\%\%\%\%\%\%\%\%\%\%\%\%\%\%\%\%\%\%\%\%\%\%

%\%\%\%\%\%\%\%\%\%\%\%\%\%\%\%\%\%\%\%\%\%\%\%\%\%\%%%%%%%%%%%%%%%%%%%%%5
%%%%%%%%%%%%%%%%%%%%%%%%%%%%%%%%%%%%%%%%%%%%%%%%%%%%%%%
%%%%%%%%%%%%%%%%%%%%%%%%%%%%%%%%%%%%%%%%%

\subsection{Einstein equations.}

In the following two sections we study  in a
pseudo-differential form the
linearized, densitized ADM and a version of the BSSN equations of General Relativity. 
We present their Kronecker decompositions and  show hyperbolizers for them. 
 We do not explain the details of how to arrive to the pseudo-differential equations, the complete 
theory can be found in the  work of Nagy, Reula and Ortiz
\cite{nagy2004strongly}. 
We show that the
principal symbol of these system have
the following  Kronecker structure, both share a  diagonal part  of $2 \times J_{1}(0)$, $2 \times J_{1}(1)$, 
$2 \times J_{1}(-1)$, $1 \times J_{1}(\sqrt{b})$ and $1 \times J_{1}(-\sqrt{b})$. 
This $8\times8$ block  corresponds to the physical propagation. In addition, the ADM equations have  $4\times L_{1}^{T}$ blocks and the BSSN equations have  $7\times L_{1}^{T}$  and  $6\times L_{0}^{T}$ blocks. 
Since the systems have not any $J_{m}$ Jordan block with $m \geq 2$, the Lemma \ref{main_lema}
implies that there exists diagonalizable reductions of the systems. 
We will present some of them, check the uniformity condition and so conclude that they are proper hyperbolizations. 

In addition, we  show that in the non densitized BSSN theory ($b=0$), the Kronecker decomposition have a
Jordan block of dimension $2$, and therefore that there exists no hyperbolizers for it. 
This is because the $1 \times J_{1}(\sqrt{b})$ and $1 \times J_{1}(-\sqrt{b})$ blocks collapse to a $1 \times J_{2}(0)$. 
The same calculation can be made in the non densitized ADM equations, with  similar conclusions.

%%%%%%%%%%%%%%%%%%%%%%%%%%%%5555

%In addition, since 

 %In the
%case of the ADM equations we will show %that these reductions are
%pseudo-differential ones, and in the %case of BSSN they are differential
%reduction is presented in %\cite{nagy2004strongly}.
% $4\times L_{1}$ and a diagonal block %of $8\times8$ with
%five real generalized eigenvalues and %$6\times L_{0}$ $7\times L_{1}$ with a
%diagonal block of $8\times8$ with the %same five real generalized eigenvalues
%respectively. Therefore as we explained %in theorem \ref{main_teorem}, there
%exists hyperbolic algebraic reductions, %see equation \ref{reduction_2}.
%%%%%%%%%%%%%%%%%%%%%%%%%%%%%%%%%%%%

\subsubsection{ADM}

The 3+1 principal part of the ADM equations, including the constraints
equations, for the linearizeds field $\left(  h_{ij},k_{ij}\right)  $on a flat background are:
%with $i,j=1,2,3$ are%

\begin{align*}
\frac{1}{N}\partial_{t}h_{ij}  &  =-2k_{ij}\\
\frac{1}{N}\partial_{t}k_{ij}  &  =\frac{1}{2}e^{kl}\left(  -\partial
_{k}\partial_{l}h_{ij}-\left(  1+b\right)  \partial_{i}\partial_{j}%
h_{kl}+2\partial_{k}\partial_{(i}h_{j)l}\right) \\
0  &  =\left(  -\Delta\bar{h}+\partial^{l}\partial^{j}h_{jl}\right) \\
0  &  =\partial^{l}k_{li}-\partial_{i}\bar{k}%
\end{align*}

Here $e_{ij}$ is the hypersuface-induced flat background metric and $\partial_i$ its connection. 
The lapse is densitized  as $N=h^{\frac{1}{2}b}Q$ with $Q$ fixed, and the shift vector is taken to be zero. 
The lower order terms can be found in \cite{nagy2004strongly}.

Taking a Fourier transform $\left(  h_{ij},k_{ij}\right)  \rightarrow
\left(  \hat{h}_{ij},\hat{k}_{ij}\right)  e^{i\left(
-\tilde{\lambda}t+k_{i}x^{i}\right)  }$ and defining $\hat{l}%
_{ij}=i\left\vert k\right\vert _{e}\hat{h}_{ij}$ with $\left\vert
k\right\vert _{e}=\sqrt{e^{ij}k_{i}k_{j}}$ as in
\cite{nagy2004strongly} we obtain the following principal symbol. \
\begin{align*}
\mathfrak{N}_{~\eta}^{Ab}n\left(  \lambda\right)  _{b}\delta\phi^{\alpha}  &
=\left(  \lambda E_{\text{ }\alpha}^{A~~}+B_{~\alpha}^{A~~}\right)  \left(
\begin{array}
[c]{c}%
\hat{l}_{sr}\\
\hat{k}_{sr}\\
\hat{f}_{s}%
\end{array}
\right) \\
&  =\left(
\begin{array}
[c]{cc}%
\lambda\delta_{(i}^{s}\delta_{j)}^{r} & 2\delta_{(i}^{s}\delta_{j)}^{r}\\
\frac{1}{2}\left(  \delta_{(i}^{s}\delta_{j)}^{r}+\left(  1+b\right)
\tilde{k}_{i}\tilde{k}_{j}e^{sr}-2\tilde{k}_{(i}\tilde{k
}^{(s}\delta_{j)}^{r)}\right)  & \lambda\delta_{(i}^{s}\delta_{j)}^{r}\\
-\alpha q^{sr} & 0\\
0 & \left(  \tilde{k}^{l}\delta_{(i}^{s}\delta_{l)}^{r}-e^{sr}\tilde
{k}_{i}\right)
\end{array}
\right)  \left(
\begin{array}
[c]{c}%
\hat{l}_{sr}\\
\hat{k}_{sr}%
\end{array}
\right)  =0
\end{align*}
for the variables $\hat{l}_{sr}, 
\hat{k}_{sr}$.

Notice that we are considered the lines $-\lambda n_a + k_i \in S_{n_{a}}$ with $n_{a}%
=\nabla_{a}t$ and $t$ the time coordinate,  also $\lambda:=\frac
{\tilde{\lambda}}{N\left\vert k\right\vert _{e}}$ and $q^{sr}%
:=e^{sr}-\tilde{k}^{s}\tilde{k}^{r}$ is the proyector orthogonal to $\tilde{k}^{r}.$ 
We denote $\tilde{k}_{i}$ to the the covector such that $\left\vert \tilde{k}\right\vert _{e}=1$. 
For details about the pseudo-differential structure see \cite{nagy2004strongly} (a different notation is using here with respect to that article,  $\omega_{i}\rightarrow k_{i}$ and $\alpha=1$). The characteristic speeds are solutions to (\ref{p_1}), which becomes, 

\begin{align*}
&  \sqrt{\det\left(  G^{\gamma\alpha}\overline{\left(  \lambda E_{\text{
}\alpha}^{A}+B_{~\alpha}^{A}\right)  }G_{AB}\left(  \left(  \lambda
E_{\text{~}\rho}^{B}+B_{~\rho}^{B}\right)  \right)  \right)  }\\
&  = \left\vert \lambda\right\vert^{2} \left\vert \left(  \lambda
^{2}-1\right)  \right\vert^{2} \left\vert \left(  b-\lambda^{2}\right)
\right\vert \sqrt{p_{1}\left(  \lambda,\bar{\lambda}\right)  }=0%
\end{align*}
%\bigskip
Here $G^{\gamma \alpha}$ and  $G_{AB}$ are any positive definite Hermitian forms, and $p_{1}\left(  \lambda,\bar{\lambda}\right)  $ a positive polynomial in $\lambda$, $\bar{\lambda}.$ 
The bar means conjugation. Therefore the mode propagation velocities, or equivalently the
generalized eigenvalues of the complete system  are: $\lambda=0,$ $\lambda=\pm1$ and $\lambda=\pm\sqrt{b}.$

The left kernel of $M_{~\alpha}^{A}$ are vectors to the form $\delta
X_{A}=\left(
\begin{array}
[c]{cccc}%
\alpha^{ij}, & \gamma^{ij}, & \rho, & \tau^{i}%
\end{array}
\right)  $ 
%with the index $i,j=1,2,3$. 
It is useful to define the vectors $x_{\pm}$ with  $x_{\pm}.x_{\pm}=\delta_{\pm}\,\ $ (the Kronecker delta) and $\tilde{k
}.x_{\pm}=0.$
A convenient base for the generalized left kernel is:

For the generalized eigenvalue $\lambda=0$, we have two eigenvectors%
\[
\upsilon_{1\pm A}=\left(
\begin{array}
[c]{cccc}%
2\tilde{k}^{(i}x_{\pm}^{j)}, & 0, & 0, & -4x_{\pm}^{i}%
\end{array}
\right) 
\]

For  $\lambda=\pm 1$,%
\begin{align*}
\upsilon_{2\pm A}  &  =\left(
\begin{array}
[c]{cccc}%
\mp\frac{1}{2}\left(  x_{+}^{i}x_{+}^{j}-x_{-}^{i}x_{-}^{j}\right),  & \left(
x_{+}^{i}x_{+}^{j}-x_{-}^{i}x_{-}^{j}\right),  & 0, & 0
\end{array}
\right) \\
\upsilon_{3\pm A}  &  =\left(
\begin{array}
[c]{cccc}%
\mp\frac{1}{2}x_{+}^{(i}x_{-}^{j)}, & x_{+}^{(i}x_{-}^{j)}, & 0, & 0
\end{array}
\right)
\end{align*}

%For  $\lambda=-1$%
%\begin{align*}
%\upsilon_{4}  &  =\left(
%\begin{array}
%[c]{cccc}%
%\frac{1}{2}\left(  x_{+}^{i}x_{+}^{j}-x_{-}^{i}x_{-}^{j}\right),  & \left(
%x_{+}^{i}x_{+}^{j}-x_{-}^{i}x_{-}^{j}\right),  & 0, & 0
%\end{array}
%\right) \\
%\upsilon_{5}  &  =\left(
%\begin{array}
%[c]{cccc}%
%\frac{1}{2}x_{+}^{(i}x_{-}^{j)}, & %x_{+}^{(i}x_{-}^{j)}, & 0, & 0
%\end{array}
%\right)
%\end{align*}

%$\pm\mp$

For  $\lambda=\pm\sqrt{b}$ 
\[
\upsilon_{4\pm A}=\left(
\begin{array}
[c]{cccc}%
\mp\frac{1}{2}\sqrt{b}\tilde{k}^{i}\tilde{k}^{j}, & \tilde{k}%
^{i}\tilde{k}^{j}, & \frac{1}{2}(b+1), & 0
\end{array}
\right)
\]

%For  $\lambda=-\sqrt{b}$ %
%\[
%\upsilon_{6}=\left(
%\begin{array}
%[c]{cccc}%
%\frac{1}{2}\sqrt{b}\tilde{k}^{i}\tilde{k}^{j}, & \tilde{k}^{i}\tilde{k}^{j}, & %\frac{1}{2}(b+1), & 0
%\end{array}
%\right)
%\]

The left kernel associated to the  $4 \times  L_{1}^{T}$  blocks are of order $1$ in $\lambda$:
\begin{align*}
\left(  \theta_{1}\right)  _{1A}-\lambda\left(  \theta_{0}\right)  _{1A}  &
=\left(
\begin{array}
[c]{cccc}%
\frac{1}{2}q^{ij}, & 0, & \frac{1}{2}\lambda, & \tilde{k}^{i}
\end{array}
\right) \\
\left(  \theta_{1}\right)  _{2A}-\lambda\left(  \theta_{0}\right)  _{2A}  &
=\left(
\begin{array}
[c]{cccc}%
0, & \frac{1}{2}q^{ij}, & \frac{1}{4}, & \frac{1}{2}\lambda\tilde{k}^{i}
\end{array}
\right) \\
\left(  \theta_{1}\right)  _{3\pm A}-\lambda\left(  \theta_{0}\right)  _{3\pm
A}  &  =\left(
\begin{array}
[c]{cccc}%
0, & 2\tilde{k}^{(i}x_{\pm}^{j)}, & 0, & -2\lambda x_{\pm}^{i}
\end{array}
\right)
\end{align*}

Therefore the Kronecker decomposition have the $5$ generalized eigenvalues
with their corresponding eigenvectors that defines a $8\times8$ diagonal
block,  $4 \times  L_{1}^{T}$ blocks: 
%It means that $K$ is
\[
K=\left(
\begin{array}
[c]{cccccccccccc}%
\lambda & 0 & 0 & 0 & 0 & 0 & 0 & 0 & 0 & 0 & 0 & 0\\
0 & \lambda & 0 & 0 & 0 & 0 & 0 & 0 & 0 & 0 & 0 & 0\\
0 & 0 & \lambda-1 & 0 & 0 & 0 & 0 & 0 & 0 & 0 & 0 & 0\\
0 & 0 & 0 & \lambda-1 & 0 & 0 & 0 & 0 & 0 & 0 & 0 & 0\\
0 & 0 & 0 & 0 & \lambda+1 & 0 & 0 & 0 & 0 & 0 & 0 & 0\\
0 & 0 & 0 & 0 & 0 & \lambda+1 & 0 & 0 & 0 & 0 & 0 & 0\\
0 & 0 & 0 & 0 & 0 & 0 & \lambda-\sqrt{b} & 0 & 0 & 0 & 0 & 0\\
0 & 0 & 0 & 0 & 0 & 0 & 0 & \lambda+\sqrt{b} & 0 & 0 & 0 & 0\\
0 & 0 & 0 & 0 & 0 & 0 & 0 & 0 & \lambda & 0 & 0 & 0\\
0 & 0 & 0 & 0 & 0 & 0 & 0 & 0 & 1 & 0 & 0 & 0\\
0 & 0 & 0 & 0 & 0 & 0 & 0 & 0 & 0 & \lambda & 0 & 0\\
0 & 0 & 0 & 0 & 0 & 0 & 0 & 0 & 0 & 1 & 0 & 0\\
0 & 0 & 0 & 0 & 0 & 0 & 0 & 0 & 0 & 0 & \lambda & 0\\
0 & 0 & 0 & 0 & 0 & 0 & 0 & 0 & 0 & 0 & 1 & 0\\
0 & 0 & 0 & 0 & 0 & 0 & 0 & 0 & 0 & 0 & 0 & \lambda\\
0 & 0 & 0 & 0 & 0 & 0 & 0 & 0 & 0 & 0 & 0 & 1
\end{array}
\right)
\]

The most general  reduction diagonal in time is: 

\[
\left(  \left(  H_{\delta C}I_{~\gamma}^{C}\right)  ^{-1}\right)  ^{\rho
\delta}H_{\delta C}=\left(
\begin{array}
[c]{cccccccccccccccc}%
1 & 0 & 0 & 0 & 0 & 0 & 0 & 0 & 0 & p_{1} & 0 & r_{1} & 0 & s_{1} & 0 &
w_{1}\\
0 & 1 & 0 & 0 & 0 & 0 & 0 & 0 & 0 & p_{2} & 0 & r_{2} & 0 & s_{2} & 0 &
w_{2}\\
0 & 0 & 1 & 0 & 0 & 0 & 0 & 0 & 0 & p_{3} & 0 & r_{3} & 0 & s_{3} & 0 &
w_{3}\\
0 & 0 & 0 & 1 & 0 & 0 & 0 & 0 & 0 & p_{4} & 0 & r_{4} & 0 & s_{4} & 0 &
w_{4}\\
0 & 0 & 0 & 0 & 1 & 0 & 0 & 0 & 0 & p_{5} & 0 & r_{5} & 0 & s_{5} & 0 &
w_{5}\\
0 & 0 & 0 & 0 & 0 & 1 & 0 & 0 & 0 & p_{6} & 0 & r_{6} & 0 & s_{6} & 0 &
w_{6}\\
0 & 0 & 0 & 0 & 0 & 0 & 1 & 0 & 0 & p_{7} & 0 & r_{7} & 0 & s_{7} & 0 &
w_{7}\\
0 & 0 & 0 & 0 & 0 & 0 & 0 & 1 & 0 & p_{8} & 0 & r_{8} & 0 & s_{8} & 0 &
w_{8}\\
0 & 0 & 0 & 0 & 0 & 0 & 0 & 0 & 1 & a_{1} & 0 & c_{1} & 0 & c_{2} & 0 &
c_{3}\\
0 & 0 & 0 & 0 & 0 & 0 & 0 & 0 & 0 & b_{1} & 1 & a_{2} & 0 & g_{1} & 0 &
g_{2}\\
0 & 0 & 0 & 0 & 0 & 0 & 0 & 0 & 0 & b_{2} & 0 & f_{1} & 1 & a_{3} & 0 &
m_{1}\\
0 & 0 & 0 & 0 & 0 & 0 & 0 & 0 & 0 & b_{3} & 0 & f_{2} & 0 & l_{1} & 1 & a_{4}%
\end{array}
\right)  
\]

It becomes the reductions considered on appendix \ref{appendix_V} when setting $p_i=r_i=s_i=w_i=0$ and  
$$A=\left(
\begin{array}
[c]{cccc}%
a_{1} & c_{1} & c_{2} & c_{3}\\
b_{1} & a_{2} & g_{1} & g_{2}\\
b_{2} & f_{1} & a_{3} & m_{1}\\
b_{3} & f_{2} & l_{1} & a_{4}%
\end{array}
\right)  $$is a diagonalizable matrix with real and simple eingenvalues, and different to the generalized eigenvalues, namely, $\{0,-1,1,\sqrt{b},-\sqrt{b}\}$.
The coefficients $a_i, b_i, c_i, g_i, f_i, m_i, l_i, p_i, r_i, s_i, w_i$ parametrize the remaining degrees of freedom of the most general reductions
$\tilde{h}_{~B}^{\gamma}=\left(  W^{-1}\right)  _{~\rho}^{\gamma}\left(
\left(  H_{\delta C}I_{~\gamma}^{C}\right)  ^{-1}\right)  ^{\rho\delta
}H_{\delta C}\left(  Y^{-1}\right)  _{~B}^{C}$. Where $Y^{-1}$ and $W$ are building using the basis above. Then, %multiplying for $\left(
%W^{-1}\right)  _{~\rho}^{\gamma}$ and %$\left(  Y^{-1}\right)  _{~B}^{C}$ we
%obtain%
\[
\tilde{h}_{~B}^{\gamma}=\left(
\begin{array}
[c]{cccc}%
\delta_{(s}^{i}\delta_{r)}^{j} & 0 & A_{3sr} & A_{4sr}^{i}\\
0 & \delta_{(s}^{i}\delta_{r)}^{j} & B_{3sr} & B_{4sr}^{i}%
\end{array}
\right)
\]
\begin{align*}
& A_{3sr}=-\frac{1}{2}p_{1}\tilde{k}_{(s}x_{+r)}-\frac{1}{2}p_{2}\tilde
{k}_{(s}x_{-r)}+\left(  \frac{1}{4}p_{4}-\frac{1}{4}p_{6}\right)
x_{+(s}x_{-r)}+\left(  \frac{1}{4}p_{3}-\frac{1}{4}p_{5}-\frac{1}{2}%
a_{1}\right)  x_{+s}x_{+r}\\
& -\left(  \frac{1}{4}p_{3}-\frac{1}{4}p_{5}+\frac{1}{2}a_{1}\right)
x_{-s}x_{-r}+\left(  \frac{1}{2}p_{7}-\frac{1}{2}p_{8}\right)  \frac{1}%
{\sqrt{b}}\tilde{k}_{s}\tilde{k}_{r}%
\end{align*}
\begin{align*}
& B_{3st}=+\frac{1}{4}\left(  2b+2-p_{7}-p_{8}\right)  \tilde{k}_{s}\tilde
{k}_{r}-\frac{1}{8}\left(  p_{3}+p_{5}-2+4b_{1}\right)  x_{+s}x_{+r}\\
& +\frac{1}{8}\left(  p_{3}+p_{5}+2-4b_{1}\right)  x_{-s}x_{-r}-\frac{1}%
{8}\left(  p_{4}+p_{6}\right)  x_{+(s}x_{-r)}-\frac{1}{2}b_{2}\tilde{k}%
_{(s}x_{+r)}-\frac{1}{2}b_{3}\tilde{k}_{(s}x_{-r)}%
\end{align*}%
\begin{align*}
& A_{4sr}^{i}=+\frac{1}{\sqrt{b}}\tilde{k}_{s}\tilde{k}_{r}\left( \frac{1}%
{2}\left(  r_{7}-r_{8}\right)  \tilde{k}^{i}+2\left(  s_{8}-2s_{7}\right)
x_{+}^{i}+2\left( w_{8}-w_{7}\right)  x_{-}^{i}\right)  \\
& +\tilde{k}_{(s}x_{+r)}\left(  -\frac{1}{2}r_{1}\tilde{k}^{i}+\left(
-4+2s_{1}\right)  x_{+}^{i}+2w_{1}x_{-}^{i}\right) \\
& +\tilde{k}_{(s}%
x_{-r)}\left(  -\frac{1}{2}r_{2}\tilde{k}^{i}+2s_{2}x_{+}^{i}+\left(
-4+2w_{2}\right)  x_{-}^{i}\right)  \\
& +\frac{1}{2}x_{+s}x_{+r}\left(  \left(  -\frac{1}{2}r_{5}+\frac{1}{2}%
r_{3}+\left(  2-c_{1}\right)  \right)  \tilde{k}^{i}+\left(  2s_{5}%
-2s_{3}+4c_{2}\right)  x_{+}^{i}+\left(  2w_{5}-2w_{3}+4c_{3}\right)
x_{-}^{i}\right)  \\
& -\frac{1}{2}x_{-s}x_{-r}\left(  \left(  -\frac{1}{2}r_{5}+\frac{1}{2}%
r_{3}-\left(  2-c_{1}\right)  \right)  \tilde{k}^{i}+\left(  2s_{5}%
-2s_{3}-4c_{2}\right)  x_{+}^{i}+\left(  2w_{5}-2w_{3}-4c_{3}\right)
x_{-}^{i}\right)  \\
& +\frac{1}{2}x_{+(s}x_{-r)}\left(  \frac{1}{2}\left(  r_{4}-r_{6}\right)
\tilde{k}^{i}+2\left(  s_{6}-s_{4}\right)  x_{+}^{i}+2\left(  w_{6}%
-w_{4}\right)  x_{-}^{i}\right)
\end{align*}%
\begin{align*}
& B_{4sr}^{i}=+\frac{1}{2}\tilde{k}_{s}\tilde{k}_{r}\left(  -\frac{1}%
{2}\left(  r_{7}+r_{8}\right)  \tilde{k}^{i}+2\left(  s_{7}+s_{8}\right)
x_{+}^{i}+2\left(  w_{7}+w_{8}\right)  x_{-}^{i}\right)  \\
& +\frac{1}{4}x_{+s}x_{+r}\left(  -\frac{1}{2}\left(  4a_{2}+r_{3}%
+r_{5}\right)  \tilde{k}^{i}+2\left(  s_{3}+s_{5}+4g_{1}\right)  x_{+}%
^{i}+2\left(  w_{3}+w_{5}+4g_{2}\right)  x_{-}^{i}\right)  \\
& -\frac{1}{4}x_{-s}x_{-r}\left(  \frac{1}{2}\left(  4a_{2}-\left(
r_{3}+r_{5}\right)  \right)  \tilde{k}^{i}+2\left(  s_{3}+s_{5}-4g_{1}\right)
x_{+}^{i}+2\left(  w_{3}+w_{5}-4g_{2}\right)  x_{-}^{i}\right)  \\
& +\frac{1}{4}x_{+(s}x_{-r)}\left(  -\frac{1}{2}r_{4}\tilde{k}^{i}+2s_{4}%
x_{+}^{i}+2w_{4}x_{-}^{i}\right)  +\frac{1}{4}x_{+(s}x_{-r)}\left(  -\frac
{1}{2}r_{6}\tilde{k}^{i}+2s_{6}x_{+}^{i}+2w_{6}x_{-}^{i}\right)  \\
& +\tilde{k}_{(s}x_{+r)}\left(  -\frac{1}{2}f_{1}\tilde{k}^{i}+2a_{3}x_{+}%
^{i}+2m_{1}x_{-}^{i}\right)  +\tilde{k}_{(s}x_{-r)}\left(  -\frac{1}{2}%
f_{2}\tilde{k}^{i}+2l_{1}x_{+}^{i}+2a_{4}x_{-}^{i}\right).
\end{align*}

%%%%%%%%%%%%%%%%%%%%%%%%%%%%5

%\begin{align*}
%A_{3sr}  & =+\left(  %x_{+s}x_{+r}+x_{-s}x_{-r}\right)  %a_{1}\\
%B_{3sr}  & %=\tilde{k}_{s}\tilde{k}_{r}\left(  %\frac{1}{2}b+\frac
%{1}{2}\right)  +\left(  %x_{+s}x_{+r}+x_{-s}x_{-r}\right)  %\left(  \frac{1}%
%{4}+a_{2}\right)  %+\tilde{k}_{(s}x_{+r)}a_{3}+\tilde{k}_{%(s}%
%x_{-r)}a_{4}\\
%A_{4sr}^{i}  & =+\tilde{k}^{i}\left(  %1-\frac{1}{2}a_{5}\right)  \left(
%x_{+s}x_{+r}+x_{-s}x_{-r}\right)  %+2x_{+}^{i}\left(  a_{9}\left(  x_{+s}%
%x_{+r}+x_{-s}x_{-r}\right)  %-2\tilde{k}_{(s}x_{+r)}\right)  \\
%& +2x_{-}^{i}\left(  a_{13}\left(  %x_{+s}x_{+r}+x_{-s}x_{-r}\right)
%-2\tilde{k}_{(s}x_{-r)}\right)  \\
%B_{4sr}^{i}  & %=-\frac{1}{2}\tilde{k}^{i}\left(  %a_{6}\left(
%x_{+s}x_{+r}+x_{-s}x_{-r}\right)  %+a_{7}\tilde{k}_{(s}x_{+r)}+a_{8}%
%\tilde{k}_{(s}x_{-r)}\right)  \\
%& +2x_{+}^{i}\left(  a_{10}\left(  %x_{+s}x_{+r}+x_{-s}x_{-r}\right)
%+a_{11}\tilde{k}_{(s}x_{+r)}+a_{12}\til%de{k}_{(s}x_{-r)}\right)  \\
%& +2x_{-}^{i}\left(  a_{14}\left(  %x_{+s}x_{+r}+x_{-s}x_{-r}\right)
%+a_{15}\tilde{k}_{(s}x_{+r)}+a_{16}\til%de{k}_{(s}x_{-r)}\right)
%\end{align*}

%From this expression, we conclude that it is not possible to choose the coefficients $a_{m}$
%to cancel the $\tilde{k}^{i}$ dependence.

This expression include all possible reductions, the psedo-differential, and the differential ones (if they exist).
In addition, we obtain a set of uniform hyperbolizations, (making zero all  coefficients with the exception of $a_i$, and  choosing them real, different among them and different to the generalized eigenvalues). To show this, we consider these particular reductions, obtain the reduced system and check the conditions between the canonical angles in  theorem \ref{Theorem_1}. Indeed, using  equation (\ref{projector_1}) in appendix \ref{appendix_a}, we conclude that equation (\ref{angles_1}) holds. The eigenprojectors are 
{\scriptsize
\[
P^{\lambda =0}\left( \tilde{k}\right) =\left[ 
\begin{array}{cc}
\tilde{k}_{(i}x_{+j)} & \tilde{k}_{(i}x_{-j)} \\ 
0 & 0%
\end{array}%
\right] \left[ 
\begin{array}{cc}
2\tilde{k}^{(s}x_{+}^{r)} & 0 \\ 
2\tilde{k}^{(s}x_{-}^{r)} & 0%
\end{array}%
\right] =\left[ 
\begin{array}{cc}
2\tilde{k}_{(i}x_{+j)}\tilde{k}^{(s}x_{+}^{r)}+2\tilde{k}_{(i}x_{-j)}\tilde{k%
}^{(s}x_{-}^{r)} & 0 \\ 
0 & 0%
\end{array}%
\right] 
\]%
\[
\left\vert P^{\lambda =0}\right\vert =\frac{1}{\cos \theta _{2}^{\lambda =0}}%
=1
\]

\begin{eqnarray*}
&&P^{\lambda = \pm 1}\left( \tilde{k}\right) = \\
&&\left[ 
\begin{array}{cc}
\frac{1}{4}\left( x_{+i}x_{+j}-x_{-i}x_{-j}\right) \left(
x_{+}^{s}x_{+}^{r}-x_{-}^{s}x_{-}^{r}\right) +\frac{1}{4}%
x_{+(i}x_{-j)}x_{+}^{(s}x_{-}^{r)} & \mp \frac{1}{2}\left(
x_{+i}x_{+j}-x_{-i}x_{-j}\right) \left(
x_{+}^{s}x_{+}^{r}-x_{-}^{s}x_{-}^{r}\right) \mp\frac{1}{2}%
x_{+(i}x_{-j)}x_{+}^{(s}x_{-}^{r)} \\ 
\mp\frac{1}{8}\left( x_{+i}x_{+j}-x_{-i}x_{-j}\right) \left(
x_{+}^{s}x_{+}^{r}-x_{-}^{s}x_{-}^{r}\right) \mp\frac{1}{8}%
x_{+(i}x_{-j)}x_{+}^{(s}x_{-}^{r)} & \frac{1}{4}\left(
x_{+i}x_{+j}-x_{-i}x_{-j}\right) \left(
x_{+}^{s}x_{+}^{r}-x_{-}^{s}x_{-}^{r}\right) +\frac{1}{4}%
x_{+(i}x_{-j)}x_{+}^{(s}x_{-}^{r)}%
\end{array}%
\right] 
\end{eqnarray*}%

\[
\left\vert P^{\lambda =\pm 1}\left( \tilde{k}\right) \right\vert =\frac{1}{\cos
\theta _{2}^{\lambda = \pm 1}}=\sqrt{\frac{1}{8}\sqrt{2}\sqrt{17}+\frac{25}{32}}%
\approx 1. 2289
\]
}
%{\scriptsize
%\begin{eqnarray*}
%&&P^{\lambda =-1}\left( \tilde{k}\right) = \\
%&&\left[ 
%\begin{array}{cc}
%\frac{1}{4}\left( x_{+i}x_{+j}-x_{-i}x_{-j}\right) \left(
%x_{+}^{s}x_{+}^{r}-x_{-}^{s}x_{-}^{r}\right) +\frac{1}{4}%
%x_{+(i}x_{-j)}x_{+}^{(s}x_{-}^{r)} & \frac{1}{2}\left(
%x_{+i}x_{+j}-x_{-i}x_{-j}\right) \left(
%x_{+}^{s}x_{+}^{r}-x_{-}^{s}x_{-}^{r}\right) +\frac{1}{2}%
%x_{+(i}x_{-j)}x_{+}^{(s}x_{-}^{r)} \\ 
%\frac{1}{8}\left( x_{+i}x_{+j}-x_{-i}x_{-j}\right) \left(
%x_{+}^{s}x_{+}^{r}-x_{-}^{s}x_{-}^{r}\right) +\frac{1}{8}%
%x_{+(i}x_{-j)}x_{+}^{(s}x_{-}^{r)} & \frac{1}{4}\left(
%x_{+i}x_{+j}-x_{-i}x_{-j}\right) \left(
%x_{+}^{s}x_{+}^{r}-x_{-}^{s}x_{-}^{r}\right) +\frac{1}{4}%
%x_{+(i}x_{-j)}x_{+}^{(s}x_{-}^{r)}%
%\end{array}%
%\right] 
%\end{eqnarray*}
%}
%\[
%\left\vert P^{\lambda =-1}\left( \tilde{k}\right) \right\vert %=\left\vert
%P^{\lambda =1}\right\vert =\frac{1}{\cos \theta _{2}^{\lambda %=-1}}
%\]

this proves that the cosines of the principal angles are uniformly lower bounded.

% Setting $a_{2}=-\frac{1}{4}$ $a_{5}=2$ and the other coefficients zero we
% obtain the pseudodifferential reduction.%
% \[
% \tilde{h}_{~B}^{\gamma}=\left(
% \begin{array}
% [c]{cccc}%
% \delta_{(s}^{i}\delta_{r)}^{j} & 0 & 0 & -4x_{+}^{i}\tilde{\omega}_{(s}%
% x_{+r)}-4x_{-}^{i}\tilde{\omega}_{(s}x_{-r)}\\
% 0 & \delta_{(s}^{i}\delta_{r)}^{j} & \tilde{\omega}_{s}\tilde{\omega}%
% _{r}\left(  \frac{1}{2}b+\frac{1}{2}\right)   & 0
% \end{array}
% \right)
% \]

%\%\%\%\%\%\%\%\%\%\%\%\%\%\%\%\%\%\%\%\%\%\%\%\%\%\%

\subsubsection{BSSN}

The 3+1 linearized, principal part of the BSSN equations, for the fields $\left(  h_{ij},k_{ij}%
,f_{i}\right)  $ with $i,j=1,2,3$ are,
\begin{align*}
\frac{1}{N}\partial_{t}h_{ij}  &  =-2k_{ij}\\
\frac{1}{N}\partial_{t}k_{ij}  &  =-\frac{1}{2}e^{kl}\left(  \partial
_{k}\partial_{l}h_{ij}+b\partial_{i}\partial_{j}h_{kl}\right)  +\partial
_{(i}f_{j)}\\
\frac{1}{N}\partial_{t}f_{i}  &  =-2e^{kj}\partial_{k}k_{ij}+e^{kj}%
\partial_{i}k_{kj}\\
0  &  =-e^{ij}\Delta h_{ij}+\partial^{l}\partial^{j}h_{jl}\\
0  &  =\partial^{l}k_{li}-e^{kl}\partial_{i}k_{kl}\\
0  &  =\partial_{i}f_{j}-\partial_{i}\partial^{l}h_{lj}+\frac{1}{2}%
e^{kl}\partial_{i}\partial_{j}h_{kl}%
\end{align*}
Where we are using the notation of the ADM case. Therefore the principal
symbol is:
\begin{align*}
\mathfrak{N}_{~\eta}^{Ab}n\left(  \lambda\right)  _{b}\delta\phi^{\alpha}  &
=\left(  \lambda E_{\text{ }\alpha}^{A~~}+B_{~\alpha}^{A~~}\right)  \left(
\begin{array}
[c]{c}%
\hat{l}_{sr}\\
\hat{k}_{sr}\\
\hat{f}_{s}%
\end{array}
\right) \\
&  =\left(
\begin{array}
[c]{ccc}%
\lambda\delta_{(i}^{s}\delta_{j)}^{r} & 2\delta_{(i}^{s}\delta_{j)}^{r} & 0\\
\frac{1}{2}\left(  \delta_{(i}^{s}\delta_{j)}^{r}+b\tilde{\beta}_{i}%
\tilde{\beta}_{j}e^{sr}\right)  & \lambda\delta_{(i}^{s}\delta_{j)}^{r} &
-\tilde{\beta}_{(i}\delta_{j)}^{s}\\
0 & \left(  2\tilde{\beta}^{j}\delta_{(i}^{s}\delta_{j)}^{r}-e^{sr}%
\tilde{\beta}_{i}\right)  & \lambda\delta_{i}^{s}\\
-\tilde{\beta}_{i}\tilde{\beta}^{(s}\delta_{j}^{r)}+\frac{1}{2}e^{sr}%
\tilde{\beta}_{i}\tilde{\beta}_{j} & 0 & \tilde{\beta}_{i}\delta_{j}^{s}\\
-q^{sr} & 0 & 0\\
0 & \left(  \tilde{\beta}^{l}\delta_{(i}^{s}\delta_{l)}^{r}-e^{sr}\tilde
{\beta}_{i}\right)  & 0
\end{array}
\right)  \left(
\begin{array}
[c]{c}%
\hat{l}_{sr}\\
\hat{k}_{sr}\\
\hat{f}_{s}%
\end{array}
\right)  =0
\end{align*}
for the variables $
\hat{l}_{sr}, \hat{k}_{sr}, \hat{f}_{s}$ . For the details about the pseudo-differential structure see
\cite{nagy2004strongly}.

Analogously to the ADM case,
\begin{align*}
&  \sqrt{\det\left(  G^{\gamma\alpha}\overline{\left(  \lambda E_{\text{
}\alpha}^{A}+B_{~\alpha}^{A}\right)  }G_{AB}\left(  \left(  \lambda
E_{\text{~}\rho}^{B}+B_{~\rho}^{B}\right)  \right)  \right)  }\\
&  = \left\vert \lambda\right\vert^{2} \left\vert \left(  \lambda
^{2}-1\right)  \right\vert^{2} \left\vert \left(  b-\lambda^{2}\right)
\right\vert \sqrt{p_{2}\left(  \lambda,\bar{\lambda}\right)  }=0%
\end{align*}
%\bigskip

Again,  $G^{\gamma\alpha}$ and  $G_{AB}$ are  positive definite Hermitian forms,
and $p_{2}\left(  \lambda,\bar{\lambda}\right)  $ is a positive polynomial in $\lambda$, $\bar{\lambda}.$
Thus, the modes propagation-velocities, or equivalently the
generalized eigenvalues are the same to the ADM case $\lambda=0,$ $\lambda=\pm1$ and $\lambda=\pm\sqrt{b}.$

We present the Kronecker structure of this system, but we study two cases separated, the densitized $b\neq0$ and non densitized $b=0$ lapse.

\subsection{Case $b\neq0$}

The Left kernel of $M_{~\alpha}^{A}$ are covectors of the form $\delta
X_{A}=\left(
\begin{array}
[c]{cccccc}%
\alpha^{ij}, & \gamma^{ij}, & \eta^{i} & \varepsilon^{ij}, & \rho, & \tau^{i}%
\end{array}
\right)  $ with the index $i,j=1,2,3$. We use the same definition than before for $x_{\pm}$ with $x_{\pm}.x_{\pm}=\delta_{\pm}\,\ $ and $\tilde{k
}.x_{\pm}=0,$. In addition we define the projector\ \ $q^{ij}:=e^{ij}%
-\tilde{k}^{i}\tilde{k}^{j}.$

One base of the generalized left kernel is:

For the generalized eigenvalue $\lambda=0$ we have two eigenvectors%
\[
\upsilon_{1\pm A}=\left(
\begin{array}
[c]{cccccc}
0, & 0, & x_{\pm}^{i}, & 0, & 0, & -2x_{\pm}^{i},
\end{array}
\right)
\]
For the eigenvalues $\lambda=\pm 1$ we have two eigenvectors for each case:
\begin{align*}
\upsilon_{2\pm A}  &  =\left(
\begin{array}
[c]{cccccc}%
\mp\frac{1}{2}\left(  x_{+}^{i}x_{+}^{j}-x_{-}^{i}x_{-}^{j}\right) , 
& \left(x_{+}^{i}x_{+}^{j}-x_{-}^{i}x_{-}^{j}\right) , & 0, & 0, & 0, & 0
\end{array}
\right) \\
\upsilon_{3\pm A}  &  =\left(
\begin{array}
[c]{cccccc}%
\mp\frac{1}{2}x_{+}^{(i}x_{-}^{j)}, & x_{+}^{(i}x_{-}^{j)}, & 0, & 0, & 0, & 0
\end{array}
\right)
\end{align*}

%For  $\lambda=-1$ %
%\begin{align*}
%\upsilon_{4A}  &  =\left(
%\begin{array}
%\frac{1}{2}\left(  x_{+}^{i}x_{+}^{j}-x_{-}^{i}x_{-}^{j}\right)  
%& \left(x_{+}^{i}x_{+}^{j}-x_{-}^{i}x_{-}^{j}\right)  & 0 & 0 & 0 & 0
%\end{array}
%\right) \\
%\upsilon_{5A}  &  =\left(
%\begin{array}
%[c]{cccccc}%
%\frac{1}{2}x_{+}^{(i}x_{-}^{j)} & x_{+}^{(i}x_{-}^{j)} & 0 & 0 & 0 & 0
%\end{array}
%\right)
%\end{align*}

For  $\lambda=\pm\sqrt{b}$ 
\[
\upsilon_{4\pm A}=\left(
\begin{array}
[c]{cccccc}%
\mp\frac{1}{2}\sqrt{b}\tilde{k}^{i}\tilde{k}^{j} & \tilde{k}%
^{i}\tilde{k}^{j} & 0 & \tilde{k}^{i}\tilde{k}^{j} & \frac{1}{2}(b+1) & 0
\end{array}
\right)
\]

%For  $\lambda=-\sqrt{b}$ 

%\[
%\upsilon_{7A}=\left(
%\begin{array}
%[c]{cccccc}%
%\frac{1}{2}\sqrt{b}\tilde{k}^{i}\tilde{k}^{j} & \tilde{k}^{i}\tilde{k}^{j} & 0 & \tilde{k}^{i}\tilde{k}^{j} & \frac{1}%
%{2}b+\frac{1}{2} & 0
%\end{array}
%\right)
%\]

A base of the left kernel, for all $\lambda$, is composed by a set of vectors, of order $1$ in $\lambda$ and another set of order $0$ in $\lambda$.
The first ones define  $7 \times L_{1}^{T}$ blocks. They are: 

\begin{align*}
\left(  \theta_{1}\right)  _{1A}-\lambda\left(  \theta_{0}\right)  _{1A}  &
=\left(
\begin{array}
[c]{cccccc}%
q^{ij} & 0 & 0 & 0 & \lambda & 2\tilde{k}^{i}%
\end{array}
\right) \\
\left(  \theta_{1}\right)  _{2A}-\lambda\left(  \theta_{0}\right)  _{2A}  &
=\left(
\begin{array}
[c]{cccccc}%
0 & q^{ij} & 0 & 0 & \frac{1}{2} & \lambda\tilde{k}^{i}%
\end{array}
\right) \\
\left(  \theta_{1}\right)  _{3A}-\lambda\left(  \theta_{0}\right)  _{3A}  &
=\left(
\begin{array}
[c]{cccccc}%
-e^{ij} & 0 & -2\tilde{\omega}^{i} & 2\lambda\tilde{k}^{i}\tilde{k
}^{j} & 0 & 0
\end{array}
\right) \\
\left(  \theta_{1}\right)  _{4\pm A}-\lambda\left(  \theta_{0}\right)  _{4\pm
A}  &  =\left(
\begin{array}
[c]{cccccc}%
\tilde{k}^{i}x_{\pm}^{j} & 0 & -x_{\pm}^{i} & \lambda\tilde{k}%
^{i}x_{\pm}^{j} & 0 & 0
\end{array}
\right) \\
\left(  \theta_{1}\right)  _{5\pm A}-\lambda\left(  \theta_{0}\right)  _{5\pm
A}  &  =\left(
\begin{array}
[c]{cccccc}%
0 & -\tilde{k}^{i}x_{\pm}^{j} & 0 & -\frac{1}{2}\tilde{k}^{i}x_{\pm
}^{j} & 0 & \lambda x_{\pm}^{i}%
\end{array}
\right)
\end{align*}

The remaining vectors defines $6 \times L_{0}^{T}$ blocks, They are: 

\begin{align*}
\left(  \theta_{0}\right)  _{1A}  &  =\left(
\begin{array}
[c]{cccccc}%
0 & 0 & 0 & x_{+}^{i}x_{+}^{j}-x_{-}^{i}x_{-}^{j} & 0 & 0
\end{array}
\right) \\
\left(  \theta_{0}\right)  _{2A}  &  =\left(
\begin{array}
[c]{cccccc}%
0 & 0 & 0 & x_{+}^{(i}x_{-}^{j)} & 0 & 0
\end{array}
\right) \\
\left(  \theta_{0}\right)  _{3A}  &  =\left(
\begin{array}
[c]{cccccc}%
0 & 0 & 0 & x_{+}^{[i}x_{-}^{j]} & 0 & 0
\end{array}
\right) \\
\left(  \theta_{0}\right)  _{4\pm A}  &  =\left(
\begin{array}
[c]{cccccc}%
0 & 0 & 0 & \tilde{k}^{j}x_{\pm}^{i} & 0 & 0
\end{array}
\right) \\
\left(  \theta_{0}\right)  _{5A}  &  =\left(
\begin{array}
[c]{cccccc}%
0 & 0 & 0 & \frac{1}{2}q^{ij} & 0 & 0
\end{array}
\right)
\end{align*}

Therefore the Kronecker decomposition is $2 \times J_{1}(0)$, $2 \times J_{1}(1)$, 
$2 \times J_{1}(-1)$, $1 \times J_{1}(\sqrt{b})$, $1 \times J_{1}(-\sqrt{b})$, $7 \times L_{1}^{T}$ and $6 \times L_{0}^{T}$ (six null rows).

The explicitly form of the reductions that appear in Nagy et.al. work. We notices that in these reductions, the case $b=1$ leads to a weakly
hyperbolic system for $c\neq2$. The problem is the particular reduction that they are considering. Indeed, as can be easily checked,   
for the left-kernel base presented before, the Kronecker decomposition does not change in the $b=1$  case.
That is, there are other reductions that lead to a strongly hyperbolic
systems, for example choosing  $c=2$. It means that, there are not a physic problem for that case.

\subsection{Non densitized case $b=0$}

In this case, the generalized eigenvalues $\pm\sqrt{b}$ go to $0$, and the
corresponding eigenspace became degenerated, since $\upsilon_{6A}$ collapses
to $\upsilon_{7A}$ and non other eigenvector appears. We will show that
the Kronecker decomposition of $M$ have a Jordan block of order $2$ associated
to this degeneration. We follow theorem 7 in \cite{abalos2017necessary} and
show that the operator $L$ have right kernel.

To construct $L$ we calculate a base of the right kernel of $M$ for $\lambda=0$, it is
\[
\chi_{q}^{\alpha}=\left(
\begin{array}
[c]{ccc}%
2\tilde{k}_{(s}x_{+r)} & 2\tilde{k}_{(s}x_{-r)} & \tilde{k}%
_{s}\tilde{k}_{r}\\
0 & 0 & 0\\
x_{+s} & x_{-s} & \frac{1}{2}\tilde{k}_{s}%
\end{array}
\right)
\]
each column is an vector of the base, indices by $q=1,2,3.$
A base of the left kernel is%

{\scriptsize
\[
\Xi_{A}^{u}=\left(
\begin{array}
[c]{cccccc}%
0 & 0 & x_{+}^{i} & 0 & 0 & -2x_{+}^{i}\\
0 & 0 & x_{-}^{i} & 0 & 0 & -2x_{-}^{i}\\
0 & \tilde{k}^{i}\tilde{k}^{j} & 0 & \tilde{k}^{i}\tilde{k
}^{j} & +\frac{1}{2} & 0\\
\tilde{k}^{(i}x_{+}^{j)} & 0 & -x_{+}^{i} & 0 & 0 & 0\\
\tilde{k}^{(i}x_{-}^{j)} & 0 & -x_{-}^{i} & 0 & 0 & 0\\
q^{ij} & 0 & 0 & 0 & 0 & 2\tilde{k}^{i}\\
0 & q^{ij} & 0 & 0 & \frac{1}{2} & 0\\
-e^{ij} & 0 & -2\tilde{k}^{i} & 0 & 0 & 0\\
0 & -\tilde{k}^{(i}x_{+}^{j)} & 0 & -\frac{1}{2}\tilde{k}^{i}x_{+}^{j}
& 0 & 0\\
0 & -\tilde{k}^{(i}x_{-}^{j)} & 0 & -\frac{1}{2}\tilde{k}^{i}x_{-}^{j}
& 0 & 0\\
0 & 0 & 0 & \frac{q^{ij}}{2} & 0 & 0\\
0 & 0 & 0 & x_{+}^{i}\tilde{k}^{j} & 0 & 0\\
0 & 0 & 0 & x_{-}^{i}\tilde{k}^{j} & 0 & 0\\
0 & 0 & 0 & x_{+}^{[i}x_{-}^{j]} & 0 & 0\\
0 & 0 & 0 & x_{+}^{i}x_{+}^{j}-x_{-}^{i}x_{-}^{j} & 0 & 0\\
0 & 0 & 0 & x_{+}^{(i}x_{-}^{j)} & 0 & 0
\end{array}
\right)
\]}
Each row is a vector of the base, indices by $u=1,....,16.$ Then $L_{~q}^{u}$
is
{\scriptsize
\[
L_{~q}^{u}=\Xi_{A}^{u}M_{1\alpha ij}^{A~~}\chi_{q}^{\alpha}=\left(
\begin{array}
[c]{ccc}%
1 & 0 & 0\\
0 & 1 & 0\\
0 & 0 & 0\\
0 & 0 & 0\\
0 & 0 & 0\\
0 & 0 & 0\\
0 & 0 & 0\\
0 & 0 & 0\\
0 & 0 & 0\\
0 & 0 & 0\\
0 & 0 & 0\\
0 & 0 & 0\\
0 & 0 & 0\\
0 & 0 & 0\\
0 & 0 & 0\\
0 & 0 & 0
\end{array}
\right)
\]
}

Which have a trivial right kernel, therefore the Kronecker decomposition have
at least a $l-$Jordan block with $l\geq2$. In this particular case, using the dimension of the left kernel for the other generalized eigenvalues, it is possible to conclude that a $J_{2}(0)$ Jordan block appears. So this system is intrinsically ill posed, i.e.  there are not any hyperbolic
reduction as follow of Lemma  \ref{main_lema}.

%\%\%\%\%\%\%\%\%\%\%\%\%\%\%\%\%\%\%\%\%\%\%\%\%\%\%%%%%%%%%%%%%%
%%%%%%%%%%%%%%%%%%%%%%%%%%%%%%%%%%%%%%%%%%%%%%%%%%%%

\section{Discussion}

%\%\%\%\%\%\%\%\%\%\%\%\%\%\%\%\%\%\%\%\%\%\%\%\%\%\%

In this work we have found a necessary and sufficient condition that a first order system has to satisfy in order to have a reduction which is a hyperbolization (see Theorem \ref{Theorem_FL}). 
In the case of constant coefficient system this hyperbolization implies the reduced system has a well posed initial value formulation. 
That is, given the values of the unknowns in an appropriate hyper-surface (the initial data) a unique solution to the reduced system exists and it depends continuously on that data. 
Contrary to the classical treatment, the reduced system is not a partial differential system, but in general it is a pseudo-differential system. Nevertheless the usual theory applies in the sense that energy norms can be constructed and the corresponding estimates obtained. 
It is important to realize that once initial data is given the solution of the reduced system is unique, so if the complete system has a solution for that data, it must be that one. This in general does not need to be the case, even in the event the data chosen satisfies initially the whole system. There remains, in this more general setting, to show constraint propagation is consistent. But this problem involves looking at lower order terms of the system, the integrability conditions,  which we are avoiding  here and which in general are difficult to deal with in general. At the principal symbol level there are two approaches that can be taken regarding this problem, one is to define constraint quantities, that is, linear combinations of the fields which vanish when the constraints are satisfied and check that they also satisfy hyperbolic equations provided some consistency conditions are satisfied. This has been done partially in  \cite{reula2004strongly} for the case of algebraic reductions. The present case can be dealt with the same machinery we have used here and will be developed in a future paper.
Another approach is to extend the system, adding new variables, so that it no longer has constraints. This scheme has many advantages, in particular for obtaining systems with better numerical behavior \cite{dedner2002hyperbolic, munz2000three, brodbeck1999einstein, alic2012conformal, gundlach2005constraint}. The present machinery allows to tell when a given system would admit a hyperbolic extension, in general pseudo-differential. Again, these results will be present elsewhere.

As mentioned the resulting set of evolution equations, that a hyperbolization selects, is in general a pseudo-differential system. Thus, it is not clear whether this system has a causal propagation, that is whether there exists a maximal propagation speed. Clearly the eigenvalues are all finite, but that does not necessarily means that a solution for a compactly supported initial data would remain so. In fact if the reduced system is not analytic in $k_a$, then the solution can not have compact support. 
Indeed assume a data of compact support, $\phi^{\alpha}_0$, then its Fourier transform, $\hat{\phi}^{\alpha}_0$ is analytic. Writing the system as,
\[
\frac{d\hat{\phi}^{\alpha}}{dt} = iA^{\alpha}{}_{\beta}(k) \hat{\phi}^{\beta}.
\]
The solution would be, 
\[
\hat{\phi}^{\alpha}(k,t) = e^{iA(k)t} \hat{\phi}^{\alpha}_0(k),
\]
but if $A^{\alpha}{}_{\beta}(k)$ is not analytic, neither would be the solution for any finite $t$. Thus, for non-analytic reductions, the solution can not have compact support at any time before or after the initial slice. We believe that causality would follow for analytic reductions, a possible way to see this is using the ideas in \cite{rauch2005precise}. In any case it is then important to develop a theory spelling out necessary and sufficient conditions equivalent to the existence of analytic reductions. This would not only be important for causality, but also for extending the results of this work to variable coefficients or quasi-linear system. In that case, with the present technology, smoothness of the principal symbol is needed to obtain the energy estimates used for showing existence for such systems. 

We now turn to variable coefficient systems, a intermediate step to treat general 
quasi-linear systems. Given any point in the cotangent-bundle $(p,k)$ we can perform the Kronecker decomposition and find suitable hyperbolic reductions. But in general the reduction would also depends on the point and the bundle. Thus, when going back to space-time we would end up with an operator equation. It is not clear in what sense that equation is related to the original partial differential equation. Further ideas are needed to establish some equivalence of operators at the pseudo-differential level, and perhaps an intermediate operator which can relate our reduction to the original system. 

In the case of variable coefficient or quasi-linear systems we can still consider reductions which do not depend on $k$, $\tilde{h}_{~B}^{\gamma }$, namely a differential reduction. These  $\tilde{h}_{~B}^{\gamma}$ selects a set of evolution equations as:
\begin{equation}
\tilde{h}_{~B}^{\gamma}\mathfrak{N}_{~\alpha}^{Aa}\left( x,\phi\right)
\nabla_{a}\phi^{\alpha}=\tilde{h}_{~B}^{\gamma}J^{A}\left( x,\phi\right) . 
\label{ql_2}
\end{equation}
Assume our system satisfies all the uniformity conditions for all points and furthermore that we can find a differential reduction $\tilde{h}_{~B}^{\gamma }\left( x,\phi_{0}\right) $ among all possible, with $\phi _{0}^{\alpha }$ a background solution, then the system (\ref{ql_2}) is uniformly diagonalizable for all points $\left(p,k_{a}\right) $. Furthermore, if the symmetrizer, $\overline{W_{~\alpha }^{\delta }}H_{\delta C}I_{~\gamma }^{C}W_{~\eta }^{\gamma }$  is smooth in their variables $\left(p,\phi ^{\alpha },k_{a}\right) $, then system (\ref{ql_2}) is strongly hyperbolic, see \cite{sarbach2012continuum}.
In addition if $\tilde{h}_{~B}^{\gamma }\left(p,\phi _{0}\right) $ is a differential reduction and $\overline{W_{~\alpha }^{\delta }}H_{\delta
C}I_{~\gamma }^{C}W_{~\eta }^{\gamma }$ is independent of $k_{a}$ and smooth in their variables, the systems is symmetric hyperbolic. 

%\%\%\%\%\%\%\%\%\%\%\%\%\%\%\%\%\%\%\%\%\%\%\%\%\%\%
\section*{Acknowledgements}
We would like to thank Robert Geroch, Marcelo Rubio, Federico Carrasco and Miguel Megevand for the discussions and ideas exchanged throughout this work.
We acknowledge financial support from CONICET, SECYT-UNC and FONCYT Argentina.
%\%\%\%\%\%\%\%\%\%\%\%\%\%\%\%\%\%\%\%\%\%\%\%\%\%\%

\appendix

%\%\%\%\%\%\%\%\%\%\%\%\%\%\%\%\%\%\%\%\%\%\%\%\%\%\%

\section{Proof of theorem \ref{Theorem_1} \label{appendix_a}}

Consider the resolvent Kreiss condition of the matrix theorem eq. (\ref%
{kreiss_1}). We shall use it in the following equivalent form,
\begin{equation}
\frac{1}{C}\varepsilon \leq \min_{j\in \left\{ 1,..,u\right\} }\sigma _{j}%
\left[ A_{~\gamma }^{\alpha a}k_{a}-\lambda _{R}\delta _{~\gamma }^{\alpha
}-i\varepsilon \delta _{~\gamma }^{\alpha }\right]   \label{kreiss_sing_1}
\end{equation}%
where $s=\lambda _{R}+i\varepsilon $ and we have used that for any invertible 
$B\in \mathbb{C}^{u\times u}$ matrix, $\left\vert B^{-1}\right\vert =%
\frac{1}{\underset{j\in \left\{ 1,..,u\right\} }{\min }\sigma _{j}\left[ B%
\right] }$, with $\sigma _{j}\left[ B\right] $ the singular values of $B$.

We are going to prove that equation (\ref{kreiss_sing_1}) holds for all $%
\lambda _{R}\in \mathbb{R}$ and all $\varepsilon >0$ if and only if all the
eigenvalues $\tau _{i}\left( k\right) $ of $\ A_{~\gamma }^{\alpha a}k_{a}$
are real, and for all $i\in F_{\left( k\right) }=\left\{
1,...,w_{\left( k\right) }\right\} $  and all $k_{a}$ non
proportional to $n_{a}$, with $\left\vert k\right\vert =1$, the angles $\theta
_{l}^{\tau _{i}\left( k\right) }$ between $\Upsilon _{L}^{\tau _{i}\left(
k\right) }$ and $\Phi _{R}^{\tau _{i}\left( k\right) }$ hold the lower
bound condition 
\begin{equation}
 \cos \theta _{l}^{\tau _{i}\left( k\right)}	\geq \cos \vartheta  > 0
\label{bb_1}
\end{equation}

$\Longleftarrow )$ \ Consider the right hand size of equation (\ref%
{kreiss_sing_1}), with $\lambda _{R}=\tau _{i}\left( k\right) $ and $0\leq
\varepsilon <<1$. In that case, as it was explained in \cite%
{abalos2017necessary}, the vanishing (at $\varepsilon =0$) singular values
have the following $\varepsilon $ dependence: 
\begin{align}
\sigma _{u-r_{\tau _{i}\left( k\right) }+l}\left[ A_{~\gamma }^{\alpha
a}k_{a}-\tau _{i}\left( k\right) \delta _{~\gamma }^{\alpha }-i\varepsilon
\delta _{~\gamma }^{\alpha }\right] & =\varepsilon \text{ }\sigma _{l}\left[
T^{\tau _{i}\left( k\right) }\right] +O\left( \varepsilon ^{2}\right)  
\notag \\
& =\varepsilon \cos \theta _{l}^{\tau _{i}\left( k\right) }+O\left(
\varepsilon ^{2}\right)   \label{cos_1}
\end{align}%
with $l \in I_{^{\tau _{i}\left( k\right) }}:=\left\{ 1,..,r_{\tau _{i}\left(
k\right) }\right\} $ and $\left( T^{\tau _{i}\left( k\right) }\right)
_{~j}^{i}$ is given in equation (\ref{eq_Tij_1}). That is, $\sigma _{l}\left[
T^{\tau _{i}\left( k\right) }\right] =\cos \theta _{l}^{\tau _{i}\left(
k\right) }$, where $\theta _{l}^{\tau _{i}\left( k\right) }$ are the angles
between the subspaces $\Upsilon _{L}^{\tau _{i}\left( k\right) }$ and $\Phi
_{R}^{\tau _{i}\left( k\right) }$. Since by hypothesis above these cosines
are bounded by below (eq. (\ref{bb_1})), (for all $\tau _{i}\left( k\right) $%
, $l$ and $k_{a}$) , this means that the singular values of $A_{~\gamma
}^{\alpha a}k_{a}-\tau _{i}\delta _{~\gamma }^{\alpha }-i\varepsilon \delta
_{~\gamma }^{\alpha }$ are all order $O\left( \varepsilon ^{0}\right) $ or $%
O\left( \varepsilon ^{1}\right) $. So by theorem 3.3. in \cite%
{abalos2017necessary} the matrix $A_{~\gamma }^{\alpha a}k_{a}$ is
diagonalizable for all $k_{a}$ not proportional to $n_{a}$ and $\left\vert
k\right\vert =1$. Thus eq. (\ref{kreiss_sing_1}) holds with $\frac{1}{C}%
=\cos \vartheta $ and $0\leq \varepsilon <<1$. We now extend the proof for
all $\varepsilon $. To do that, we shall take the limit when $\varepsilon $
goes to zero in eq. (\ref{kreiss_1}), in two different form, resulting in an
upper bound of the eigen-projectors of $A_{~\gamma }^{\alpha a}k_{a}$ (as Strang showed \cite{strang1967strong}) that
then we shall use to conclude the first part of the theorem. We first take, 
\begin{align}
& \underset{\varepsilon \rightarrow 0}{\lim }\left\vert \left( A_{~\gamma
}^{\alpha a}k_{a}-\tau _{i}\left( k\right) \delta _{~\gamma }^{\alpha
}-i\varepsilon \delta _{~\gamma }^{\alpha }\right) ^{-1}\right\vert
\varepsilon   \label{A_a1} \\
& =\underset{\varepsilon \rightarrow 0}{\lim }\frac{\varepsilon }{\underset{%
j\in \left\{ 1,..,u\right\} }{\min }\sigma _{j}\left[ A_{~\gamma }^{\alpha
a}k_{a}-\tau _{i}\left( k\right) \delta _{~\gamma }^{\alpha }-i\varepsilon
\delta _{~\gamma }^{\alpha }\right] }  \notag \\
& =\frac{1}{\underset{l\in I_{^{\tau _{i}\left( k\right) }}}{\min }\cos
\theta _{l}^{\tau _{i}\left( k\right) }}  \label{A_a3}.
\end{align}

On the other hand, since we know that $A_{~\gamma }^{\alpha a}k_{a}$ is
diagonalizable, it can be written in term of their eigen-projectors $\left(
P^{j}\left( k\right) \right) _{~\gamma }^{\alpha }$ 
\begin{equation*}
A_{~\gamma }^{\alpha a}k_{a}=\underset{i\in F_{\left( k\right) }}{\sum }\tau
_{j}\left( k\right) \left( P^{j}\left( k\right) \right) _{~\gamma }^{\alpha }
\end{equation*}%
where $\left( P^{i}\left( k\right) \right) _{~\gamma }^{\alpha }\left(
P^{j}\left( k\right) \right) _{~\eta }^{\gamma }=\underset{j\in F_{\left(
k\right) }}{\sum }\delta _{~j}^{i}\left( P^{j}\left( k\right) \right)
_{~\eta }^{\alpha }$ and $\underset{j\in F_{\left( k\right) }}{\sum }\left(
P^{j}\left( k\right) \right) _{~\gamma }^{\alpha }=\delta _{~\gamma
}^{\alpha }$. Since 
\begin{align*}
& \underset{\varepsilon \rightarrow 0}{\lim }\left\vert \left( A_{~\gamma
}^{\alpha a}k_{a}-\tau _{i}\left( k\right) \delta _{~\gamma }^{\alpha
}-i\varepsilon \delta _{~\gamma }^{\alpha }\right) ^{-1}\right\vert
\varepsilon \\ & =\underset{\varepsilon \rightarrow 0}{\lim }\left\vert 
\underset{j\in F_{\left( k\right) }}{\sum }\frac{\varepsilon }{\left( \tau
_{j}\left( k\right) -\tau _{i}\left( k\right) -i\varepsilon \right) }\left(
P^{j}\left( k\right) \right) _{~\gamma }^{\alpha }\right\vert  \\
& =\left\vert \left( P^{j}\left( k\right) \right) _{~\gamma }^{\alpha
}\right\vert 
\end{align*}%
We conclude, 
\begin{equation} 
\left\vert \left( P^{j}\left( k\right) \right) _{~\gamma }^{\alpha
}\right\vert =\frac{1}{\underset{l\in I_{^{\tau _{i}\left( k\right) }}}{\min 
}\cos \theta _{l}^{\tau _{i}\left( k\right) }} \label{projector_1}
\end{equation}%
and so, by eq. (\ref{bb_1}), 
\begin{equation*}
\left\vert \left( P^{j}\left( k\right) \right) _{~\gamma }^{\alpha
}\right\vert \leq \frac{1}{\cos \vartheta }=C
\end{equation*}

Thus, for any $\varepsilon >0$ 
\begin{align*}
\left\vert \left( A_{~\gamma }^{\alpha a}k_{a}-\lambda _{R}\delta _{~\gamma
}^{\alpha }-i\varepsilon \delta _{~\gamma }^{\alpha }\right)
^{-1}\right\vert \varepsilon & =\left\vert \underset{j\in F_{\left( k\right)
}}{\sum }\frac{\varepsilon }{\left( \tau _{j}\left( k\right) -\lambda
_{R}-i\varepsilon \right) }\left( P^{j}\left( k\right) \right) _{~\gamma
}^{\alpha }\right\vert  \\
& \leq \underset{j\in F_{\left( k\right) }}{\sum }\left\vert \frac{%
\varepsilon }{\left( \tau _{j}\left( k\right) -\lambda _{R}-i\varepsilon
\right) }\right\vert \left\vert \left( P^{j}\left( k\right) \right)
_{~\gamma }^{\alpha }\right\vert  \\
& \leq \underset{j\in F_{\left( k\right) }}{\sum }\left\vert \left(
P^{j}\left( k\right) \right) _{~\gamma }^{\alpha }\right\vert  \\
& \leq uC
\end{align*}%
where in third line we have used that $\frac{\left\vert \varepsilon
\right\vert }{\left\vert \left( \tau _{j}\left( k\right) -\lambda
_{R}-i\varepsilon \right) \right\vert }\leq 1$.

$\Longrightarrow )$ If Kreiss resolvent matrix condition holds we know by
definition \ref{Definition:2} that $A_{~\gamma }^{\alpha a}k_{a}$ is
diagonalizable with real eigenvalues for all $k_{a}$. Since now eq. (\ref%
{A_a1}) is upper bounded by $C$, we conclude, using eq. (\ref{A_a3}), that 
\begin{equation*}
\frac{1}{\underset{l\in I_{^{\tau _{i}\left( k\right) }}}{\min }\cos \theta
_{l}^{\tau _{i}\left( k\right) }}\leq C
\end{equation*}%
for all $i\in F_{\left( k\right) }$ and all $k_{a}$. This concludes the
proof of theorem \ref{Theorem_1}.

%\%\%\%\%\%\%\%\%\%\%\%\%\%\%\%\%\%\%\%\%\%\%\%\%\%\%

\section{General Kronecker decomposition.\label{appendix_II}}

%\%\%\%\%\%\%\%\%\%\%\%\%\%\%\%\%\%\%\%\%\%\%\%\%\%\%

For presentation in what follows we shall call these matrices, 
\begin{equation*}
\begin{array}{ccc}
E_{~\eta }^{A}:=\left( -\mathfrak{N}_{~\eta }^{Ab}n_{b}\right)  &  & 
B_{~\alpha }^{A}:=\left( \mathfrak{N}_{~\eta }^{Ab}k_{b}\right) 
\end{array}%
\end{equation*}

The full Kronecker decomposition of any pair of matrices $\left( E_{~\eta
}^{A},B_{~\alpha}^{A}\right) $ in the pencil form $\lambda
E_{~\eta}^{A}+B_{~\alpha}^{A}$ is given by, 
\begin{equation*}
\lambda E_{~\eta}^{A}+B_{~\alpha}^{A}=Y_{~B}^{A}K_{~\alpha}^{B}\left(
\lambda\right) W_{~\eta}^{\alpha}
\end{equation*}
where $Y_{~B}^{A}$ and $W_{~\eta}^{\alpha}$ are invertible independent of $%
\lambda$, and $K_{~\alpha}^{B}\left( \lambda\right) $ is a block matrix.

As we mention in section \ref{subsec_Kronecker_decomp}, when $E_{~\eta}^{A}$
have no right kernel, the blocks of $K_{~\alpha}^{B}\left( \lambda\right) $
are $J_{m}\left( \lambda _{i}\right) $-Jordan Blocks, $L_{m}^{T}$-blocks and
vanished row. In the general case, when $E_{~\eta}^{A}$ have right kernel,
new blocks appear:

The zero blocks%
\begin{equation*}
O_{m\times l}=\left( 
\begin{array}{ccc}
0 & 0 & 0 \\ 
0 & ... & 0 \\ 
0 & 0 & 0%
\end{array}
\right) \in \mathbb{C} ^{m\times l}, 
\end{equation*}
the $L_{m}$ blocks 
\begin{equation*}
L_{m}=\left( \overset{m+1}{\overbrace{%
\begin{array}{ccccc}
\lambda & 1 & 0 & 0 & 0 \\ 
0 & \lambda & 1 & 0 & 0 \\ 
0 & 0 & ... & ... & 0 \\ 
0 & 0 & 0 & \lambda & 1%
\end{array}
}}\right) \in \mathbb{C}^{m\times m+1} 
\end{equation*}

and the $N_{m}$ blocks 
\begin{equation*}
N_{m}=\left( 
\begin{array}{ccccc}
1 & \lambda & 0 & 0 & 0 \\ 
0 & 1 & \lambda & 0 & 0 \\ 
0 & 0 & ... & ... & 0 \\ 
0 & 0 & 0 & 1 & \lambda \\ 
0 & 0 & 0 & 0 & 1%
\end{array}
\right) \in \mathbb{C} ^{m\times m} 
\end{equation*}

Notice that the right kernel of $E_{~\eta}^{A}$ is given by: trivial kernel
for the zero blocks,

$\left( 
\begin{array}{c}
0 \\ 
0 \\ 
... \\ 
1%
\end{array}
\right) \in \mathbb{C} ^{m+1\times1}$ for $L_{m}$ and $\left( 
\begin{array}{c}
1 \\ 
0 \\ 
... \\ 
0%
\end{array}
\right) \in \mathbb{C}^{m\times1}$ for $N_{m}.$

%\%\%\%\%\%\%\%\%\%\%\%\%\%\%\%\%\%\%\%\%\%\%\%\%\%\%

\section{Kronecker decomposition of hyperbolic systems. \label{appendix}}
%\%\%\%\%\%\%\%\%\%\%\%\%\%\%\%\%\%\%\%\%\%\%\%\%\%\%

In this appendix, we shall show how to construct the bases in which the
system reduce to Kronecker blocks.

Consider $l\left( \lambda \right) _{a}\in S_{n_{a}}^{\mathbb{C}}$ for some $%
n_{a}.$ We shall use the left kernel as the principal ingredient for
building the bases. It is given by equation, 
\begin{equation}
X_{A}\mathfrak{N}_{~\eta }^{Ab}l\left( \lambda \right) _{b}=X_{A}\left[
\lambda \left( -\mathfrak{N}_{~\eta }^{Ab}n_{b}\right) +\left( \mathfrak{N}%
_{~\eta }^{Ab}k_{b}\right) \right] =0  \label{left_ker}
\end{equation}%
Since we are dealing with hyperbolic systems for which we already have shown
that the Jordan blocks are one dimensional we shall restrict consideration
to $-\mathfrak{N}_{~\eta }^{Ab}n_{b}$ of maximal range and $\mathfrak{N}%
_{~\eta }^{Ab}l\left( \lambda \right) _{b}$ with only 1-dimensional Jordan
Blocks. 

As appendix \ref{appendix_II}, in what follows we shall call these matrices, 
\begin{equation*}
\begin{array}{ccc}
E_{~\eta }^{A}:=\left( -\mathfrak{N}_{~\eta }^{Ab}n_{b}\right)  &  & 
B_{~\alpha }^{A}:=\left( \mathfrak{N}_{~\eta }^{Ab}k_{b}\right) 
\end{array}%
\end{equation*}

First consider the case $e=u$ where $e$ $=\dim "A"$ and $\ u=\dim "\alpha "$
for some $\left( x,\phi ,n,k\right) $ then $\mathfrak{N}_{~\eta
}^{Ab}l\left( \lambda \right) _{b}$ has a trivial Jordan decomposition,
therefore 
\begin{align}
\mathfrak{N}_{~\eta }^{Ab}l\left( \lambda \right) _{b}&
=\sum_{i=1}^{u}\upsilon ^{iA}\left( \upsilon _{iA}E_{~\alpha }^{A}\right)
\left( \lambda -\lambda _{i}\right)   \notag \\
& =\upsilon ^{1A}\left( \upsilon _{1A}E_{~\alpha }^{A}\right) \left( \lambda
-\lambda _{1}\right) +\upsilon _{2}^{A}\left( \upsilon _{2A}Y_{~\alpha
}^{A}\right) \left( \lambda -\lambda _{2}\right) +...\\  &+\upsilon ^{uA}\left(
\upsilon _{uA}E_{~\alpha }^{A}\right) \left( \lambda -\lambda _{u}\right)  
\notag \\
& =\left( 
\begin{array}{cccc}
\upsilon ^{1A} & \upsilon ^{2A} & ... & \upsilon ^{uA}%
\end{array}%
\right) \left( 
\begin{array}{cccc}
\lambda -\lambda _{1} & 0 & 0 & 0 \\ 
0 & \lambda -\lambda _{2} & 0 & 0 \\ 
0 & 0 & ... & 0 \\ 
0 & 0 & 0 & \lambda -\lambda _{u}%
\end{array}%
\right) \left( 
\begin{array}{c}
\upsilon _{1A}E_{~\alpha }^{A} \\ 
\upsilon _{2A}E_{~\alpha }^{A} \\ 
... \\ 
\upsilon _{uA}E_{~\alpha }^{A}%
\end{array}%
\right)   \label{Bases_matrix}
\end{align}

Here $\{\upsilon _{iA}$ with $i=1,...,u\}$ are the left eigen-covectors
associated to the eigenvalues $\lambda _{i}$ or the left kernel of $%
\mathfrak{N}_{~\eta }^{Ab}l\left( \lambda _{i}\right) _{b}$ i.e. 
\begin{equation}
\upsilon _{iA}\mathfrak{N}_{~\eta }^{Ab}l\left( \lambda _{i}\right) _{b}=0,
\label{eigenvectors_1}
\end{equation}%
and $\{\upsilon ^{iA}\}$ are the co-bases, 
\begin{equation*}
\upsilon ^{jA}\upsilon _{iA}=\delta _{~i}^{j}
\end{equation*}%
In eq. (\ref{Bases_matrix}) the vectors $\{\upsilon ^{jA}\}$ are in column
form and the co-vectors $\upsilon _{1A}E_{~\alpha }^{A}$ in row form. Notice
that 
\begin{align*}
\sum_{i=1}^{u}\upsilon ^{iA}\left( \upsilon _{iA}E_{~\alpha }^{A}\right) &
=\left( -\mathfrak{N}_{~\eta }^{Ab}n_{b}\right) , \\
\sum_{i=1}^{u}\upsilon ^{iA}\left( \upsilon _{iA}E_{~\alpha }^{A}\right)
\lambda _{i}& =\left( \mathfrak{N}_{~\eta }^{Ab}k_{b}\right) ,
\end{align*}%
and it is possible to find the eigenvalues $\lambda _{i}$, solving the
polynomial equation $\det \left( \mathfrak{N}_{~\eta }^{Ab}l\left( \lambda
\right) _{b}\right) =0$ for $\lambda .$

In the case in which $e>u$ for some $\left( x,\phi ,n,k\right) $, the
decomposition has, besides the Jordan blocks, additional blocks usually
denoted as $L_{m}^{T}$-blocks.
The maximal range condition on $\mathfrak{N}%
_{~\eta }^{Ab}n_{b}$ prevents other Kronecker blocks from appearing. 

As before we use the left kernel to compute the decomposition. 
For arbitrary $\lambda $, and fixed $k_{a}$, there is a left kernel subspace
which is of fixed dimension $(e-u)$. This subspace, which we call $\Delta
(\lambda )$, depends in a polynomial way with respect to $\lambda $ in fact
it is generated by a set of linearly independent vectors $\{\chi
_{A}^{i}\left( \lambda \right) \}$ with $i=1,...,e-u$ such that $\chi
_{A}^{i}\left( \lambda \right) \mathfrak{N}_{~\eta }^{Ab}l\left( \lambda
\right) _{b}=0$ $\ \forall \lambda \in \mathbb{C}$. The coefficient of the $%
\chi _{A}^{i}\left( \lambda \right) $ are polynomial in $\lambda $. Among
all the possible bases of $\Delta (\lambda )$ we take the non-zero $\chi
_{A}^{1}\left( \lambda \right) $ such that it has the least degree, that we
call $m_{1},$ in $\lambda $. It is 
\begin{equation}
\chi _{A}^{1}\left( \lambda \right) =\left( \left( \theta _{m_{1}}\right)
_{A}-\lambda \left( \theta _{m_{1}-1}\right) _{A}+\lambda ^{2}\left( \theta
_{m_{1}-2}\right) _{A}-...-\lambda ^{m_{1}}\theta _{0A}\right)   \label{XA}
\end{equation}

With $(\theta_{s})_{A}$ independent of $\lambda$. We continue chosen $%
\chi_{A}^{2}\left( \lambda\right) $ independent to $\chi_{A}^{1}\left(
\lambda\right) $ and of lest degree too, such that 
\begin{equation*}
m_{1}\leq m_{2}
\end{equation*}

We perform this process until there are no linear-independent vectors in $%
\Delta (\lambda )$. Thus we finish with a set of scalars that are called
minimal indices for the rows of $\mathfrak{N}_{~\eta }^{Ab}l\left( \lambda
\right) _{b}$ which are 
\begin{equation*}
m_{1}\leq m_{2}\leq ...\leq m_{e-u}.
\end{equation*}%
They define the $L_{m_{i}}^{T}$ blocks of $\mathfrak{N}_{~\eta }^{Ab}l\left(
\lambda \right) _{b}.$

In particular for each $\chi _{A}^{i}$  as in eq. (\ref{XA}) holding
eq. (\ref{left_ker}) follows 

\begin{eqnarray}
\left( \theta _{m_{1}}\right) _{A}B_{~\alpha }^{A} &=&0\text{ \ \ \ \ \ \ \
\ \ \ \ \ \ \ \ \ }\left( \theta _{m_{1}-1}\right) _{A}B_{~\alpha
}^{A}=\left( \theta _{m_{1}}\right) _{A}E_{~\eta }^{A}\text{ \ \ \ }... 
\notag \\
\left( \theta _{0}\right) _{A}B_{~\alpha }^{A} &=&\left( \theta _{1}\right)
_{A}E_{~\eta }^{A}\text{ \ \ \ \ \ \ \ \ \ \ \ }\left( \theta _{0}\right)
_{A}E_{~\eta }^{A}=0  \label{eq_tita}
\end{eqnarray}
%\begin{equation}
%\begin{array}{ccccccc}
%\left( \theta_{m_{1}}\right) _{A}B_{~\alpha}^{A}=0 &  & \left(
%\theta_{m_{1}-1}\right) _{A}B_{~\alpha}^{A}=\left( \theta_{m_{1}}\right)
%_{A}E_{~\eta}^{A} & ... & \left( \theta_{0}\right)
%_{A}B_{~\alpha}^{A}=\left( \theta_{1}\right) _{A}E_{~\eta}^{A} &  & \left(
%\theta _{0}\right) _{A}E_{~\eta}^{A}=0%
%\end{array}
%\label{eq_tita}
%\end{equation}

The example in eq. (\ref{example_1}) has $m_{1}=0,$ $m_{2}=0$ because the
two null rows and $m_{3}=1,$ $m_{4}=2\ $\ since it has an $L_{1}^{T}$ and $%
L_{2}^{T}$ blocks

In addition, when $\lambda = \lambda _{i}$ \footnote{We recall that the $\lambda _{i}$ can be calculated
solving equation (\ref{p_1}).} the generalized eigen-values, the left kernel increases with a set of new independent co-vectors called $\{v_{iA}\}$. That  elements are only defined up to
members of $\{\chi
_{A}^{i}\left( \lambda_i \right) \}$.

Collecting all the $\{v_{iA}\}$ associated to the different generalized eigenvalues, and the co-vectors $\{\chi
_{A}^{i}\left( \lambda \right) \}$, it is possible to compute the Kronecker decomposition as in the following example.

\begin{align*}
& \mathfrak{N}_{~\eta }^{Ab}l\left( \lambda \right) _{b} = \\
&   {\footnotesize \left( 
\begin{array}{ccccccccc}
\upsilon ^{1A} & \upsilon ^{2A} & \upsilon ^{3A} & \theta ^{3A} & \theta
^{2A} & \theta ^{1A} & \theta ^{0A} & \tilde{\theta}^{0A} & \hat{\theta}^{0A}%
\end{array}%
\right) \left( 
\begin{array}{cccccc}
\lambda -\lambda _{1} & 0 & 0 & 0 & 0 & 0 \\ 
0 & \lambda -\lambda _{2} & 0 & 0 & 0 & 0 \\ 
0 & 0 & \lambda -\lambda _{3} & 0 & 0 & 0 \\ 
0 & 0 & 0 & \lambda  & 0 & 0 \\ 
0 & 0 & 0 & 1 & \lambda  & 0 \\ 
0 & 0 & 0 & 0 & 1 & \lambda  \\ 
0 & 0 & 0 & 0 & 0 & 1 \\ 
0 & 0 & 0 & 0 & 0 & 0 \\ 
0 & 0 & 0 & 0 & 0 & 0%
\end{array}%
\right) \left( 
\begin{array}{c}
\upsilon _{1C}E_{~\alpha }^{C} \\ 
\upsilon _{2C}E_{~\alpha }^{C} \\ 
\upsilon _{3C}E_{~\alpha }^{C} \\ 
\theta _{3C}E_{~\alpha }^{C} \\ 
\theta _{2C}E_{~\alpha }^{C} \\ 
\theta _{1C}E_{~\alpha }^{C}%
\end{array}%
\right) }
\end{align*}%

\begin{align*}
& =\sum_{i=1}^{3}\upsilon ^{iA}\left( \upsilon _{iA}E_{~\alpha }^{A}\right)
\left( \lambda -\lambda _{i}\right) +\lambda \sum_{i=1}^{m_{3}=3}\theta
^{iA}\left( \theta _{iC}E_{~\alpha }^{C}\right) +\sum_{i=0}^{m_{3}=3}\theta
^{iA}\left( \left( \theta _{i+1}\right) _{C}E_{~\alpha }^{C}\right)  \\
& =\lambda \lbrack \upsilon ^{1A}\left( \upsilon _{1C}E_{~\alpha
}^{C}\right) +\upsilon ^{2A}\left( \upsilon _{2C}E_{~\alpha }^{C}\right)
+\upsilon ^{3A}\left( \upsilon _{3C}E_{~\alpha }^{C}\right)  \\
& +\theta ^{3A}\left( \theta _{3C}E_{~\alpha }^{C}\right) +\theta
^{2A}\left( \theta _{2C}E_{~\alpha }^{C}\right) +\theta ^{1A}\left( \theta
_{1C}E_{~\alpha }^{C}\right) ] \\
& -\lambda _{1}\upsilon ^{1A}\left( \upsilon _{1C}E_{~\alpha }^{C}\right)
-\lambda _{2}\upsilon ^{2A}\left( \upsilon _{2C}E_{~\alpha }^{C}\right)
-\lambda _{3}\upsilon ^{3A}\left( \upsilon _{3C}E_{~\alpha }^{C}\right)  \\
& +\theta ^{2A}\left( \theta _{3C}E_{~\alpha }^{C}\right) +\theta
^{1A}\left( \theta _{2C}E_{~\alpha }^{C}\right) +\theta ^{0A}\left( \theta
_{1C}E_{~\alpha }^{C}\right) 
\end{align*}

Here $m_{1}=0$, $m_{2}=0$, $m_{3}=3$ and $v_{iA}$ are the left
eigen-covectors associated to the generalized eigenvalues $\lambda_{i}$, $%
\hat{\theta}_{0A}$ define the $L_{m_{1}}^{T}$, $\tilde{\theta}_{0A}$\ the $%
L_{m_{2}}^{T}$ and $\theta_{iA}$ with $i=0,...,3$ define the $L_{m_{3}}^{T}$
block. Here the $\lambda_{i}$ can be degenerated, i.e. they can  take the same values. 

The vector with the raised indices are the co-bases,%
\begin{align*}
v^{iA}v_{jA} & =\delta_{~j}^{i}\text{ with }i,j=1,2,3 \\
\theta^{iA}\theta_{jA} & =\delta_{~j}^{i}\text{ with }i,j=1,2,3 \\
\hat{\theta}^{0A}\hat{\theta}_{0A} & =1 \\
\tilde{\theta}^{0A}\tilde{\theta}_{0A} & =1
\end{align*}
and any other contraction vanish.
%\%\%\%\%\%\%\%\%\%\%\%\%\%\%\%\%\%\%\%\%\%\%\%\%\%\%\%\%\%\%\%\%\%\%\%\%\%\%\%\%\%\%\%\%\%\%\%\%\%\%\%\%\%\%\%\%\%\%\%\%\%\%\%\%\%\%\%\%\%\%\%\%\%\%\%\%\%\%\

%\%\%\%\%\%\%\%\%\%\%\%\%\%\%\%\%\%\%\%\%\%\%\%\%\%\%\%\%\%\%\%\%\%\%\%\%\%\%\%\%\%\%\%\%\%\%\%\%\%\%\%\%\%\%\%\%\%\%\%\%\%\%\%\%\%\%\%\%\%\%\%\%\%\%\%\%\%\%\

\section{Reductions in $L^{T}_{1}$ case.\label{appendix_V}}

%\%\%\%\%\%\%\%\%\%\%\%\%\%\%\%\%\%\%\%\%\%\%\%\%\%\%

Consider a set of $L_{1}^{T}$-blocks in the Kronecker decomposition of a
principal symbol, we can reduce all of them together and find more general 
reductions with diagonalizable reduced principal symbols, for instance given the following structure,

\begin{equation*}
\left( 
\begin{array}{cccc}
L_{1}^{T} & 0 & 0 & 0 \\ 
0 & L_{1}^{T} & 0 & 0 \\ 
0 & 0 & ... & 0 \\ 
0 & 0 & 0 & L_{1}^{T}%
\end{array}
\right) =\lambda\left( 
\begin{array}{cccc}
1 & 0 & 0 & 0 \\ 
0 & 0 & 0 & 0 \\ 
0 & 1 & 0 & 0 \\ 
0 & 0 & 0 & 0 \\ 
... & ... & ... & ... \\ 
0 & 0 & 0 & 0 \\ 
0 & 0 & 0 & 1 \\ 
0 & 0 & 0 & 0%
\end{array}
\right) +\left( 
\begin{array}{cccc}
0 & 0 & 0 & 0 \\ 
1 & 0 & 0 & 0 \\ 
0 & 0 & 0 & 0 \\ 
0 & 1 & 0 & 0 \\ 
0 & 0 & 0 & 0 \\ 
... & ... & ... & ... \\ 
0 & 0 & 0 & 0 \\ 
0 & 0 & 0 & 1%
\end{array}
\right) 
\end{equation*}

For these cases, more general $H$ are possible, they are given by, 
\begin{equation*}
\left( H_{L_{1}^{T}}\right) _{\delta A}=\left( 
\begin{array}{cccccccc}
b_{1} & 0 & \bar{b}_{2} & 0 & ... & 0 & \bar{b}_{s} & 0 \\ 
b_{2} & 0 & b_{s+1} & 0 & ... & 0 & \bar{b}_{2s-1} & 0 \\ 
... & 0 & ... & 0 & ... & 0 & ... & 0 \\ 
b_{s} & 0 & b_{2s-1} & 0 & ... & 0 & b_{\frac{s\left( s+1\right) }{2}} & 0%
\end{array}
\right) +\left( 
\begin{array}{cccccccc}
0 & c_{1} & 0 & \bar{c}_{2} & 0 & ... & 0 & \bar{c}_{s} \\ 
0 & c_{2} & 0 & c_{s+1} & 0 & ... & 0 & \bar{c}_{2s-1} \\ 
0 & ... & 0 & ... & 0 & ... & 0 & ... \\ 
0 & c_{s} & 0 & c_{2s-1} & 0 & ... & 0 & c_{\frac{s\left( s+1\right) }{2}}%
\end{array}
\right) 
\end{equation*}
Indeed, 
\begin{equation*}
\left( H_{L_{1}^{T}}\right) _{\delta A}\left( 
\begin{array}{cccc}
L_{1}^{T} & 0 & 0 & 0 \\ 
0 & L_{1}^{T} & 0 & 0 \\ 
0 & 0 & ... & 0 \\ 
0 & 0 & 0 & L_{1}^{T}%
\end{array}
\right) =\lambda\left( g_{1}\right) _{\delta\alpha}+\left( g_{2}\right)
_{\delta\alpha}, 
\end{equation*}
with, 
\begin{equation*}
\left( g_{1}\right) _{\delta\alpha}=\left( 
\begin{array}{cccc}
b_{1} & \bar{b}_{2} & ... & \bar{b}_{s} \\ 
b_{2} & b_{s+1} & ... & \bar{b}_{2s-1} \\ 
... & ... & ... & ... \\ 
b_{s} & b_{2s-1} & ... & b_{\frac{s\left( s+1\right) }{2}}%
\end{array}
\right) \text{ \ \ \ \ \ }\left( g_{2}\right) _{\delta\alpha}=\left( 
\begin{array}{cccc}
c_{1} & \bar{c}_{2} & ... & \bar{c}_{s} \\ 
c_{2} & c_{s+1} & ... & \bar{c}_{2s-1} \\ 
... & ... & ... & ... \\ 
c_{s} & c_{2s-1} & ... & c_{\frac{s\left( s+1\right) }{2}}%
\end{array}
\right) 
\end{equation*}
since both are Hermitian we only need to assert positivity of 
$(g_{1})_{\delta\alpha}$ to conclude the
reduction gives rise to a diagonalizable reduced system.
This can be done choosing appropriately the $b$ coefficients.

%%%%%%%%%%%%%%%%%%%%%%%%%%%%%%%%%%%%%%%%%%%%%%%%%%%%%%%%%%%%%%%%%%%%%%%%%%%%%%%%%%%%%%%%%%%%%%%%%
\section{Orthogonality of kernel subspaces.
\label{app_IIIb}}

%\%\%\%\%\%\%\%\%\%\%\%\%\%\%\%\%\%\%\%\%\%\%\%\%\%\%
Here we show that the subspaces $\Phi_{R}^{\lambda_{i}\left(  k\right)}$ and $\Delta_{\mathfrak{N}}(\lambda)$ are orthogonal among each other.

As before we use the notation, $\mathfrak{N}_{~\eta}^{Ab}n\left(
\lambda\right)_{b}=\left(  \lambda E_{~\eta}^{A}+B_{~\eta}^{A}\right)  $
with $l\left(  \lambda\right)_{b}\in S_{n_{b}}$. 
We are going to
show that
\begin{equation}
\chi_{A}^{s}\left(  \lambda\right)  E_{~\eta}^{A}\left(  \delta\phi_{\lambda_{i}}\right)_{s}^{\alpha}=0 \label{eq_chi_1}
\end{equation}

As it was explained in  appendix \ref{appendix}, for each $L_{m}^{T}$ block of the
principal symbol $\mathfrak{N}_{~\eta}^{Cb} l\left(  \lambda\right)_{b}$,
there are $m+1$ co-vectors $\{\theta_{s}\}$ such that
\begin{equation}
\chi_{A}^{j}\left(  \lambda\right)  =\left(  \left(  \theta_{m}\right)
_{A}-\lambda\left(  \theta_{m-1}\right)  _{A}+\lambda^{2}\left(  \theta
_{m-2}\right)  _{A}-...-\lambda^{m}\theta_{0A}\right)  \label{eq_X}%
\end{equation}
is in the left kernel for all $\lambda\in\mathbb{R}$ and the co-vectors $\{\theta_{s}\}$ fulfill the equations (\ref{eq_tita}). 
This set of co-vectors expanded the subspace $\Delta(\lambda)$. On the other hand when $\lambda=\lambda
_{i}\left(  k\right)  $ the right kernel elements satisfy,
\begin{equation}
\left(  \lambda_{i}\left(  k\right)  E_{~\eta}^{A}+B_{~\eta}^{A}\right)
\left(  \delta\phi_{\lambda_{i}}\right)  _{s}^{\eta}=0 \label{eq_phi}%
\end{equation}

In order to show eq. (\ref{eq_chi_1}), it is enough to proof that $\left(
\theta_{s}\right)  _{A}E_{~\eta}^{A}\left(  \delta\phi_{\lambda_{i}}\right)
_{t}^{\eta}=0$ for any $s$ in any $L^{T}$-block. Using equation (\ref{eq_tita}%
), (\ref{eq_phi}) several times in sequence,
\begin{align*}
\left(  \theta_{s}\right)  _{A}E_{~\eta}^{A}\left(  \delta\phi_{\lambda_{i}%
}\right)  _{t}^{\eta}  &  =\left(  \theta_{s-1}\right)  _{A}B_{~\eta}%
^{A}\left(  \delta\phi_{\lambda_{i}}\right)  _{t}^{\eta}\\
&  =-\left(  \theta_{s-1}\right)  _{A}\lambda_{i}\left(  k\right)
E_{~\eta}^{A}\left(  \delta\phi_{\lambda_{i}}\right)  _{t}^{\eta}\\
&  =\left(  \theta_{s-2}\right)  _{A}\lambda_{i}^{2}\left(  k\right)
E_{~\eta}^{A}\left(  \delta\phi_{\lambda_{i}}\right)  _{t}^{\eta}\\
&  =\cdots\\
&  =\left(  -1\right)  ^{s}\lambda_{i}^{s}\left(  k\right)  \left(
\theta_{0}\right)  _{A}E_{~\eta}^{A}\left(  \delta\phi_{\lambda_{i}}\right)
_{t}^{\eta}\\
&  =0.
\end{align*}
we proof the assertion.

\bibliographystyle{unsrt}
\bibliography{Strongly_hyp}

\end{document}

% end of file template.tex